\documentclass[11pt,a4paper]{amsart}

\usepackage[T1]{fontenc}
\usepackage[english]{babel}
\usepackage{mlmodern}
\usepackage{microtype}

\usepackage{p-notation}
\usepackage{enumitem}
\usepackage{booktabs}

\usepackage{csquotes}
\usepackage[
  backend=biber,
  style=numeric-comp,
  sorting=nyt
]{biblatex}
\usepackage[dvipsnames,svgnames,x11names]{xcolor}
\usepackage[normalem]{ulem}

\newif\ifworkingdraft
\workingdraftfalse

\ifworkingdraft
  \usepackage[
    colorinlistoftodos,
    prependcaption,
    textsize=footnotesize
  ]{todonotes}
  \newcommand{\add}[2][]{%
    \todo[linecolor=BrickRed,backgroundcolor=Salmon!30,
      bordercolor=BrickRed,#1]{#2}%
  }
  \newcommand{\dopawel}[2][]{%
    \todo[linecolor=RoyalBlue,backgroundcolor=SkyBlue!30,
      bordercolor=RoyalBlue,#1]{#2}%
  }
  \newcommand{\doauthortwo}[2][]{%
    \todo[linecolor=Purple,backgroundcolor=Orchid!30,
      bordercolor=Purple,#1]{#2}%
  }
  \newcommand{\question}[2][]{%
    \todo[linecolor=Orange,backgroundcolor=Yellow!30,
      bordercolor=Orange,#1]{[Q] #2}%
  }
  \newcommand{\response}[2][]{%
    \todo[linecolor=Teal,backgroundcolor=Cyan!30,
      bordercolor=Teal,#1]{[R] #2}%
  }
  \newcommand{\important}[2][]{%
    \todo[linecolor=Red!70!black,backgroundcolor=Red!20,
      bordercolor=Red!70!black,size=\normalsize,#1]{\textbf{!} #2}%
  }

  \newcommand{\deletion}[1]{\textcolor{Red}{\sout{#1}}}
  
\else
  \usepackage[disable]{todonotes}

  \newcommand{\add}[2][]{}
  \newcommand{\dopawel}[2][]{}
  \newcommand{\doauthortwo}[2][]{}
  \newcommand{\question}[2][]{}
  \newcommand{\response}[2][]{}
  \newcommand{\important}[2][]{}

  \newcommand{\deletion}[1]{}
  
\fi

\usepackage{array,tabularx}
\usepackage{graphicx}
\usepackage{tikz}
\usetikzlibrary{arrows.meta,positioning,fit,backgrounds,calc}
\usepackage{needspace}
\usepackage{placeins}
\usepackage{etoolbox}
\usepackage[margin=2.7cm]{geometry}

\usepackage{zref-clever}
\usepackage[hidelinks,pdfusetitle]{hyperref}
\hypersetup{
  pdftitle={Medvedev Logic Is Not Decidable. It Is Pi01-Complete. Who Would Have Guessed?},
  pdfauthor={Pawel Pawlowski},
  pdfsubject={Version 8, 9 September 2026}
}

\zcsetup{
  nameinlink = true,
  abbrev = false,
  cap = true
}

\theoremstyle{plain}
\newtheorem{theorem}{Theorem}[section]
\newtheorem{lemma}[theorem]{Lemma}

\newtheorem{fact}[theorem]{Fact}
\newtheorem{corollary}[theorem]{Corollary}

\newtheorem{proposition}[theorem]{Proposition}

\theoremstyle{definition}
\newtheorem{definition}[theorem]{Definition}

\theoremstyle{remark}

\zcRefTypeSetup{axiom}{
  Name-sg = Axiom,
  name-sg = axiom,
  Name-pl = Axioms,
  name-pl = axioms
}
\zcRefTypeSetup{fact}{
  Name-sg = Fact,
  name-sg = fact,
  Name-pl = Facts,
  name-pl = facts
}
\zcRefTypeSetup{conjecture}{
  Name-sg = Conjecture,
  name-sg = conjecture,
  Name-pl = Conjectures,
  name-pl = conjectures
}
\zcRefTypeSetup{convention}{
  Name-sg = Convention,
  name-sg = convention,
  Name-pl = Conventions,
  name-pl = conventions
}

\AddToHook{env/theorem/begin}{\zcsetup{countertype={theorem=theorem}}}
\AddToHook{env/lemma/begin}{\zcsetup{countertype={theorem=lemma}}}
\AddToHook{env/axiom/begin}{\zcsetup{countertype={theorem=axiom}}}
\AddToHook{env/fact/begin}{\zcsetup{countertype={theorem=fact}}}
\AddToHook{env/corollary/begin}{\zcsetup{countertype={theorem=corollary}}}
\AddToHook{env/conjecture/begin}{\zcsetup{countertype={theorem=conjecture}}}
\AddToHook{env/proposition/begin}{\zcsetup{countertype={theorem=proposition}}}
\AddToHook{env/definition/begin}{\zcsetup{countertype={theorem=definition}}}
\AddToHook{env/example/begin}{\zcsetup{countertype={theorem=example}}}
\AddToHook{env/convention/begin}{\zcsetup{countertype={theorem=convention}}}
\AddToHook{env/remark/begin}{\zcsetup{countertype={theorem=remark}}}

\newcommand{\ML}{\mathrm{ML}}
\newcommand{\up}{\mathord\uparrow}

\renewcommand{\D}{\mathcal D}
\newcommand{\tr}{\mathrm t}
\newcommand{\WM}{\mathtt{WM}}
\newcommand{\role}{\operatorname{role}}
\newcommand{\Resp}{\operatorname{Resp}}
\newcommand{\PowPlus}[1]{\mathcal P^{+}(#1)}
\numberwithin{equation}{section}
\allowdisplaybreaks
\newcommand{\Mid}{\mathsf{Mid}}
\newcommand{\MaxPts}{\mathsf{Max}}
\newcommand{\CPos}{\mathsf{Column}}
\newcommand{\RPos}{\mathsf{Row}}
\newcommand{\CSep}{\mathsf{ColumnSep}}
\newcommand{\RSep}{\mathsf{RowSep}}

\newcommand{\axisname}[1]{
  \ifstrequal{#1}{h}{\mathrm{col}}{
  \ifstrequal{#1}{v}{\mathrm{row}}{#1}}}

\newcommand{\NFL}[1]{\mathsf{NotNext}_{\axisname{#1}}}
\newcommand{\SFL}[1]{\mathsf{NotSame}_{\axisname{#1}}}
\newcommand{\CellL}[1]{\mathsf{Cell}_{#1}}
\newcommand{\SrcL}[2]{\mathsf{Source}^{#1}_{#2}}
\newcommand{\NbrL}[2]{\mathsf{Candidate}^{#1}_{#2}}
\newcommand{\ExpL}[2]{\mathsf{Expected}^{#1}_{#2}}
\newcommand{\StepL}[1]{
  \ifstrequal{#1}{h}{\mathsf{ColStep}}{
  \ifstrequal{#1}{v}{\mathsf{RowStep}}{\mathsf{Step}_{#1}}}}
\newcommand{\KeyL}[1]{\mathsf{Key}_{#1}}

\newcommand{\sa}[1]{\mathsf{same}_{\axisname{#1}}}
\newcommand{\nx}[1]{\mathsf{next}_{\axisname{#1}}}
\newcommand{\tg}[1]{\mathsf{tag}_{\axisname{#1}}}
\newcommand{\keyc}[1]{\kappa_{#1}}

\newcommand{\MaxOf}[1]{\mathrm{Max}(#1)}

\AtBeginEnvironment{definition}{\Needspace{6\baselineskip}}
\AtBeginEnvironment{theorem}{\Needspace{5\baselineskip}}
\AtBeginEnvironment{lemma}{\Needspace{5\baselineskip}}
\AtBeginEnvironment{proposition}{\Needspace{5\baselineskip}}
\AtBeginEnvironment{corollary}{\Needspace{5\baselineskip}}
\AtBeginEnvironment{fact}{\Needspace{6\baselineskip}}
\AtBeginEnvironment{example}{\Needspace{10\baselineskip}}

\title[Medvedev Logic Is $\Pi^0_1$-Complete]{Medvedev Logic Is Not Decidable.\\ It Is $\Pi^0_1$-Complete. Who Would Have Guessed?}

\author{Pawel Pawlowski}
\address{CLPS, Ghent University, Belgium}

\keywords{Medvedev logic, intermediate logics, undecidability, periodic domino problem, Wang tiles}
\date{Version 8, 9 September 2026}

\begin{document}
\begin{abstract}
This project began as an attempt to prove that Medvedev logic is decidable with the help of generative AI systems. The author (as well as the generative AI systems, or at least they claim to be since I have asked them) was surprised by its eventual conclusion.
We prove that Medvedev logic \(\ML\), the intermediate logic of finite
problems, is \(\Pi^0_1\)-complete under computable many-one reductions.
Consequently, \(\ML\) is not recursively enumerable, \emph{a fortiori}
undecidable, and admits no recursively enumerable sound and complete
proof calculus.

The proof connects the periodic domino problem with intuitionistic
formulas through a shared intermediate structure that we call a Wang--Medvedev pair.
Such a pair consists of a finite partially ordered set of roles together
with demands. Demands define the interaction between roles. A realization labels nonempty
subsets of a finite set with these roles, respecting the order and
satisfying the demands. We associate a pair with each finite Wang system
and show that it has a realization iff the system tiles a finite
torus. We then construct an intuitionistic formula that fails on some
finite Medvedev frame iff the same pair is realizable.
Realizability thus provides the link between periodic tilings and the
countermodels. 
\end{abstract}
\maketitle

\section{Introduction}
\label{sec:introduction}

Medvedev logic \(\ML\) was introduced by Medvedev in 1962 as
the logic of finite problems \cite{Medvedev1962}. We work with
its Kripke semantics. For a finite nonempty set \(K\), the
\emph{Medvedev frame} is \(F(K)=(\PowPlus{K},\supseteq)\),
where \(\PowPlus{K}\) consists of the nonempty subsets of \(K\).
The order is reverse inclusion, so moving upwards means taking a
smaller nonempty subset. The least world is \(K\), and the
maximal worlds are the singletons. A valuation is persistent when
truth at \(X\) is preserved at its nonempty subsets. With the
usual intuitionistic forcing clauses, we put
\begin{equation}
 \ML=\{\varphi:F_n\models\varphi\text{ for every }n\geq1\},
 \qquad F_n=F(\{0,\ldots,n-1\}).
\label{eq:old-5-1}
\end{equation}
Here frame validity means truth at every world under every persistent
valuation. Thus \(\ML\) is the intermediate logic of all finite
Medvedev frames.\footnote{Every finite nonempty carrier of size
\(n\) gives an isomorphic copy of \(F_n\). We recall the
forcing clauses in Section~\ref{sec:recognition}.}

The decision problem for \(\ML\) has a long pedigree.
Levin established its disjunction property \cite{Levin1969},
and Maksimova, Skvortsov, and Shehtman proved that it is not
finitely axiomatizable \cite{MSS1979}. However, whether it is
decidable, or admits a recursively enumerable axiomatization, remained
open \cite{ChenDing2025,Chen2026}.

A distinctly Polish thread runs through this history. The notion of
structural completeness goes back to Pogorzelski
\cite{Pogorzelski1971}. Prucnal proved that \(\ML\) is
structurally complete \cite{Prucnal1976} and subsequently settled
Friedman's Problem~41 by proving the existence of a greatest
structurally complete intermediate logic with the disjunction
property \cite{Prucnal1979}. Szatkowski identified fragments on
which \(\ML\) and intuitionistic logic coincide, including the
disjunction-free fragment \cite{Szatkowski1981}.
\L{}azarz later characterized \(\ML\) using Kubi\'nski's
frames \cite{Lazarz2013}.\footnote{For a modern presentation of
the Levin and Prucnal theorems, see \cite{Prenosil2024}. Recent
work also includes the logics of individual Medvedev frames
\cite{ChenDing2025} and a finitely axiomatized logic contained
in \(\ML\) with the disjunction property \cite{Chen2026}.}

In this paper we prove that \(\ML\) is \(\Pi^0_1\)-complete
under computable many-one reductions (Theorem~\ref{thm:main}).
Consequently, it is not recursively enumerable, \emph{a fortiori}
undecidable, and admits no recursively enumerable sound and complete
proof calculus. The upper bound follows by enumerating finite
countermodels. The lower bound turns on a translation from finite
Wang systems to formulas, which we describe below. There is also a
connection with the logic of questions: \(\ML\) coincides with
the schematic fragment of inquisitive logic
\cite{CiardelliRoelofsen2011Inquisitive}, consisting of the
formulas whose substitution instances are all inquisitively valid.

We fix a finite Wang system \(D\). It specifies tile types and
the allowed horizontal and vertical neighbours. The question is
whether these tiles cover some finite rectangular torus, with the
dimensions and the assignment of tiles left to be found. From
\(D\), we define a finite poset of roles and a list of demands,
\(\WM(D)=(P_D,\D_D)\). A realization assigns roles to some
nonempty subsets of a finite set \(K\), subject to containment
restrictions. Each demand requires a labelled subset of the union
of two input sets, called a \emph{response}. Demands are unordered
pairs of roles, so their requirements are unchanged when the inputs
are exchanged.

Suppose first that a realization is given. Responses generate a
column cycle and a row cycle, whose lengths give the dimensions
of the torus. A response to each column--row pair selects the tile
at that position. At this stage we still need to prove that the
matching rules hold. We take adjacent cells and suppose their
tile types form a forbidden pair. Further responses give sets
with the corresponding Expected and Candidate roles. Because the
tile pair is forbidden, the code contains a demand between these
roles. Their required response would violate the containment
restrictions, giving a contradiction. This proves compatibility
in both directions (Section~\ref{sec:realization-to-torus}).

For the converse, we fix a torus tiling and construct the subsets
that realize the code. Each set is described by the symbols it
omits from copies of a common finite set. Shared omissions record
the column and row comparisons. The construction must supply a
response to every demanded pair, including pairs associated with
unrelated positions. Section~\ref{sec:torus-to-realization} gives
the lists of omissions and checks these requirements.

It remains to express realizability by a formula. We introduce one
variable for each role, intended to say that no subset of the
current world has that role. Using these variables, we construct
\(\alpha_D\) so that a realization yields a refutation on a
finite Medvedev frame, and any such refutation yields a realization.
Hence \(D\) tiles a finite torus iff
\(\alpha_D\notin\ML\). The effective periodic domino theorem of
Gurevich and Koryakov \cite{GurevichKoryakov1972}, in the form given
by Jeandel \cite{Jeandel2010}, supplies a Wang system that tiles a
finite torus iff a given machine halts. Applying our
translation to that system gives a formula in \(\ML\) exactly
when the machine does not halt, as required.

The paper is organized as follows. Section~\ref{sec:wmpair} fixes the
code \(\WM(D)\) and the notion of realization. Its main product is the
response-role table, which settles once and for all which roles a
response can have. Sections~\ref{sec:realization-to-torus} and~\ref{sec:torus-to-realization} prove the two directions of the
equivalence between realizability and torus tilability. The first is
a short extraction argument. The second, the construction, is where
most of the bookkeeping lives. Section~\ref{sec:recognition} homes in
on the link between realizability and formulas of Medvedev logic: it
translates realizability into a formula and proves the recognition
theorem on a fixed carrier. Section~\ref{sec:assembly} then assembles
the complexity result in a few lines. Proofs of the opening lemmas
and of the response-role table are deferred to
Appendix~\ref{app:opening-proofs}.

\section{The Wang system and its Wang--Medvedev code}
\label{sec:wmpair}

We begin with the notation used throughout the paper. We write
\(\mathbb N=\{0,1,\ldots\}\), \(\PowPlus{K}\) for the
nonempty subsets of \(K\), and \(\dot\cup\) for disjoint
union. For a map \(q:L\to M\), its direct and inverse images
are \(q[X]=\{q(x):x\in X\}\) and
\(q^{-1}[Y]=\{x\in L:q(x)\in Y\}\), for
\(X\subseteq L\) and \(Y\subseteq M\).

A \emph{poset} is a set with a reflexive, antisymmetric, transitive
relation \(\leq\). The \emph{principal upset} of \(a\) is
\(\up a=\{b:a\leq b\}\). An \emph{upset} is closed
upwards under \(\leq\). A least point lies below every point,
and a maximal point has no strictly greater point.

Our propositional language has countably infinitely many variables and the
connectives \(\bot,\wedge,\vee,\to\). We write
\(\neg\varphi\) for \(\varphi\to\bot\) and \(\top\)
for \(\bot\to\bot\). Finite conjunctions and disjunctions
are iterated binary ones, with empty cases \(\top\) and
\(\bot\), respectively.

For \(m\geq1\), we use
\(\mathbb Z_m=\{0,\ldots,m-1\}\) with addition modulo
\(m\). So the successor of \(m-1\) is \(0\), and period
one is allowed. Superscripts \(h,v\) and subscripts
\(\mathrm{col},\mathrm{row}\) indicate the horizontal and vertical
directions. Notation depending on the code, such as
\(\role(A)\), \(\Resp(A,B)\), and \(A[\chi]\), is
introduced with the realization conditions below.

Our encoding assigns abstract \emph{roles} to some worlds of a
finite Medvedev frame \(F(K)\), hence to nonempty subsets of \(K\).
Intuitively, the role says what a set is used for, such as representing a column or
a cell of a particular tile type. Different subsets may have the same
role. The role poset specifies which containments are permitted, and
its demands specify which response sets must exist.

\begin{definition}[Finite Wang system]\label{def:wang}
A \emph{finite Wang system} \cite{Wang1961} is a triple \(D=(T,H,V)\), where \(T\) is a
finite nonempty set of tile types, \(H\subseteq T\times T\) is the
horizontal compatibility relation, and \(V\subseteq T\times T\) is the
vertical compatibility relation.  Thus \((t,u)\in H\) means that \(u\)
may occur immediately to the right of \(t\), while \((t,u)\in V\) means
that \(u\) may occur immediately above \(t\).  Both relations are
directed, so the order of \(t,u\) matters.
\end{definition}

\begin{definition}[Finite torus tiling]\label{def:torus}
A \emph{finite torus tiling} of \(D=(T,H,V)\) is a map
\(\tau:\mathbb Z_m\times\mathbb Z_n\to T\), for some \(m,n\ge1\), such
that \((\tau(i,j),\tau(i+1,j))\in H\) and
\((\tau(i,j),\tau(i,j+1))\in V\) for every \((i,j)\), with both
coordinates read cyclically.  Thus the last column is adjacent to the
first, and the last row is adjacent to the first.
\end{definition}

The system \(D\) specifies the available tiles and the matching rules.
A torus tiling chooses dimensions and an assignment satisfying those rules.
We fix an arbitrary \(D\) and define its code below. The code depends
only on \(D\), so it is fixed before any tiling or realization is chosen.


\begin{definition}[Wang--Medvedev pair]\label{def:wmpair}
Fix a finite Wang system \(D=(T,H,V)\). The \emph{Wang--Medvedev pair}
associated with \(D\), also called the \emph{code} of \(D\), is
\(\WM(D)=(P_D,\D_D)\), defined as follows.

\begin{enumerate}[label=\textbf{(\arabic*)},leftmargin=2.8em,nosep]

\item \textbf{Points.}
The poset \(P_D=\{r\}\,\dot\cup\,\Mid_D\,\dot\cup\,\MaxPts_D\)
has three levels: the root \(r\), the Mid-points, and the Max-points.
Put
\(\mathcal G=\{\StepL{h},\StepL{v}\}\cup
\{\SrcL{h}{t},\allowbreak\NbrL{h}{t},\allowbreak
\SrcL{v}{t},\allowbreak\NbrL{v}{t}:t\in T\}\).

The set \(\MaxPts_D\) consists of the four roles
\(\sa{h},\nx{h},\sa{v},\nx{v}\), the tags \(\tg{h},\tg{v}\), and a
private key \(\keyc{G}\) for every \(G\in\mathcal G\). Thus
\(|\MaxPts_D|=4|T|+8\).

The set \(\Mid_D\) consists of the horizontal generator roles
\(\CPos,\CSep\) and mismatch roles \(\NFL{h},\SFL{h}\), their vertical
analogues \(\RPos,\RSep\) and \(\NFL{v},\SFL{v}\), and the step roles
\(\StepL{h},\StepL{v}\). For every \(t\in T\), it also contains the
cell role \(\CellL{t}\), the horizontal projections
\(\SrcL{h}{t},\NbrL{h}{t}\), the vertical projections
\(\SrcL{v}{t},\NbrL{v}{t}\), and the expected-neighbour roles
\(\ExpL{h}{t},\ExpL{v}{t}\). Finally, for every \(G\in\mathcal G\),
it contains a provider role \(\KeyL{G}\). All names introduced above
denote pairwise distinct points, and \(|\Mid_D|=11|T|+12\).

\item \textbf{Order.}
The root \(r\) is the least point. Distinct Mid-points are incomparable,
as are distinct Max-points. The remaining strict comparabilities are
exactly those given by the following table: the right-hand entry lists
all Max-points lying above the Mid-point or Mid-points on the left.

\begin{center}
\small
\renewcommand{\arraystretch}{1.08}
\begin{tabularx}{.97\linewidth}
{@{}>{\raggedright\arraybackslash}p{.30\linewidth}
>{\raggedright\arraybackslash}X@{}}
\toprule
\textbf{Mid-role(s)} & \textbf{Max-points above}\\
\midrule
\(\CPos,\CSep\) &
  \(\{\sa{h},\nx{h},\tg{h}\}\)\\
\(\NFL{h}\) &
  \(\{\nx{h}\}\)\\
\(\SFL{h}\) &
  \(\{\sa{h}\}\)\\
\(\RPos,\RSep\) &
  \(\{\sa{v},\nx{v},\tg{v}\}\)\\
\(\NFL{v}\) &
  \(\{\nx{v}\}\)\\
\(\SFL{v}\) &
  \(\{\sa{v}\}\)\\
\(\CellL{t}\) &
  \(\{\sa{h},\nx{h},\sa{v},\nx{v}\}\)\\
\(\SrcL{h}{t}\) &
  \(\{\sa{h},\sa{v},\keyc{\SrcL{h}{t}}\}\)\\
\(\NbrL{h}{t}\) &
  \(\{\nx{h},\sa{v},\keyc{\NbrL{h}{t}}\}\)\\
\(\SrcL{v}{t}\) &
  \(\{\sa{v},\sa{h},\keyc{\SrcL{v}{t}}\}\)\\
\(\NbrL{v}{t}\) &
  \(\{\nx{v},\sa{h},\keyc{\NbrL{v}{t}}\}\)\\
\(\StepL{h}\) &
  \(\{\sa{h},\nx{h},\keyc{\StepL{h}}\}\)\\
\(\StepL{v}\) &
  \(\{\sa{v},\nx{v},\keyc{\StepL{v}}\}\)\\
\(\ExpL{h}{t}\) &
  \(\{\nx{h},\sa{v},\keyc{\SrcL{h}{t}},
     \allowbreak\keyc{\StepL{h}}\}\)\\
\(\ExpL{v}{t}\) &
  \(\{\nx{v},\sa{h},\keyc{\SrcL{v}{t}},
     \allowbreak\keyc{\StepL{v}}\}\)\\
\(\KeyL{G}\) &
  \(\{\keyc{G}\}\quad(G\in\mathcal G)\)\\
\bottomrule
\end{tabularx}
\end{center}

Here \(t\in T\). Every row is nonempty and every Max-point occurs in at
least one row. For \(\lambda\in\Mid_D\), write
\(\MaxOf{\lambda}=\{\chi\in\MaxPts_D:\lambda<\chi\}\); thus
\(\MaxOf{\lambda}\) is exactly the corresponding entry of the table.
There are no other strict comparabilities. We call
\(\MaxOf{\lambda}\) the \emph{support} of the Mid-role \(\lambda\).
Two different roles may have the same support. For example,
\(\CPos\) and \(\CSep\) are distinct incomparable points, even though
they have the same three Max-points above them.

\item \textbf{Demands.}
The demand set \(\D_D\) consists of the following unordered pairs of
distinct Mid-points. Thus \(\{\lambda,\gamma\}=\{\gamma,\lambda\}\),
and exchanging the two inputs does not give a new demand.
\begin{enumerate}[label=\textbf{(D\arabic*)},leftmargin=3.2em,nosep]
\raggedright

\item \(\{\CPos,\NFL{h}\}\), \(\{\CSep,\SFL{h}\}\),
\(\{\RPos,\NFL{v}\}\), and \(\{\RSep,\SFL{v}\}\).

\item \(\{\CPos,\RPos\}\).

\item For each \(t\in T\),
\(\{\CellL{t},\KeyL{\SrcL{h}{t}}\}\),
\(\{\CellL{t},\KeyL{\NbrL{h}{t}}\}\),
\(\{\CellL{t},\KeyL{\SrcL{v}{t}}\}\), and
\(\{\CellL{t},\KeyL{\NbrL{v}{t}}\}\).

\item \(\{\CSep,\KeyL{\StepL{h}}\}\) and
\(\{\RSep,\KeyL{\StepL{v}}\}\).

\item For each \(t\in T\), \(\{\SrcL{h}{t},\StepL{h}\}\) and
\(\{\SrcL{v}{t},\StepL{v}\}\).

\item For each \((t,u)\notin H\),
\(\{\ExpL{h}{t},\NbrL{h}{u}\}\).

\item For each \((t,u)\notin V\),
\(\{\ExpL{v}{t},\NbrL{v}{u}\}\).

\end{enumerate}

The pairs in (D6) and (D7) are unordered as well. The tile order is
recorded by the different Expected and Candidate roles. For example,
exchanging the two entries of
\(\{\ExpL{h}{t},\NbrL{h}{u}\}\) preserves that demand, whereas
exchanging \(t,u\) gives
\(\{\ExpL{h}{u},\NbrL{h}{t}\}\), a different pair when \(t\ne u\).

The seven families are disjoint, and distinct tile indices give distinct
pairs. Hence
\(|\D_D|=6|T|+7+|T^2\setminus H|+|T^2\setminus V|\).

\end{enumerate}
\end{definition}

From now on, we abbreviate \(P=P_D\), \(\Mid=\Mid_D\),
\(\MaxPts=\MaxPts_D\), and \(\D=\D_D\).
In particular, \(\up\lambda=\{\lambda\}\cup\MaxOf{\lambda}\)
for every Mid-role \(\lambda\).

We read the roles in the order in which the extraction proof will use
 them. First come the column and row generators, then the cells, and
 finally the comparison roles.

In the following support diagrams, blue boxes are Mid-roles and orange boxes are
Max-roles, including the private keys. Every box names an abstract
point of \(P_D\). A solid line from a lower box \(\lambda\) to an
upper box \(\chi\) means \(\lambda<\chi\) in the role poset.
The root lies below every displayed role and is omitted. The diagrams
show order relations; the response rules will concern nonempty subsets
of a carrier \(K\) that bear these roles.

\definecolor{wmMid}{HTML}{2166AC}
\definecolor{wmMax}{HTML}{B45F06}
\tikzset{
  wm mid/.style={draw=wmMid,fill=wmMid!9,rounded corners=2pt,
    font=\scriptsize,align=center,inner sep=3pt,outer sep=1pt},
  wm max/.style={draw=wmMax,fill=wmMax!10,rounded corners=2pt,
    font=\scriptsize,align=center,inner sep=3pt,outer sep=1pt},
  wm order/.style={draw=black,line width=.4pt},
  wm panel/.style={font=\small,align=center}
}

Family (D1) generates an alternating cycle of column and separator
sets, and an analogous cycle of row and separator sets. The respective
numbers of column and row positions will be \(m\) and \(n\). These two
cycles supply the torus indices \(\mathbb Z_m\) and \(\mathbb Z_n\).
Figure~\ref{fig:wm-hasse} shows the roles used for this first task.

\begin{figure}[!ht]
\centering
\begin{tikzpicture}
  \begin{scope}
    \node[wm panel] at (2.8,2.1) {Horizontal generator};
    \node[wm max] (hs) at (0,1.35) {$\sa{h}$};
    \node[wm max] (ht) at (2.8,1.35) {$\tg{h}$};
    \node[wm max] (hn) at (5.6,1.35) {$\nx{h}$};
    \node[wm mid] (hns) at (0,0) {$\SFL{h}$};
    \node[wm mid] (hc) at (1.8,0) {$\CPos$};
    \node[wm mid] (hq) at (3.7,0) {$\CSep$};
    \node[wm mid] (hnn) at (5.6,0) {$\NFL{h}$};
    \foreach \a in {hc,hq} {
      \foreach \b in {hs,ht,hn} {\draw[wm order] (\a)--(\b);}}
    \draw[wm order] (hns)--(hs);
    \draw[wm order] (hnn)--(hn);
  \end{scope}
  \begin{scope}[xshift=7.6cm]
    \node[wm panel] at (2.8,2.1) {Vertical generator};
    \node[wm max] (vs) at (0,1.35) {$\sa{v}$};
    \node[wm max] (vt) at (2.8,1.35) {$\tg{v}$};
    \node[wm max] (vn) at (5.6,1.35) {$\nx{v}$};
    \node[wm mid] (vns) at (0,0) {$\SFL{v}$};
    \node[wm mid] (vr) at (1.8,0) {$\RPos$};
    \node[wm mid] (vw) at (3.7,0) {$\RSep$};
    \node[wm mid] (vnn) at (5.6,0) {$\NFL{v}$};
    \foreach \a in {vr,vw} {
      \foreach \b in {vs,vt,vn} {\draw[wm order] (\a)--(\b);}}
    \draw[wm order] (vns)--(vs);
    \draw[wm order] (vnn)--(vn);
  \end{scope}
\end{tikzpicture}
\caption{The roles used to generate the two cyclic coordinate systems.
Their supports allow the demands in (D1) to produce the required
alternating responses.}
\label{fig:wm-hasse}
\end{figure}

Once these positions have been chosen, family (D2) requests a cell
response to each column--row pair. The response has role
\(\CellL{t}\) for some \(t\in T\), and we assign tile type \(t\)
to that position. Figure~\ref{fig:wm-cell-support} shows the common
support of these cell roles. The tile label \(t\) belongs to the
Mid-role. The four Max-roles above it will be used to compare
neighbouring positions.

\begin{figure}[!ht]
\centering
\begin{tikzpicture}
  \node[wm max] (sc) at (0,1.1) {$\sa{h}$};
  \node[wm max] (nc) at (3,1.1) {$\nx{h}$};
  \node[wm max] (sr) at (6,1.1) {$\sa{v}$};
  \node[wm max] (nr) at (9,1.1) {$\nx{v}$};
  \node[wm mid] (cell) at (4.5,0) {$\CellL{t}$};
  \foreach \a in {sc,nc,sr,nr} {\draw[wm order] (cell)--(\a);}
\end{tikzpicture}
\caption{The cell role assigns a tile type to a chosen column--row
position. There is one distinct role \(\CellL{t}\) for each \(t\in T\).}
\label{fig:wm-cell-support}
\end{figure}

Repeated labels name the same point of \(P_D\).\footnote{For
example, the lines from \(\CPos\) and \(\CellL{t}\) to
\(\sa{h}\) end at the same Max-point. Drawing it in separate
panels does not create new points. The same convention applies to
the private keys.}

\Needspace{5\baselineskip}
The remaining roles form the comparison machinery. Choosing cell
responses gives a tile assignment, but nothing yet guarantees that
neighbouring tile types satisfy \(H\) and \(V\), and checking this
is the \emph{raison d'\^etre} of the roles that follow. For a horizontal
edge, the Source retains the left tile type, the ColStep supplies
information from the intervening column separator, and the Expected
response combines the two. The Candidate retains the tile type chosen
in the right cell. We then compare Expected and Candidate. If their
tile pair were forbidden, (D6) would require a response that the
coordinate containments make impossible.

Figure~\ref{fig:wm-comparison-support} separates the supports used in
this comparison. A provider \(\KeyL{G}\) is a Mid-role whose only
Max-role is \(\keyc{G}\). A set bearing this provider role supplies
the private-key points needed to obtain the auxiliary role \(G\)
from a chosen cell or separator. Families (D3)--(D4) make these
requests. We then use (D5) to form the Expected response. The detailed
proof below explains both the response roles and the required
containments.

\begin{figure}[!ht]
\centering
\begin{tikzpicture}
  \begin{scope}
    \node[wm panel] at (2.65,1.95) {Source and its provider};
    \node[wm max] (ss) at (0,1.2) {$\sa{h}$};
    \node[wm max] (sr) at (2.5,1.2) {$\sa{v}$};
    \node[wm max] (sk) at (5.3,1.2) {$\keyc{\SrcL{h}{t}}$};
    \node[wm mid] (src) at (1.25,0) {$\SrcL{h}{t}$};
    \node[wm mid] (sprov) at (5.3,0) {$\KeyL{\SrcL{h}{t}}$};
    \foreach \a in {ss,sr,sk} {\draw[wm order] (src)--(\a);}
    \draw[wm order] (sprov)--(sk);
  \end{scope}
  \begin{scope}[xshift=7.6cm]
    \node[wm panel] at (2.65,1.95) {Step and its provider};
    \node[wm max] (ts) at (0,1.2) {$\sa{h}$};
    \node[wm max] (tn) at (2.5,1.2) {$\nx{h}$};
    \node[wm max] (tk) at (5.3,1.2) {$\keyc{\StepL{h}}$};
    \node[wm mid] (step) at (1.25,0) {$\StepL{h}$};
    \node[wm mid] (tprov) at (5.3,0) {$\KeyL{\StepL{h}}$};
    \foreach \a in {ts,tn,tk} {\draw[wm order] (step)--(\a);}
    \draw[wm order] (tprov)--(tk);
  \end{scope}
  \begin{scope}[yshift=-3.2cm]
    \node[wm panel] at (2.65,1.95) {Expected neighbour of the source tile};
    \node[wm max] (en) at (-.2,1.2) {$\nx{h}$};
    \node[wm max] (er) at (1.7,1.2) {$\sa{v}$};
    \node[wm max] (esk) at (3.8,1.2) {$\keyc{\SrcL{h}{t}}$};
    \node[wm max] (etk) at (6,1.2) {$\keyc{\StepL{h}}$};
    \node[wm mid] (exp) at (2.9,0) {$\ExpL{h}{t}$};
    \foreach \a in {en,er,esk,etk} {\draw[wm order] (exp)--(\a);}
  \end{scope}
  \begin{scope}[xshift=7.6cm,yshift=-3.2cm]
    \node[wm panel] at (2.65,1.95) {Candidate and its provider};
    \node[wm max] (cn) at (0,1.2) {$\nx{h}$};
    \node[wm max] (cr) at (2.5,1.2) {$\sa{v}$};
    \node[wm max] (ck) at (5.3,1.2) {$\keyc{\NbrL{h}{u}}$};
    \node[wm mid] (cand) at (1.25,0) {$\NbrL{h}{u}$};
    \node[wm mid] (cprov) at (5.3,0) {$\KeyL{\NbrL{h}{u}}$};
    \foreach \a in {cn,cr,ck} {\draw[wm order] (cand)--(\a);}
    \draw[wm order] (cprov)--(ck);
  \end{scope}
\end{tikzpicture}
\caption{Horizontal comparison roles for tile types \(t,u\in T\).
Providers remain Mid-roles; their private keys are Max-roles.
For vertical comparison, exchange \(h\) with \(v\), column with row,
and ColStep with RowStep. Family (D7) then tests forbidden vertical
pairs.}
\label{fig:wm-comparison-support}
\end{figure}

\FloatBarrier
\begin{definition}[Finite realization]\label{def:realization}
A \emph{finite realization} of \(\WM(D)\) consists of a finite
nonempty carrier \(K\) and a partial function
\(\role:\PowPlus{K}\rightharpoonup P\) satisfying the following
conditions. A subset is \emph{role-bearing} when this function is
defined on it.
\begin{description}[leftmargin=3.4em,labelwidth=2.7em,labelsep=.5em,
                    style=multiline,nosep]
\item[(H0)] \textbf{Base and singleton roles.}
The whole carrier has role \(r\), and every singleton has a Max-role:
\(\role(K)=r\) and \(\role(\{k\})\in\MaxPts\) for every \(k\in K\).

\item[(H1)] \textbf{Containment and order.}
(a) For every role-bearing \(A\) with \(\role(A)=\lambda\), and
every \(\gamma\in P\) with \(\lambda\leq\gamma\), there is a
nonempty \(C\subseteq A\) with \(\role(C)=\gamma\).
(b) For every two role-bearing sets \(A,B\), if
\(\role(A)=\lambda\), \(\role(B)=\gamma\), and \(B\subseteq A\),
then \(\lambda\leq\gamma\).

\item[(H2)] \textbf{Demand responses.}
For every demand \(\{\lambda,\gamma\}\in\D\), and every two
sets \(X,Y\) with \(\role(X)=\lambda\) and \(\role(Y)=\gamma\),
there is a nonempty role-bearing \(Z\subseteq X\cup Y\) whose role
\(\delta=\role(Z)\) lies outside \(\up\lambda\cup\up\gamma\).
Such a set \(Z\) is a \emph{response} to \(X,Y\); its role is a
\emph{legal response role} for this demand.
\end{description}
\end{definition}

In a fixed realization, consider role-bearing sets \(A,B\) whose
roles form a demand. We choose one response for each unordered
input pair and denote it by \(\Resp(A,B)=\Resp(B,A)\). By (H2),
\(\varnothing\ne\Resp(A,B)\subseteq A\cup B\), and
\(\role(\Resp(A,B))\notin\up\role(A)\cup\up\role(B)\).
The notation is used only for demanded pairs and refers to these
fixed choices.

For any subset \(A\subseteq K\) and \(\chi\in\MaxPts\), we write
\(A[\chi]=\{k\in A:\role(\{k\})=\chi\}\).
Thus \(A[\sa{h}]\) is the set of elements of \(A\) whose
singletons have role \(\sa{h}\). The subset \(A[\chi]\) may be
empty or have no assigned role. The following fact explains how
these parts behave inside an input union.

\begin{fact}[Parts of a union]\label{fact:parts-union}
Let \(X,Y,Z\subseteq K\) and \(\chi\in\MaxPts\).
If \(Z\subseteq X\cup Y\), then
\(Z[\chi]\subseteq X[\chi]\cup Y[\chi]\).
In particular, if \(Y[\chi]=\varnothing\), then
\(Z[\chi]\subseteq X[\chi]\).
If also \(Z=Z[\chi]\), this yields \(Z\subseteq X\).
\end{fact}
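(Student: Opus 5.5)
The plan is to unfold the definition of \(A[\chi]\) and argue elementwise; the three claims are then successive specializations. No earlier result is needed, since everything follows from the definition \(A[\chi]=\{k\in A:\role(\{k\})=\chi\}\). The only point to keep in mind is that the role of a singleton does not depend on which set contains it. Membership in \(A[\chi]\) is therefore the conjunction of \(k\in A\) with a condition on \(k\) alone.

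For the first claim, take \(k\in Z[\chi]\). Then \(k\in Z\) and \(\role(\{k\})=\chi\). From \(Z\subseteq X\cup Y\) we get \(k\in X\) or \(k\in Y\). In the first case \(k\in X[\chi]\), and in the second \(k\in Y[\chi]\), because the role condition on \(\{k\}\) is the same in both cases. Hence \(Z[\chi]\subseteq X[\chi]\cup Y[\chi]\).

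The second claim is immediate: if \(Y[\chi]=\varnothing\), the union reduces to \(X[\chi]\). For the third, assume also \(Z=Z[\chi]\). Then \(Z=Z[\chi]\subseteq X[\chi]\subseteq X\), where the last inclusion holds because \(A[\chi]\subseteq A\) for every \(A\) by definition.

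I expect no real obstacle here. The only care needed is to note that \(\role(\{k\})\) is defined, or fails to be, independently of the ambient set, so the same membership test applies in \(X\), \(Y\), and \(Z\).
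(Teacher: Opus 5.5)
Your proposal is correct and matches the paper's proof. Both argue elementwise that a member of \(Z[\chi]\) lies in \(X\) or \(Y\) and keeps its singleton role \(\chi\), and then read off the two consequences directly.
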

\begin{proof}
Every element of \(Z[\chi]\) belongs to \(X\) or \(Y\) and its
singleton has role \(\chi\), so it belongs to \(X[\chi]\) or
\(Y[\chi]\). The two consequences follow immediately.
\end{proof}

By (H0), every point of \(K\) has a singleton role. Moreover
\(|K|\geq2\), since otherwise one set would carry both the root role
and a Max-role. Clause (H1)(a) supplies representatives of larger
roles inside an assigned set, while (H1)(b) restricts which assigned
sets may contain one another. In particular, sets with distinct
Mid-roles never contain one another.

Notice that condition (H2) applies to \emph{every} pair of sets
carrying a demanded pair of roles. The response may depend on the
chosen sets, and it need only lie in their union. This is all that
Section~\ref{sec:torus-to-realization} will have to supply.
A Mid-role is legal iff it differs from both input roles.

\begin{lemma}[Properties of realizations]
\label{lem:basic-concrete}
The following statements hold in every realization.
\begin{enumerate}[label=\textup{(\alph*)},leftmargin=2.3em,nosep]
\item If \(\role(C)=\chi\) for \(\chi\in\MaxPts\), then every
      \(k\in C\) satisfies \(\role(\{k\})=\chi\).
\item If \(\role(A)=\lambda\) for \(\lambda\in\Mid\), then every
      \(k\in A\) satisfies \(\role(\{k\})\in\MaxOf{\lambda}\).
      Conversely, for every \(\chi\in\MaxOf{\lambda}\), the set
      \(A\) contains at least one element \(k\) with
      \(\role(\{k\})=\chi\).  Hence the singleton roles occurring in
      \(A\) are exactly the Max-points above \(\lambda\).
\item If \(\role(A)=\lambda\), \(\role(B)=\gamma\), and
      \(\lambda\nleq\gamma\), then \(B\nsubseteq A\).  In particular,
      a set with one Mid-role cannot be contained in a set with a
      distinct Mid-role.
\end{enumerate}
\end{lemma}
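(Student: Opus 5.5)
All three parts follow directly from (H0) and (H1), using only the shape of the role poset \(P\): the root lies below everything, Mid-points are pairwise incomparable, Max-points are pairwise incomparable (and maximal), and the only strict relations above a Mid-point \(\lambda\) are to the points of \(\MaxOf{\lambda}\). We prove (c) first, since it is the basic tool, and then derive (a) and (b).

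For (c), we argue by contraposition. Suppose \(B\subseteq A\), with \(\role(A)=\lambda\) and \(\role(B)=\gamma\). Then (H1)(b) gives \(\lambda\leq\gamma\). Hence \(\lambda\nleq\gamma\) forces \(B\nsubseteq A\). For the particular case, take two distinct Mid-points \(\lambda\neq\gamma\). They are incomparable, so \(\lambda\nleq\gamma\), and therefore a set with role \(\gamma\) cannot lie inside a set with role \(\lambda\).

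For (a), let \(\role(C)=\chi\in\MaxPts\) and \(k\in C\). By (H0), the singleton \(\{k\}\) is role-bearing, and \(\{k\}\subseteq C\). So (H1)(b) gives \(\chi\leq\role(\{k\})\). Since \(\chi\) is maximal, \(\role(\{k\})=\chi\).

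For (b), let \(\role(A)=\lambda\in\Mid\) and \(k\in A\). By (H0) and (H1)(b), \(\lambda\leq\role(\{k\})\in\MaxPts\). A Max-point is never equal to the Mid-point \(\lambda\), so \(\lambda<\role(\{k\})\), i.e. \(\role(\{k\})\in\MaxOf{\lambda}\). Conversely, fix \(\chi\in\MaxOf{\lambda}\). By (H1)(a) there is a nonempty \(C\subseteq A\) with \(\role(C)=\chi\). Pick any \(k\in C\); by (a), \(\role(\{k\})=\chi\), and \(k\in A\). Together, the two inclusions show that the singleton roles occurring in \(A\) are exactly \(\MaxOf{\lambda}\).

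No step here is a real obstacle. The only point requiring care is to use (H0) to make every singleton role-bearing, since (H1)(b) applies only to role-bearing sets, and to invoke the maximality of Max-points and the three-level structure of \(P\) at the right moments.
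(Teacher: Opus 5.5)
Your proof is correct and follows the paper's argument essentially step for step. Part (a) comes from (H1)(b) applied to a singleton together with maximality, part (b) from (H1)(b) in one direction and from (H1)(a) combined with (a) in the other, and part (c) directly from (H1)(b); proving (c) first is only a reordering, since your arguments for (a) and (b) invoke (H1)(b) directly rather than (c).
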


The proof is given in \hyperref[proof:basic-concrete]{Appendix~\ref*{app:opening-proofs}}.

\begin{corollary}[Every role occurs; one role per labelled set]
\label{cor:roles}
Every \(\lambda\in P\) has a nonempty representative
\(A\subseteq K\) with \(\role(A)=\lambda\). Every role-bearing
subset of \(K\) has exactly one role.
\end{corollary}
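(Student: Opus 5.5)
The corollary has two parts. The first follows from (H0) and (H1)(a). The second is essentially a remark about how the role assignment is set up.

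\emph{Every role occurs.} By (H0) the carrier itself is role-bearing with \(\role(K)=r\). The root \(r\) is the least point of \(P\), so \(r\leq\lambda\) for every \(\lambda\in P\). Applying (H1)(a) to \(A=K\) and \(\gamma=\lambda\) gives a nonempty \(C\subseteq K\) with \(\role(C)=\lambda\). This \(C\) is the required representative; for \(\lambda=r\) we may simply take \(K\). No appeal to Lemma~\ref{lem:basic-concrete} is needed for this part.

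\emph{Uniqueness of roles.} Since \(\role\) is a partial function \(\PowPlus{K}\rightharpoonup P\), each role-bearing set is assigned exactly one value by definition. The statement is still worth checking against the realization conditions, since \(C\) might a priori be forced to carry two different points. Suppose a set \(C\) were treated as bearing roles \(\lambda\) and \(\gamma\). Apply (H1)(b) with \(A=B=C\), using \(C\subseteq C\) in both directions. This yields \(\lambda\leq\gamma\) and \(\gamma\leq\lambda\), and antisymmetry of the poset gives \(\lambda=\gamma\). So the conditions themselves rule out two roles on one set, consistently with the functional reading.

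The only point needing care is this bookkeeping: \(K\) is role-bearing and \(r\) is the least point, which makes (H1)(a) applicable to every \(\lambda\). I do not expect any real obstacle here.
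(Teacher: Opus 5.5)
Your proof is correct and matches the paper's: existence comes from applying (H1)(a) at the root, using \(\role(K)=r\) and the fact that \(r\) is least, and uniqueness comes from \(\role\) being a partial function. Your extra antisymmetry check via (H1)(b) with \(A=B=C\) is valid but redundant, since a partial function cannot assign two values to one set.
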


\begin{proof}
By (H0), \(\role(K)=r\). Since \(r\leq\lambda\), condition
(H1)(a) supplies the required nonempty subset of \(K\).
Uniqueness of a set's role follows because \(\role\) is a partial
function.
\end{proof}

A role may have several representatives, and some nonempty subsets
may be unlabelled. Corollary~\ref{cor:roles} guarantees representatives
somewhere in \(K\). To serve as a response to a particular pair,
a representative must also satisfy the containment and role
requirements in (H2).

For instance, if $\role(A)=\CellL{t}$, then the singleton roles
occurring in $A$ are exactly $\sa{h}$, $\nx{h}$, $\sa{v}$, and
$\nx{v}$, and each of these four roles occurs at least once.

These facts let us check response roles using the support table.
An input union contains only singleton roles supplied by its two inputs.
Moreover, a legal response cannot obtain all its elements from just
one input, by Lemma~\ref{lem:basic-concrete}(c).

\begin{lemma}[Response test]\label{lem:response-test}
Let \(\{\lambda,\gamma\}\in\D_D\), with \(\role(X)=\lambda\) and
\(\role(Y)=\gamma\), and let \(Z\) be a response to \(X\) and \(Y\),
with \(\role(Z)=\delta\). Then:
\begin{enumerate}[label=\textup{(\alph*)},nosep]
\item \(\delta\in\Mid_D\);
\item \(\MaxOf{\delta}\subseteq\MaxOf{\lambda}\cup\MaxOf{\gamma}\);
\item \(\MaxOf{\delta}\cap\MaxOf{\lambda}\neq\varnothing\) and
      \(\MaxOf{\delta}\cap\MaxOf{\gamma}\neq\varnothing\).
\end{enumerate}
\end{lemma}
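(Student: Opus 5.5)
We argue directly from the realization conditions and Lemma~\ref{lem:basic-concrete}. Fix the demand \(\{\lambda,\gamma\}\) and the sets \(X,Y,Z\) as in the statement. By Definition~\ref{def:wmpair}(3), both \(\lambda\) and \(\gamma\) are Mid-points. Hence \(\MaxOf{\lambda}\) and \(\MaxOf{\gamma}\) are the corresponding table entries, and by Lemma~\ref{lem:basic-concrete}(b) every element of \(X\) (resp.\ \(Y\)) has singleton role in \(\MaxOf{\lambda}\) (resp.\ \(\MaxOf{\gamma}\)).

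For (a), we exclude the two other levels of \(P\). First, \(\delta\neq r\). Indeed, \(r\leq\lambda\), so \(r\in\up\lambda\) would be excluded directly if \(\up\lambda\) contained \(r\); it does not. Instead we use containment: \(Z\subseteq X\cup Y\), and (H1)(a) applied to \(\role(Z)=r\) yields, for every Max-point \(\chi\), a subset of \(Z\) with role \(\chi\). Lemma~\ref{lem:basic-concrete}(a) then puts an element of singleton role \(\chi\) in \(Z\). Since the union of two supports never exhausts \(\MaxPts\) (every row of the table is a proper subset), this is impossible. Alternatively, if some \(k\in K\) lies outside \(X\cup Y\) the same conclusion follows; the support argument is cleaner.

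Next, suppose \(\delta=\chi\in\MaxPts\). By Lemma~\ref{lem:basic-concrete}(a), \(Z=Z[\chi]\). Since \(\delta\notin\up\lambda\cup\up\gamma\), we have \(\chi\notin\MaxOf{\lambda}\cup\MaxOf{\gamma}\). But every element of \(Z\subseteq X\cup Y\) has singleton role in \(\MaxOf{\lambda}\cup\MaxOf{\gamma}\), which is a contradiction since \(Z\) is nonempty. Hence \(\delta\in\Mid\).

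For (b), take \(\chi\in\MaxOf{\delta}\). By Lemma~\ref{lem:basic-concrete}(b), \(Z\) contains some \(k\) with \(\role(\{k\})=\chi\). Since \(k\in X\cup Y\), the same lemma gives \(\chi\in\MaxOf{\lambda}\cup\MaxOf{\gamma}\).

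For (c), suppose \(\MaxOf{\delta}\cap\MaxOf{\gamma}=\varnothing\). Every element of \(Z\) has singleton role in \(\MaxOf{\delta}\), so no element of \(Z\) lies in \(Y\), because elements of \(Y\) have roles in \(\MaxOf{\gamma}\). Hence \(Z\subseteq X\). By Lemma~\ref{lem:basic-concrete}(c), or directly by (H1)(b), this forces \(\lambda\leq\delta\), i.e.\ \(\delta\in\up\lambda\), contradicting (H2). The case with \(\lambda\) is symmetric.

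The only delicate step is excluding the root in (a). There, the key observation is that the root's representatives must contain every Max-role, which no union of two supports provides.
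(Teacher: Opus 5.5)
Your argument is essentially the paper's. The exclusion of a Max-role, part (b), and part (c) match it step for step. Your root case rests on the same mechanism: (H1)(a) would force every singleton role into \(Z\subseteq X\cup Y\), and some singleton role is missing from the input union.

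The one step that does not stand as written is your reason for \(\MaxOf{\lambda}\cup\MaxOf{\gamma}\neq\MaxPts_D\). That each row of the table is a proper subset of \(\MaxPts_D\) does not imply that the union of two rows is proper, since two proper subsets can cover a set. The claim is true, but it needs an actual argument. One option is a count: every support has at most four Max-points, while \(|\MaxPts_D|=4|T|+8\geq12\). Another is the paper's observation, by inspection of (D1)--(D7), that every input union omits \(\keyc{\StepL{h}}\) or \(\keyc{\StepL{v}}\).

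Also drop the ``alternatively'' aside. Nothing in Definition~\ref{def:realization} forces a set with role \(r\) to be all of \(K\). So a point of \(K\) outside \(X\cup Y\) yields no contradiction by itself.
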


The proof is given in \hyperref[proof:response-test]{Appendix~\ref*{app:opening-proofs}}.

Call a Mid-role \(\delta\) \emph{admissible} for a pair
\(\{\lambda,\gamma\}\) if it differs from both input roles,
its support is contained in the union of their supports, and its
support meets each input support. These are three checks in the finite
support table. The response test says that every actual response has
an admissible role. Admissibility is a necessary condition on a
response, not a sufficient one. It does not \emph{per se} supply a
concrete response set.

We can exchange the two axes in any such calculation.\footnote{Formally,
let \(\sigma\) exchange \(\CPos,\CSep\) with \(\RPos,\RSep\)
and exchange the superscripts \(h,v\) and the subscripts
\(\mathrm{col},\mathrm{row}\) in all other role names, tags and
keys (so \(\StepL{h}\) with \(\StepL{v}\)), while fixing \(r\)
and the cell roles. The support table gives
\(\MaxOf{\sigma\lambda}=\sigma[\MaxOf{\lambda}]\), where
\(\sigma[A]=\{\sigma(a):a\in A\}\). Thus \(\sigma\) is an
order automorphism of \(P_D\), and it preserves all three
admissibility checks. This remains true when \(H\ne V\):
admissibility concerns the supports of a pair, irrespective of whether
that pair is a demand.}

For a tile \(t\), write
\(G_t\in\{\SrcL{h}{t},\NbrL{h}{t},\SrcL{v}{t},\NbrL{v}{t}\}\).
The screening procedure gives the following table.

\begin{proposition}[Forced response-role table]\label{prop:forced-table}
For a demanded input pair, every response has one of the roles in the
corresponding row below. The displayed roles are exactly the admissible
ones. Their admissibility does not guarantee that each has a witness
for every concrete input pair.
\end{proposition}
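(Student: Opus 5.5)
The proof is a finite screening of the support table, carried out one demand family at a time. For each demand \(\{\lambda,\gamma\}\in\D_D\) I will use Lemma~\ref{lem:response-test}. Part (a) says any response role \(\delta\) is a Mid-role, so only the \(11|T|+12\) Mid-points need checking. For each candidate \(\delta\ne\lambda,\gamma\), I check (b) \(\MaxOf{\delta}\subseteq\MaxOf{\lambda}\cup\MaxOf{\gamma}\) and (c) that \(\MaxOf{\delta}\) meets both \(\MaxOf{\lambda}\) and \(\MaxOf{\gamma}\). By definition, the roles that pass are exactly the admissible ones. The lemma then gives the first sentence of the proposition, and the table row is the list of survivors. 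The last sentence of the statement is a disclaimer and needs no proof. I will treat only the horizontal families. The vertical ones follow by the automorphism \(\sigma\) from the footnote, which preserves all three checks.

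\textbf{Main filter.} The private keys are what make the screening fast. Any \(\delta\) whose support contains a key \(\keyc{G}\) not present in the input union is excluded at once by (b). When one input is a provider \(\KeyL{G}\), check (c) forces \(\keyc{G}\in\MaxOf{\delta}\). Only \(G\) itself, \(\KeyL{G}\), and the Expected roles carry keys, which leaves very few candidates.

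\textbf{Family by family.}
\begin{itemize}
\item (D1) \(\{\CPos,\NFL{h}\}\): the union is \(\{\sa{h},\nx{h},\tg{h}\}\), and \(\delta\) must contain \(\nx{h}\). This leaves \(\CSep\). Dually, \(\{\CSep,\SFL{h}\}\) forces \(\sa{h}\) and leaves \(\CPos\).
\item (D2) \(\{\CPos,\RPos\}\): \(\delta\) must meet both the column triple and the row triple while avoiding keys. Roles such as \(\CSep\) fail (c) because they carry a tag on one side only. Only the \(\CellL{t}\), \(t\in T\), survive.
\item (D3) \(\{\CellL{t},\KeyL{G_t}\}\): \(\delta\) must contain \(\keyc{G_t}\) and stay inside the cell support plus that key. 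This forces \(\delta=G_t\). The role \(\ExpL{h}{t}\) is excluded because it also carries \(\keyc{\StepL{h}}\).
\item (D4) \(\{\CSep,\KeyL{\StepL{h}}\}\): the same argument gives \(\StepL{h}\).
\item (D5) \(\{\SrcL{h}{t},\StepL{h}\}\): \(\ExpL{h}{t}\) survives. So does \(\SFL{h}\), because \(\sa{h}\) is common to both supports.
\item (D6) \(\{\ExpL{h}{t},\NbrL{h}{u}\}\): the union is \(\{\nx{h},\sa{v},\keyc{\SrcL{h}{t}},\keyc{\StepL{h}},\keyc{\NbrL{h}{u}}\}\). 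The singleton-support roles \(\NFL{h}\) and \(\SFL{v}\) survive. Every role with a key, tag, \(\sa{h}\) or \(\nx{v}\) is excluded.
\end{itemize}

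\textbf{Main obstacle.} The hard part is not any single check but completeness of the screening. It is easy to overlook a stray small-support role, such as \(\SFL{h}\) in (D5) or \(\SFL{v}\) in (D6), that meets both inputs. I will therefore organize each row by the Max-points in the input union. I first list every Mid-role whose support lies in that union, which satisfies (b). I then strike out those failing (c) or equal to an input. This makes exhaustiveness visible by inspection of the support table.
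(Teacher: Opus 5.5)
Your proposal is correct and is essentially the paper's own proof. The paper likewise forms, for each horizontal demand, the set $B(U)$ of Mid-roles whose support lies in the union $U$ of the input supports, strikes out the two inputs and every role whose support misses one input support, obtains the vertical rows via the axis exchange $\sigma$, and invokes Lemma~\ref{lem:response-test} to place every actual response among the survivors. One wording fix in (D2): $\CSep$ fails check (c) because its entire support lies on the horizontal side and so misses $\MaxOf{\RPos}$, not because of where its tag sits.
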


\begin{center}
\small
\renewcommand{\arraystretch}{1.18}
\begin{tabularx}{\textwidth}{@{}c >{\raggedright\arraybackslash}X >{\raggedright\arraybackslash}p{.34\textwidth}@{}}
\toprule
& \textbf{demand} & \textbf{admissible response roles}\\
\midrule
(M1) & \(\{\CPos,\NFL{h}\}\) & \(\{\CSep\}\)\\
(M2) & \(\{\CSep,\SFL{h}\}\) & \(\{\CPos\}\)\\
(M3) & \(\{\RPos,\NFL{v}\}\) & \(\{\RSep\}\)\\
(M4) & \(\{\RSep,\SFL{v}\}\) & \(\{\RPos\}\)\\
(M5) & \(\{\CPos,\RPos\}\) & \(\{\CellL{t}:t\in T\}\)\\
(M6) & \(\{\CellL{t},\KeyL{G_t}\}\) & \(\{G_t\}\)\\
(M7) & \(\{\CSep,\KeyL{\StepL{h}}\}\) & \(\{\StepL{h}\}\)\\
(M8) & \(\{\RSep,\KeyL{\StepL{v}}\}\) & \(\{\StepL{v}\}\)\\
(M9) & \(\{\SrcL{h}{t},\StepL{h}\}\) & \(\{\SFL{h},\ExpL{h}{t}\}\)\\
(M10) & \(\{\SrcL{v}{t},\StepL{v}\}\) & \(\{\SFL{v},\ExpL{v}{t}\}\)\\
(M11) & \(\{\ExpL{h}{t},\NbrL{h}{u}\}\), \((t,u)\notin H\)
      & \(\{\NFL{h},\SFL{v}\}\)\\
(M12) & \(\{\ExpL{v}{t},\NbrL{v}{u}\}\), \((t,u)\notin V\)
      & \(\{\NFL{v},\SFL{h}\}\)\\
\bottomrule
\end{tabularx}
\end{center}

The proof is given in \hyperref[proof:forced-table]{Appendix~\ref*{app:opening-proofs}}.

The table separates two tasks. It determines the possible roles from
the abstract code. The following section uses concrete containments
to rule out mismatch responses when the coordinates really are
neighbours.

\section{From a finite realization to a torus tiling}
\label{sec:realization-to-torus}

We fix a Wang system \(D=(T,H,V)\) and a finite realization
\((K,\role)\) of \(\WM(D)\). We will obtain a finite torus tiling of \(D\):
responses first give column and row cycles, then assign tiles to
their intersections. We prove that these tiles satisfy \(H,V\),
including across the cyclic boundaries.

We begin with the columns. We fix nonempty auxiliary sets
\(R^h,S^h\subseteq K\) such that
\(\role(R^h)=\NFL{h}\) and
\(\role(S^h)=\SFL{h}\).\footnote{These sets exist by
Corollary~\ref{cor:roles}. The support table and
Lemma~\ref{lem:basic-concrete}(b) give
\(R^h=R^h[\nx{h}]\) and \(S^h=S^h[\sa{h}]\).
The vertical helpers below exist for the same reason.}
Take a set \(A\) with \(\role(A)=\CPos\), again by
Corollary~\ref{cor:roles}, and put \(B=\Resp(A,R^h)\), so that
\(\role(B)=\CSep\) by (M1). Then put \(A'=\Resp(B,S^h)\), which
has role \(\CPos\) by (M2). These two responses thus lead from one
column set, through a separator, to another.

Since \(\Resp\) is fixed, this procedure defines a successor
function on the finite family of sets with role \(\CPos\) or
\(\CSep\). Iteration eventually repeats a set. We discard the
initial path and retain a cycle
\(P_0,Q_0,\ldots,P_{m-1},Q_{m-1}\), starting at a column set.
Here \(m\geq1\), \(\role(P_i)=\CPos\), and
\(\role(Q_i)=\CSep\). For every \(i\in\mathbb Z_m\),
\(Q_i=\Resp(P_i,R^h)\) and
\(P_{i+1}=\Resp(Q_i,S^h)\).
We use \(P_i\) to represent column \(i\), with \(Q_i\) the
separator leading to column \(i+1\).

Rows are handled in the same way. Fix nonempty
\(R^v,S^v\subseteq K\) with
\(\role(R^v)=\NFL{v}\) and \(\role(S^v)=\SFL{v}\).
Using (M3)--(M4), we obtain a cycle
\(V_0,W_0,\ldots,V_{n-1},W_{n-1}\), where \(n\geq1\),
\(\role(V_j)=\RPos\), and \(\role(W_j)=\RSep\).
Thus \(W_j=\Resp(V_j,R^v)\) and
\(V_{j+1}=\Resp(W_j,S^v)\). We use \(V_j\) for row \(j\),
with \(W_j\) the separator leading to row \(j+1\).\footnote{Here
\(V_j\subseteq K\), whereas \(V\subseteq T\times T\) is the
vertical compatibility relation.}

The cycles satisfy the following containments. Parts (b) and (d)
will be important when comparing neighbouring tiles.

\begin{fact}[Column and row cycles]\label{fact:coordinate-cycles}
For every \(i\in\mathbb Z_m\) and \(j\in\mathbb Z_n\), the following hold:
\begin{enumerate}[label=\textup{(\alph*)},leftmargin=2.3em,itemsep=3pt]
\item \(Q_i\subseteq P_i\cup R^h\) and
      \(P_{i+1}\subseteq Q_i\cup S^h\).
\item \(Q_i[\sa{h}]\subseteq P_i\) and
      \(P_{i+1}[\nx{h}]\subseteq Q_i\).
\item \(W_j\subseteq V_j\cup R^v\) and
      \(V_{j+1}\subseteq W_j\cup S^v\).
\item \(W_j[\sa{v}]\subseteq V_j\) and
      \(V_{j+1}[\nx{v}]\subseteq W_j\).
\end{enumerate}
\end{fact}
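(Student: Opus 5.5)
Parts (a) and (c) follow directly from the definition of the cycles together with (H2). By construction \(Q_i=\Resp(P_i,R^h)\) and \(P_{i+1}=\Resp(Q_i,S^h)\). A response to \(X,Y\) is a subset of \(X\cup Y\), so \(Q_i\subseteq P_i\cup R^h\) and \(P_{i+1}\subseteq Q_i\cup S^h\). Indices are read in \(\mathbb Z_m\), so this also covers the wrap-around step from \(Q_{m-1}\) to \(P_0\). Part (c) is the same argument for the row cycle, using \(W_j=\Resp(V_j,R^v)\) and \(V_{j+1}=\Resp(W_j,S^v)\).

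For (b), I will apply Fact~\ref{fact:parts-union} to the inclusions from (a). The required input is the shape of the helper sets. Since \(\role(R^h)=\NFL{h}\) and \(\MaxOf{\NFL{h}}=\{\nx{h}\}\), Lemma~\ref{lem:basic-concrete}(b) says that every element of \(R^h\) has singleton role \(\nx{h}\). Hence \(R^h[\sa{h}]=\varnothing\). Similarly \(\MaxOf{\SFL{h}}=\{\sa{h}\}\) gives \(S^h[\nx{h}]=\varnothing\). Both facts use \(\sa{h}\neq\nx{h}\), which holds because distinct names denote distinct points.

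Now apply Fact~\ref{fact:parts-union} with \(\chi=\sa{h}\), \(Z=Q_i\), \(X=P_i\), \(Y=R^h\). This gives \(Q_i[\sa{h}]\subseteq P_i[\sa{h}]\subseteq P_i\). Next apply it with \(\chi=\nx{h}\), \(Z=P_{i+1}\), \(X=Q_i\), \(Y=S^h\). This gives \(P_{i+1}[\nx{h}]\subseteq Q_i[\nx{h}]\subseteq Q_i\).

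Part (d) is the vertical analogue. Here \(\MaxOf{\NFL{v}}=\{\nx{v}\}\) gives \(R^v[\sa{v}]=\varnothing\), and \(\MaxOf{\SFL{v}}=\{\sa{v}\}\) gives \(S^v[\nx{v}]=\varnothing\). The two applications of Fact~\ref{fact:parts-union} then go through as before. Alternatively, one can invoke the axis-exchange symmetry \(\sigma\).

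There is no real obstacle in this argument. The only point needing care is to check the supports of the helper roles, so that the relevant parts of \(R^h,S^h,R^v,S^v\) are empty; Lemma~\ref{lem:basic-concrete}(b) supplies exactly this.
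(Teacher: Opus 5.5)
Your proof is correct and matches the paper's argument. Parts (a) and (c) come straight from the defining containments of the chosen responses, including the wrap-around. Parts (b) and (d) follow from Fact~\ref{fact:parts-union} once Lemma~\ref{lem:basic-concrete}(b) shows that \(R^h[\sa{h}]\), \(S^h[\nx{h}]\), \(R^v[\sa{v}]\), and \(S^v[\nx{v}]\) are empty.
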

\begin{proof}
Parts (a) and (c) follow from the defining properties of \(\Resp\),
including the returns to \(P_0\) and \(V_0\). For (b),
\(R^h[\sa{h}]=\varnothing\) and \(S^h[\nx{h}]=\varnothing\),
so Fact~\ref{fact:parts-union} gives the two containments.
Part (d) follows in the same way using \(R^v,S^v\).
\end{proof}

From now on both cycles are fixed, and all indices are read cyclically.
The position \((i,j)\in\mathbb Z_m\times\mathbb Z_n\)
corresponds to the pair \(P_i,V_j\). We put
\(C_{ij}=\Resp(P_i,V_j)\). By (M5),
\(\role(C_{ij})=\CellL{t}\) for some \(t\in T\).
This \(t\) is unique for the chosen \(C_{ij}\), by
Corollary~\ref{cor:roles} and the distinctness of the cell roles.
We define \(\tau(i,j)=t\), obtaining an assignment
\(\tau:\mathbb Z_m\times\mathbb Z_n\to T\).\footnote{Finiteness
and (H2) let us fix \(\Resp\) simultaneously for all demanded
pairs. Different choices may give different assignments, each of
which the argument below proves legal.}

\begin{figure}[!ht]
\centering
\begin{tikzpicture}[
  cycle set/.style={wm mid,font=\small,minimum width=1cm,
    minimum height=.6cm},
  cell set/.style={wm mid,font=\small,minimum width=1.75cm,
    minimum height=.65cm},
  response/.style={-{Stealth[length=1.7mm]},draw=black!60,
    line width=.55pt},
  heading/.style={font=\small}
]
  \node[heading] at (2.45,4.3) {Columns};
  \node[cycle set] (pi) at (.3,3.6) {$P_i$};
  \node[cycle set] (qi) at (2.05,3.6) {$Q_i$};
  \node[cycle set] (pin) at (3.8,3.6) {$P_{i+1}$};
  \node (hdots) at (5,3.6) {$\cdots$};
  \draw[response] (pi)--(qi);
  \draw[response] (qi)--(pin);
  \draw[response] (pin)--(hdots);

  \node[heading] at (8.85,4.3) {Rows};
  \node[cycle set] (vj) at (6.7,3.6) {$V_j$};
  \node[cycle set] (wj) at (8.45,3.6) {$W_j$};
  \node[cycle set] (vjn) at (10.2,3.6) {$V_{j+1}$};
  \node (vdots) at (11.4,3.6) {$\cdots$};
  \draw[response] (vj)--(wj);
  \draw[response] (wj)--(vjn);
  \draw[response] (vjn)--(vdots);

  \node[heading] at (4.65,2.4) {$P_i$};
  \node[heading] at (7.45,2.4) {$P_{i+1}$};
  \node[heading,anchor=east] at (2.95,1.55) {$V_{j+1}$};
  \node[heading,anchor=east] at (2.95,.4) {$V_j$};
  \draw[black!25] (3.25,-.15) rectangle (8.85,2.1);
  \draw[black!25] (6.05,-.15)--(6.05,2.1);
  \draw[black!25] (3.25,.975)--(8.85,.975);
  \node[cell set] at (4.65,1.55) {$C_{i,j+1}$};
  \node[cell set] at (7.45,1.55) {$C_{i+1,j+1}$};
  \node[cell set] at (4.65,.4) {$C_{ij}$};
  \node[cell set] at (7.45,.4) {$C_{i+1,j}$};
\end{tikzpicture}
\caption{The cycles supply the column and row indices. Arrows follow
chosen responses, with auxiliary inputs omitted. Each cell is
$C_{ij}=\Resp(P_i,V_j)$. Its role determines the tile at $(i,j)$.
All indices are cyclic.}
\label{fig:extracted-torus}
\end{figure}

\begin{fact}[Parts of the cell responses]\label{fact:cell-parts}
For every \(i\in\mathbb Z_m\) and \(j\in\mathbb Z_n\),
\(C_{ij}\subseteq P_i\cup V_j\), and:
\begin{enumerate}[label=\textup{(\alph*)},leftmargin=2.3em,nosep]
\item \(C_{ij}[\sa{h}]\cup C_{ij}[\nx{h}]\subseteq P_i\);
\item \(C_{ij}[\sa{v}]\cup C_{ij}[\nx{v}]\subseteq V_j\).
\end{enumerate}
\end{fact}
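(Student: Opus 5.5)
The containment \(C_{ij}\subseteq P_i\cup V_j\) holds by definition: \(C_{ij}=\Resp(P_i,V_j)\) is a response to the demand \(\{\CPos,\RPos\}\) of (D2), and (H2) puts every response inside the union of its two inputs. For (a) and (b) the plan is to pick one Max-role \(\chi\) at a time and apply Fact~\ref{fact:parts-union}. For each \(\chi\) we check that one of the two inputs has no elements of role \(\chi\), so all of \(C_{ij}[\chi]\) must come from the other input.

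By Lemma~\ref{lem:basic-concrete}(b), the singleton roles occurring in \(P_i\) are exactly \(\MaxOf{\CPos}=\{\sa{h},\nx{h},\tg{h}\}\). Those occurring in \(V_j\) are exactly \(\MaxOf{\RPos}=\{\sa{v},\nx{v},\tg{v}\}\). These two supports are disjoint, since the Max-points are pairwise distinct names. Hence \(V_j[\sa{h}]=V_j[\nx{h}]=\varnothing\) and \(P_i[\sa{v}]=P_i[\nx{v}]=\varnothing\).

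For (a), apply the "in particular" clause of Fact~\ref{fact:parts-union} with \(X=P_i\), \(Y=V_j\) and \(Z=C_{ij}\). Taking \(\chi=\sa{h}\) gives \(C_{ij}[\sa{h}]\subseteq P_i[\sa{h}]\subseteq P_i\), and taking \(\chi=\nx{h}\) gives \(C_{ij}[\nx{h}]\subseteq P_i\). The union of these two inclusions is (a). Part (b) is the same argument with the roles of \(P_i\) and \(V_j\) exchanged, using \(\chi=\sa{v}\) and \(\chi=\nx{v}\). Alternatively, one can apply the axis-exchange automorphism \(\sigma\) noted after Lemma~\ref{lem:response-test}.

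There is no real obstacle here. The only point needing care is the disjointness of the two supports, which comes directly from the support table. We do not need to know which cell role \(C_{ij}\) has, although by (M5) its support is exactly the four Max-roles mentioned in (a) and (b).
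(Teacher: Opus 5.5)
Your proposal is correct and follows the paper's proof. The containment comes from the definition of \(\Resp\). Then (a) and (b) follow from Fact~\ref{fact:parts-union}, once Lemma~\ref{lem:basic-concrete}(b) and the support table give \(V_j[\sa{h}]=V_j[\nx{h}]=\varnothing\) and \(P_i[\sa{v}]=P_i[\nx{v}]=\varnothing\).
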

\begin{proof}
The first containment follows from the definition of \(\Resp\).
By the support table and Lemma~\ref{lem:basic-concrete}(b),
\(V_j[\sa{h}]=V_j[\nx{h}]=\varnothing\), so
Fact~\ref{fact:parts-union} gives (a). The same argument gives (b),
using \(P_i[\sa{v}]=P_i[\nx{v}]=\varnothing\).
\end{proof}

We now check the horizontal matching rule for this assignment.
Consider neighbouring cells \(C_{ij},C_{i+1,j}\), and write
\(t=\tau(i,j)\), \(u=\tau(i+1,j)\).
Suppose, for a contradiction, that \((t,u)\notin H\).
Then (D6) places
\(\{\ExpL{h}{t},\NbrL{h}{u}\}\) in the demand set.
Our task is to obtain sets with these two roles from the chosen
cells and show that their required response cannot exist.
The intermediate Source and Step responses will produce the
Expected set from the left cell.

For each \(G\in\mathcal G\), we fix a nonempty set
\(F_G\subseteq K\) such that
\(\role(F_G)=\KeyL{G}\), called its
\emph{provider}.\footnote{Corollary~\ref{cor:roles} supplies these
sets.} The support table gives \(\MaxOf{\KeyL{G}}=\{\keyc{G}\}\).
For every \(k\in F_G\), (H0) assigns a Max-role to \(\{k\}\),
and (H1)(b) gives \(\KeyL{G}\leq\role(\{k\})\).
The only possible role is therefore \(\keyc{G}\).
Thus every element of \(F_G\) belongs to \(F_G[\keyc{G}]\), so
\(F_G=F_G[\keyc{G}]\). The whole set has role \(\KeyL{G}\),
while each singleton inside it has role \(\keyc{G}\).
Consequently, whenever \(A,F_G\) form a
demanded pair and \(\chi\ne\keyc{G}\),
Fact~\ref{fact:parts-union} gives
\(\Resp(A,F_G)[\chi]\subseteq A[\chi]\).
This will keep the indicated parts of a response inside its
cell or separator input.

We first obtain the Source set by pairing the left cell with
the provider specifically for \(\SrcL{h}{t}\).
Condition (D3) makes \(C_{ij},F_{\SrcL{h}{t}}\) a demanded pair,
so we put \(L^h_{ij}=\Resp(C_{ij},F_{\SrcL{h}{t}})\).
By (M6), \(\role(L^h_{ij})=\SrcL{h}{t}\).
The provider contributes only its private-key role, so
Fact~\ref{fact:cell-parts} gives
\(L^h_{ij}[\sa{h}]\subseteq P_i\) and
\(L^h_{ij}[\sa{v}]\subseteq V_j\).
The tile index \(t\) is inherited from the left cell's role.

We next pair the separator \(Q_i\) with the step provider.
By (D4), we can put \(U^h_i=\Resp(Q_i,F_{\StepL{h}})\), and
(M7) gives \(\role(U^h_i)=\StepL{h}\).
We have \(U^h_i[\nx{h}]\subseteq Q_i\).
Also \(U^h_i[\sa{h}]\subseteq Q_i[\sa{h}]\subseteq P_i\),
by Fact~\ref{fact:coordinate-cycles}(b).
So the \(\sa{h}\)-parts of both \(L^h_{ij}\) and \(U^h_i\)
lie in \(P_i\).

Condition (D5) lets us combine these sets:
we put \(O^h_{ij}=\Resp(L^h_{ij},U^h_i)\).
By (M9), its role is \(\SFL{h}\) or \(\ExpL{h}{t}\).
If \(\role(O^h_{ij})=\SFL{h}\), the support table and
Fact~\ref{fact:parts-union} give
\[
 O^h_{ij}=O^h_{ij}[\sa{h}]
 \subseteq L^h_{ij}[\sa{h}]\cup U^h_i[\sa{h}]
 \subseteq P_i.
\]
This contradicts Lemma~\ref{lem:basic-concrete}(c), since
\(\SFL{h}\) and \(\CPos\) are distinct Mid-roles.
Hence \(\role(O^h_{ij})=\ExpL{h}{t}\).
The support of \(\SrcL{h}{t}\) excludes \(\nx{h}\), so
\(O^h_{ij}[\nx{h}]\subseteq U^h_i[\nx{h}]\subseteq Q_i\).
The support of \(\StepL{h}\) excludes \(\sa{v}\), so
\(O^h_{ij}[\sa{v}]\subseteq L^h_{ij}[\sa{v}]\subseteq V_j\).
The Expected set therefore retains the left tile label \(t\)
and has the two parts needed for comparison with the right cell.

We put \(B^h_{i+1,j}=\Resp(C_{i+1,j},F_{\NbrL{h}{u}})\),
using (D3) for the right cell and its candidate provider.
By (M6), \(\role(B^h_{i+1,j})=\NbrL{h}{u}\).
Fact~\ref{fact:cell-parts} gives
\(B^h_{i+1,j}[\sa{v}]\subseteq V_j\).
For its other relevant part, the same fact and
Fact~\ref{fact:coordinate-cycles}(b) give
\(B^h_{i+1,j}[\nx{h}]\subseteq
P_{i+1}[\nx{h}]\subseteq Q_i\).
We have now constructed the two inputs required by (D6):

\begin{center}
\small
\renewcommand{\arraystretch}{1.15}
\begin{tabularx}{\linewidth}{@{}l l X X@{}}
\toprule
\(A\) & \(\role(A)\) & \(A[\nx{h}]\) & \(A[\sa{v}]\)\\
\midrule
\(O^h_{ij}\) & \(\ExpL{h}{t}\) & \(\subseteq Q_i\) & \(\subseteq V_j\)\\
\(B^h_{i+1,j}\) & \(\NbrL{h}{u}\) & \(\subseteq Q_i\) & \(\subseteq V_j\)\\
\bottomrule
\end{tabularx}
\end{center}

The common containing sets \(Q_i,V_j\) arise because the chosen
cells lie in consecutive columns and the same row.
Our assumption \((t,u)\notin H\), together with (D6) and (H2),
now gives \(Z=\Resp(O^h_{ij},B^h_{i+1,j})\).
This is a nonempty subset of \(O^h_{ij}\cup B^h_{i+1,j}\).
By (M11), \(\role(Z)=\NFL{h}\) or \(\role(Z)=\SFL{v}\).

If \(\role(Z)=\NFL{h}\), every singleton in \(Z\) has role
\(\nx{h}\). The table and Fact~\ref{fact:parts-union} therefore
give \(Z=Z[\nx{h}]\subseteq Q_i\).
This contradicts Lemma~\ref{lem:basic-concrete}(c), since
\(\role(Q_i)=\CSep\ne\NFL{h}\).
If \(\role(Z)=\SFL{v}\), the same argument gives
\(Z=Z[\sa{v}]\subseteq V_j\), again a contradiction, since
\(\role(V_j)=\RPos\ne\SFL{v}\).
Both permitted response roles are impossible. Hence \((t,u)\in H\).

The vertical check is the same argument \emph{mutatis mutandis}. We fix cells \(C_{ij},C_{i,j+1}\),
write \(t=\tau(i,j)\), \(u=\tau(i,j+1)\), and suppose
\((t,u)\notin V\). By (D3)--(D4), we put
\[
 L^v_{ij}=\Resp(C_{ij},F_{\SrcL{v}{t}}),\quad
 U^v_j=\Resp(W_j,F_{\StepL{v}}),\quad
 B^v_{i,j+1}=\Resp(C_{i,j+1},F_{\NbrL{v}{u}}).
\]
Rows (M6) and (M8) give
\(\role(L^v_{ij})=\SrcL{v}{t}\),
\(\role(U^v_j)=\StepL{v}\), and
\(\role(B^v_{i,j+1})=\NbrL{v}{u}\).
By Facts~\ref{fact:cell-parts} and~\ref{fact:coordinate-cycles}(d),
\(L^v_{ij}[\sa{v}]\subseteq V_j\) and
\(U^v_j[\sa{v}]\subseteq W_j[\sa{v}]\subseteq V_j\).
Also \(L^v_{ij}[\sa{h}]\subseteq P_i\) and
\(U^v_j[\nx{v}]\subseteq W_j\).

We put \(O^v_{ij}=\Resp(L^v_{ij},U^v_j)\), using (D5).
By (M10), its role is \(\SFL{v}\) or \(\ExpL{v}{t}\).
The first possibility would give
\(O^v_{ij}=O^v_{ij}[\sa{v}]\subseteq V_j\),
contrary to Lemma~\ref{lem:basic-concrete}(c).
Hence \(\role(O^v_{ij})=\ExpL{v}{t}\).
The supports and Fact~\ref{fact:parts-union} give
\(O^v_{ij}[\nx{v}]\subseteq W_j\) and
\(O^v_{ij}[\sa{h}]\subseteq P_i\).
For the Candidate set, Fact~\ref{fact:cell-parts} and
Fact~\ref{fact:coordinate-cycles}(d) give
\(B^v_{i,j+1}[\nx{v}]\subseteq
V_{j+1}[\nx{v}]\subseteq W_j\) and
\(B^v_{i,j+1}[\sa{h}]\subseteq P_i\).

Since \((t,u)\notin V\), (D7) and (H2) give
\(Z=\Resp(O^v_{ij},B^v_{i,j+1})\).
By (M12), \(\role(Z)=\NFL{v}\) or \(\role(Z)=\SFL{h}\).
The first case forces \(Z=Z[\nx{v}]\subseteq W_j\), and
the second forces \(Z=Z[\sa{h}]\subseteq P_i\).
Both contradict Lemma~\ref{lem:basic-concrete}(c), since
\(\role(W_j)=\RSep\) and \(\role(P_i)=\CPos\).
Therefore \((t,u)\in V\).

\begin{theorem}[Realization implies torus tiling]
\label{thm:realization-torus}
If \(\WM(D)\) has a finite realization, then \(D\) tiles a finite
torus.
\end{theorem}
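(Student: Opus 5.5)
The theorem collects the argument carried out in this section, so the plan is to assemble the construction above into a single chain. We start from a finite realization \((K,\role)\) of \(\WM(D)\). Using Corollary~\ref{cor:roles}, we fix the auxiliary sets \(R^h,S^h,R^v,S^v\) and the providers \(F_G\) for \(G\in\mathcal G\), and we fix one response \(\Resp\) for every demanded pair; this is possible because \(K\) is finite. The response rows (M1)--(M4) of Proposition~\ref{prop:forced-table} make the maps \(A\mapsto\Resp(\Resp(A,R^h),S^h)\) and \(A\mapsto\Resp(\Resp(A,R^v),S^v)\) well-defined self-maps of the finite families of \(\CPos\)- and \(\RPos\)-sets. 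By the pigeonhole principle each map has a cycle, which yields \(P_0,Q_0,\dots,P_{m-1},Q_{m-1}\) and \(V_0,W_0,\dots,V_{n-1},W_{n-1}\) with \(m,n\ge1\). These cycles satisfy Fact~\ref{fact:coordinate-cycles}.

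Next, we define \(\tau(i,j)\) to be the unique \(t\) with \(\role(\Resp(P_i,V_j))=\CellL{t}\). This is justified by (M5) and Corollary~\ref{cor:roles}, and gives a map \(\tau:\mathbb Z_m\times\mathbb Z_n\to T\).

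It remains to check the two matching rules of Definition~\ref{def:torus}, for every \((i,j)\) with indices read cyclically. For the horizontal rule, assume \((t,u)\notin H\), where \(t=\tau(i,j)\) and \(u=\tau(i+1,j)\). We build the Source set \(L^h_{ij}\), the Step set \(U^h_i\), the Expected set \(O^h_{ij}\) and the Candidate set \(B^h_{i+1,j}\) as above. The case \(\role(O^h_{ij})=\SFL{h}\) is excluded by Lemma~\ref{lem:basic-concrete}(c). We then record that the \(\nx{h}\)-parts of both \(O^h_{ij}\) and \(B^h_{i+1,j}\) lie in \(Q_i\), and their \(\sa{v}\)-parts lie in \(V_j\). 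Demand (D6) then forces a response \(Z\) whose role is \(\NFL{h}\) or \(\SFL{v}\) by (M11). By Fact~\ref{fact:parts-union}, \(Z\) is then contained in \(Q_i\) or in \(V_j\). Either containment contradicts Lemma~\ref{lem:basic-concrete}(c). The vertical rule follows by the axis-exchange symmetry \(\sigma\), as in the computation above using (D7) and (M12).

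The main point to watch is the cyclic boundary: the comparison between column \(m-1\) and column \(0\), and likewise between the last and first rows. Fact~\ref{fact:coordinate-cycles}(b),(d) must hold for \(i=m-1\) and \(j=n-1\) as well. This holds because the chosen cycle closes via \(P_0=\Resp(Q_{m-1},S^h)\), so every consecutive pair in \(\mathbb Z_m\), the wrap-around included, arises from the same response relations. The analogous statement holds for the rows. Having discarded the initial non-periodic path is exactly what guarantees this, so \(\tau\) is a finite torus tiling of \(D\).
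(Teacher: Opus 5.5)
Your proposal is correct and follows the paper's own extraction argument essentially step for step. You build the cycles from (M1)--(M4) and the cells from (M5), rule out a forbidden horizontal pair by the Source/Step/Expected/Candidate comparison via (M9) and (M11), and handle the wrap-around through the closure $P_0=\Resp(Q_{m-1},S^h)$. The only differences are cosmetic: you apply pigeonhole to the composite map $A\mapsto\Resp(\Resp(A,R^h),S^h)$ rather than the alternating successor map, and you derive the vertical rule from the axis exchange $\sigma$ (legitimate, since $\sigma$ carries the code of $(T,H,V)$ to that of $(T,V,H)$) where the paper writes the vertical case out \emph{mutatis mutandis}.
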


\begin{proof}
The cycles give positive periods \(m,n\), and the cell responses
define \(\tau:\mathbb Z_m\times\mathbb Z_n\to T\).
The two checks establish Definition~\ref{def:torus}, including
the last column's and row's returns to index \(0\). They also
apply when \(m=1\) or \(n=1\), so the corresponding index
returns to itself.
\end{proof}
\section{From a torus tiling to a finite realization}
\label{sec:torus-to-realization}

We now fix a tiling
\(\tau:\mathbb Z_m\times\mathbb Z_n\to T\) of \(D\).
We write \(t_{ij}=\tau(i,j)\) and read the first index modulo \(m\)
and the second modulo \(n\). We construct a finite set \(K\) and a
partial role assignment satisfying (H0)--(H2) for \(\WM(D)\).

Each column, row and cell will be represented by a subset of \(K\).
We also need source, candidate, step, and expected sets, and finally
a response to every demanded pair.\footnote{This includes pairs
associated with unrelated positions.} We arrange the sets so that,
when positions fail the required comparison, their union contains
a response with role \(\SFL{h}\), \(\NFL{h}\), or, \emph{mutatis
mutandis}, its vertical counterpart. The construction uses different missing symbols to
make these containments hold.

Let us first list the sets we intend to construct. A \emph{state} is a
formal name for one of these sets. We let \(I\) be the finite set
of names in the table below, and write \(\lambda(a)\) for the role
listed beside \(a\). We will define a subset \(E_a\subseteq K\)
and assign \(\role(E_a)=\lambda(a)\). For instance, \(c_{ij}\)
is a name, \(E_{c_{ij}}\) is the set it names, and its prescribed
role is \(\lambda(c_{ij})=\CellL{t_{ij}}\).

An unbarred cell, source, candidate, or expected state refers to an
actual position of the tiling. A barred state is a \emph{seed}: an extra
representative ensuring that its tile's role exists even if that tile
never occurs in the chosen tiling. Indices range over
\(i\in\mathbb Z_m\), \(j\in\mathbb Z_n\), \(t\in T\), and
\(G\in\mathcal G\). All listed state names are distinct.

\begin{center}
\small
\renewcommand{\arraystretch}{1.2}
\begin{tabularx}{\linewidth}{@{}lX@{}}
\toprule
States & Roles\\\midrule
\(p_i,q_i\) & \(\CPos,\CSep\), respectively\\
\(v_j,w_j\) & \(\RPos,\RSep\), respectively\\
\(r_h,s_h,r_v,s_v\) & \(\NFL{h},\SFL{h},\NFL{v},\SFL{v}\), respectively\\
\(c_{ij},\bar c_t\) & \(\CellL{t_{ij}},\CellL{t}\), respectively\\
\(f_G\) & \(\KeyL{G}\)\\
\(\ell^h_{ij},b^h_{ij}\) & \(\SrcL{h}{t_{ij}},\NbrL{h}{t_{ij}}\), respectively\\
\(\ell^v_{ij},b^v_{ij}\) & \(\SrcL{v}{t_{ij}},\NbrL{v}{t_{ij}}\), respectively\\
\(\bar\ell^h_t,\bar b^h_t\) & \(\SrcL{h}{t},\NbrL{h}{t}\), respectively\\
\(\bar\ell^v_t,\bar b^v_t\) & \(\SrcL{v}{t},\NbrL{v}{t}\), respectively\\
\(u^h_i,u^v_j\) & \(\StepL{h},\StepL{v}\), respectively\\
\(o^h_{ij},o^v_{ij}\) & \(\ExpL{h}{t_{ij}},\ExpL{v}{t_{ij}}\), respectively\\
\(\bar o^h_t,\bar o^v_t\) & \(\ExpL{h}{t},\ExpL{v}{t}\), respectively\\
\bottomrule
\end{tabularx}
\end{center}
Every Mid-role occurs in the table, so \(\lambda:I\to\Mid_D\)
is surjective.

Choose pairwise distinct formal symbols and put
\begin{equation}
\begin{split}
 J&=\{x_i:i\in\mathbb Z_m\}\,\dot\cup\,
       \{y_j:j\in\mathbb Z_n\}\,\dot\cup\,
       \{z\}\,\dot\cup\,\{z_t:t\in T\},\\
 K&=J\times\MaxPts_D.
\end{split}
\label{eq:coordinate-carrier}
\end{equation}
A carrier point is a pair \((u,\chi)\): a marker and a Max-role.
For each maximal role \(\chi\), the set \(J\times\{\chi\}\) is its
\emph{block}. The marker \(x_i\) records torus column \(i\),
and \(y_j\) records torus row \(j\). The marker \(z\) helps
separate the column and row sets from their separator sets. We reserve
\(z_t\) for the seeds of tile \(t\). The symbols are also distinct
from the state names.\footnote{For \(m=1\) or \(n=1\),
cyclic indices can coincide within a marker family. Different
families remain disjoint.}

We write \(S_a=\MaxOf{\lambda(a)}\) for the support prescribed
by \(\lambda(a)\). Thus \(E_a\) must meet exactly the blocks
\(J\times\{\chi\}\) with \(\chi\in S_a\). For each \(\chi\in S_a\), we specify a \emph{blocklist}
\(B_\chi(a)\subseteq J\), listing the symbols omitted from that
block. We then define
\begin{equation}
 E_a=\{(u,\chi):\chi\in S_a,\ u\in J\setminus B_\chi(a)\}.
\label{eq:compressed-incidence}
\end{equation}
Thus \(B_\chi(a)=\varnothing\) includes the entire supported block.
A block outside \(S_a\) contributes no points at all.

To see how this records coordinates, consider two sets that contain
all of one block except the points marked \(x_i\) and \(x_k\),
respectively. Their union fills that block iff \(i\ne k\):
\begin{equation}
 (J\setminus\{x_i\})\cup(J\setminus\{x_k\})=J
 \quad\Longleftrightarrow\quad i\ne k.
\label{eq:marker-test}
\end{equation}
If the symbols agree, that point remains missing from both inputs.
For a next-column test, we use \(x_{k-1}\) in place of \(x_k\).
The missing symbols then agree iff \(k=i+1\). So the
same test can compare either two columns or a column and its intended
successor. The following lemma also handles blocks that
only one input supports.

\begin{lemma}[How to check a response containment]
\label{lem:response-columns}
Suppose every supported block of \(E_d\) is populated. Then
\(E_d\subseteq E_a\cup E_b\) holds iff
\(S_d\subseteq S_a\cup S_b\) and, for every \(\chi\in S_d\),
\begin{equation}
\begin{array}{c|c}
\text{inputs supporting }\chi & \text{required inclusion}\\\hline
 a\text{ only} & B_\chi(a)\subseteq B_\chi(d)\\
 b\text{ only} & B_\chi(b)\subseteq B_\chi(d)\\
 a\text{ and }b & B_\chi(a)\cap B_\chi(b)\subseteq B_\chi(d).
\end{array}
\label{eq:column-test}
\end{equation}
\end{lemma}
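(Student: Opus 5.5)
The plan is to prove both directions block by block, using the product structure \(K=J\times\MaxPts_D\). By \eqref{eq:compressed-incidence}, a point \((u,\chi)\) belongs to \(E_a\) iff \(\chi\in S_a\) and \(u\notin B_\chi(a)\). So \(E_d\subseteq E_a\cup E_b\) says that for every \(\chi\in S_d\) and every \(u\in J\setminus B_\chi(d)\), the point \((u,\chi)\) lies in \(E_a\) or in \(E_b\). Since blocks for distinct \(\chi\) are disjoint, the containment holds iff it holds separately in each block \(J\times\{\chi\}\) with \(\chi\in S_d\). The lemma therefore reduces to a statement about subsets of \(J\) for each fixed \(\chi\).

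\textbf{Support condition.} First we show that \(S_d\subseteq S_a\cup S_b\) is necessary. Suppose \(\chi\in S_d\) but \(\chi\notin S_a\cup S_b\). Neither input meets the block \(J\times\{\chi\}\). The hypothesis that every supported block of \(E_d\) is populated means \(J\setminus B_\chi(d)\ne\varnothing\), so \(E_d\) contains some \((u,\chi)\). That point lies outside \(E_a\cup E_b\), and the containment fails.

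\textbf{Fixed block.} Now fix \(\chi\in S_d\) with \(\chi\in S_a\cup S_b\). Inside this block, \(E_a\) contributes \(J\setminus B_\chi(a)\) if \(\chi\in S_a\) and nothing otherwise, and similarly for \(E_b\). The containment in the block is therefore one of the following:
\[
 J\setminus B_\chi(d)\subseteq J\setminus B_\chi(a),\qquad
 J\setminus B_\chi(d)\subseteq J\setminus B_\chi(b),\qquad
 J\setminus B_\chi(d)\subseteq J\setminus\bigl(B_\chi(a)\cap B_\chi(b)\bigr).
\]
These correspond to the three rows of \eqref{eq:column-test}. Taking complements in \(J\) turns each into the displayed inclusion of blocklists: \(B_\chi(a)\subseteq B_\chi(d)\), \(B_\chi(b)\subseteq B_\chi(d)\), or \(B_\chi(a)\cap B_\chi(b)\subseteq B_\chi(d)\). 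For the third we use De Morgan, \((J\setminus B)\cup(J\setminus B')=J\setminus(B\cap B')\). All blocklists are subsets of \(J\), so complementation is an order-reversing bijection and each equivalence holds in both directions.

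\textbf{Combining.} The full containment is the conjunction of the support condition and the per-block conditions over \(\chi\in S_d\). This gives both directions of the lemma. I expect no real obstacle. The only point needing care is the role of the populated-block hypothesis. It is used only in the necessity of \(S_d\subseteq S_a\cup S_b\). Without it, an empty block of \(E_d\) could lie in a block that neither input supports, and the containment would still hold vacuously even though the support condition fails.
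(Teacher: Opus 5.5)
Your proof is correct and takes the same route as the paper's. You use the populated-block hypothesis to force \(S_d\subseteq S_a\cup S_b\), then reduce each covered block to an inclusion of blocklists by taking complements in \(J\); your observation that populatedness is needed only for the support condition is also accurate.
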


\begin{proof}
We first check that each block used by \(E_d\) occurs in at least
one input. Otherwise a point of \(E_d\) in that block would lie
outside \(E_a\cup E_b\). Within a covered block, the symbols
missing from the input union are \(B_\chi(a)\), \(B_\chi(b)\),
or \(B_\chi(a)\cap B_\chi(b)\), according to the three cases.
Containment holds iff all those symbols are also missing from
\(E_d\). These are the displayed inclusions.
\end{proof}

In particular, when both inputs support a block, their union fills
it iff their blocklists have empty intersection. We use this
test for the mismatch responses.

We now specify every supported blocklist. To \emph{copy} a list
from another state means to use the same subset of \(J\) in that block.

For the position and separator states, the horizontal lists are
\begin{equation}
\begin{array}{c|ccc}
 a & B_{\nx{h}}(a)&B_{\sa{h}}(a)&B_{\tg{h}}(a)\\\hline
 p_i&\{x_{i-1},z\}&\{x_i\}&\{z\}\\
 q_i&\{x_i\}&\{x_i,z\}&\{z\}.
\end{array}
\label{eq:generators}
\end{equation}
For \(v_j,w_j\), replace \(p_i,q_i,x_i,h\) by
\(v_j,w_j,y_j,v\), respectively. The lists for \(q_i,p_i\) both contain \(x_i\) at \(\sa{h}\),
and those for \(q_i,p_{i+1}\) both contain \(x_i\) at \(\nx{h}\).
These shared omissions will give the required column sequence.
The extra \(z\) makes column and separator sets incomparable
by inclusion (Lemma~\ref{lem:compressed-separation}).

We next specify the mismatch and provider states.
Each of \(r_h,s_h,r_v,s_v,f_G\) has an empty blocklist in its unique
supported block. Thus \(E_{r_h}=J\times\{\nx{h}\}\) and
\(E_{s_h}=J\times\{\sa{h}\}\), with the corresponding vertical
formulas for \(r_v,s_v\). Each provider fills its key block:
\(E_{f_G}=J\times\{\keyc{G}\}\). 

We give the cell at \((i,j)\) the same horizontal lists as column
\(i\) and the same vertical lists as row \(j\):
\begin{equation}
\begin{aligned}
 B_{\nx{h}}(c_{ij})&=\{x_{i-1},z\}, &
 B_{\sa{h}}(c_{ij})&=\{x_i\},\\
 B_{\nx{v}}(c_{ij})&=\{y_{j-1},z\}, &
 B_{\sa{v}}(c_{ij})&=\{y_j\}.
\end{aligned}
\label{eq:cells}
\end{equation}
The seed cell has \(B_\chi(\bar c_t)=\{z_t\}\) in all four
coordinate blocks.

We define the source and candidate sets by copying two blocklists
from the cell and adding the private-key block prescribed by their
respective roles.
For an actual cell we copy from \(c_{ij}\), and for a seed we copy
from \(\bar c_t\). The private-key list is always \(\{z\}\).
The support table determines which two coordinate blocks to copy:
\(\ell^h\) uses \(\sa{h},\sa{v}\), and \(b^h\) uses
\(\nx{h},\sa{v}\). Vertically, \(\ell^v\) uses
\(\sa{v},\sa{h}\), and \(b^v\) uses \(\nx{v},\sa{h}\).
For instance, the lists of \(\ell^h_{ij}\) are \(\{x_i\}\) at
\(\sa{h}\), \(\{y_j\}\) at \(\sa{v}\), and \(\{z\}\)
at its private key. The tile index determines this private key, whereas
the coordinates determine the two coordinate lists.

We obtain the step \(u^h_i\) by copying the
\(\nx{h}\)- and \(\sa{h}\)-lists of \(q_i\), and using
\(\{z\}\) at its private key. So its coordinate lists are \(\{x_i\}\) and
\(\{x_i,z\}\), respectively. The vertical step \(u^v_j\) copies
\(w_j\) in the same way.

Now, the expected set for the tile at \((i,j)\) will be compared with
a candidate for its neighbour. Horizontally, the missing marker
\(x_i\) will match a candidate from column \(i+1\), whose
next-column list contains \(x_{(i+1)-1}=x_i\). The missing marker
\(y_j\) will check that the row is still \(j\). We therefore set
\begin{equation}
\begin{aligned}
 B_{\nx{h}}(o^h_{ij})&=\{x_i\}, &
 B_{\sa{v}}(o^h_{ij})&=\{y_j\},\\
 B_{\nx{v}}(o^v_{ij})&=\{y_j\}, &
 B_{\sa{h}}(o^v_{ij})&=\{x_i\}.
\end{aligned}
\label{eq:expected}
\end{equation}
Both private-key lists of each actual expected state are \(\{z\}\).

For the expected seeds, in every one of their four
supported blocks, put \(B_\chi(\bar o^h_t)=\{y_0\}\) and
\(B_\chi(\bar o^v_t)=\{x_0\}\).
In the next-column block, a horizontal expected seed omits \(y_0\).
Every horizontal candidate omits only column symbols, \(z\), or
seed symbols. These blocklists are disjoint, so the union of the
represented sets fills the block and supplies the mismatch response. The vertical check
uses the next-row block.

\begin{lemma}[Sizes and exact supports]
\label{lem:compressed-support}
Each \(E_a\) is a proper subset of \(K\) with at least two points.
The second components of its elements are exactly the roles in \(S_a\).
\end{lemma}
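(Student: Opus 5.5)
The plan is to derive all three claims from the definition \eqref{eq:compressed-incidence} by showing that every blocklist is a proper subset of \(J\) that omits few symbols. The key observation is that every blocklist specified so far is a subset of \(J\) with at most two elements: \(\{x_{i-1},z\}\), \(\{x_i\}\), \(\{z\}\), \(\{x_i,z\}\), \(\{z_t\}\), \(\{y_0\}\), \(\{x_0\}\), the empty list, or a copy of one of these. I will check this family by family (generators \eqref{eq:generators} and their vertical analogues, mismatch and provider states, cells \eqref{eq:cells} and seed cells, sources and candidates, steps, and expected states \eqref{eq:expected} with their seeds). I will also check that each of these lists misses at least one symbol of \(J\).

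The second point is where a little care is needed. Since \(z\in J\) and \(z_t\in J\) for each \(t\in T\), and \(T\neq\varnothing\), we have \(|J|\ge m+n+1+|T|\ge 4\). A two-element list therefore never exhausts \(J\). In fact \(J\setminus B_\chi(a)\) always has at least two elements, because \(|J\setminus B_\chi(a)|\ge |J|-2\ge 2\). Cyclic coincidences such as \(x_{i-1}=x_i\) when \(m=1\) only shrink a list, so they cause no difficulty.

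From this, the exact-support claim follows at once. By \eqref{eq:compressed-incidence}, the second components of points of \(E_a\) lie in \(S_a\). Conversely, for each \(\chi\in S_a\) the block \((J\setminus B_\chi(a))\times\{\chi\}\) is nonempty. Hence the set of second components is exactly \(S_a\), and every supported block is populated, which is the hypothesis needed later in Lemma~\ref{lem:response-columns}. The size bound also follows, since \(S_a\neq\varnothing\) (every row of the support table is nonempty) and one populated block already contributes at least two points.

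For properness, I will use the fact that \(S_a=\MaxOf{\lambda(a)}\) is never all of \(\MaxPts_D\). The table shows every support has at most four elements, while \(|\MaxPts_D|=4|T|+8\ge 12\). So some block \(J\times\{\chi\}\) with \(\chi\notin S_a\) contributes no point, and \(E_a\ne K\). The only real obstacle is completeness: I must make sure no state in the table of states has been overlooked, for example that the seed sources and candidates copy from \(\bar c_t\) and so inherit \(\{z_t\}\). This is routine.
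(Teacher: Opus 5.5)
Your proposal is correct and follows essentially the same route as the paper. Every blocklist has at most two elements while \(|J|=m+n+|T|+1\ge4\), so every supported block keeps at least two points; this gives exact supports and \(|E_a|\ge2\), and properness follows from supports of size at most four against \(|\MaxPts_D|\ge12\). Your explicit note that each supported block is populated (the hypothesis of Lemma~\ref{lem:response-columns}) spells out what the paper leaves implicit in its ``by definition''.
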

\begin{proof}
We have \(|J|=m+n+|T|+1\geq4\), and every supported blocklist has
at most two elements. So each supported block contains at least
two points. Every Mid-support is nonempty, so \(|E_a|\geq2\).
By definition, the second components occurring in \(E_a\) are
exactly \(S_a\).

Every Mid-support has at most four singleton roles, whereas
\(|\MaxPts_D|\geq12\). A singleton role outside \(S_a\) supplies a point
of \(K\setminus E_a\). Hence \(E_a\subsetneq K\).
\end{proof}

\begin{lemma}[Separation of different roles]
\label{lem:compressed-separation}
If \(\lambda(a)\ne\lambda(b)\), then
\(E_a\nsubseteq E_b\) and \(E_b\nsubseteq E_a\).
\end{lemma}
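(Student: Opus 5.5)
By symmetry it suffices to show \(E_a\nsubseteq E_b\) whenever \(\lambda(a)\ne\lambda(b)\). The plan splits into two cases according to the supports, using Lemma~\ref{lem:compressed-support}: the second components of \(E_a\) are exactly \(S_a\).

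\textbf{First case: \(S_a\nsubseteq S_b\).} Pick \(\chi\in S_a\setminus S_b\). Lemma~\ref{lem:compressed-support} shows the \(\chi\)-block of \(E_a\) is nonempty, while \(E_b\) meets no point of that block. So \(E_a\nsubseteq E_b\).

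\textbf{Second case: \(S_a\subseteq S_b\) with \(\lambda(a)\ne\lambda(b)\).} A scan of the support table shows the supports are either equal or one is strictly contained in the other only in a few configurations:
\begin{itemize}
\item \(\SFL{h}\) or \(\NFL{h}\) below the generator supports and the cell support;
\item \(\SFL{v}\) or \(\NFL{v}\) likewise in the vertical direction;
\item \(\KeyL{G}\) below the Source, Candidate and Step roles, and below the Expected roles;
\item the equal pairs \(\CPos,\CSep\) and \(\RPos,\RSep\);
\item the equal supports of the cell roles \(\CellL{t}\) for varying \(t\).
\end{itemize}
No other support is contained in another. For instance, \(\ExpL{h}{t}\) contains \(\keyc{\StepL{h}}\), which no Source has, and it lacks \(\sa{h}\). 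In each configuration the task is to exhibit a supported block \(\chi\in S_a\) and a symbol \(u\in B_\chi(b)\setminus B_\chi(a)\). Then \((u,\chi)\in E_a\setminus E_b\).

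\textbf{Smaller supports.} Every state with a one-block support (\(r_h,s_h,r_v,s_v,f_G\)) has an empty blocklist, i.e.\ a full block. The containing state \(b\), whether a column, separator, cell, seed, source, candidate, step or expected state, omits at least one symbol in that block, since all the listed blocklists are nonempty. This settles the strictly smaller supports.

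\textbf{Equal supports.} For \(p_i\) versus \(q_k\), the plan is to compare the \(\tg{h}\) lists, which are both \(\{z\}\) and so do not separate. Instead use the \(\nx{h}\) block: \(z\in B_{\nx{h}}(p_i)\) but \(z\notin B_{\nx{h}}(q_k)=\{x_k\}\), so \(E_{q_k}\nsubseteq E_{p_i}\). Conversely, \(z\in B_{\sa{h}}(q_k)\setminus B_{\sa{h}}(p_i)\) gives \(E_{p_i}\nsubseteq E_{q_k}\). This is precisely the role of the marker \(z\) announced after \eqref{eq:generators}; rows are symmetric. For two cell states with different tiles, take \(\{c_{ij},c_{kl}\}\), \(\{c_{ij},\bar c_t\}\) or \(\{\bar c_t,\bar c_{t'}\}\). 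The seeds' lists \(\{z_t\}\) never occur in the actual cell lists. For an actual cell against a seed, the \(\sa{h}\) block gives \(x_i\) versus \(z_t\), which are distinct in both directions. Two seeds with \(t\ne t'\) give \(z_t\) versus \(z_{t'}\). Two actual cells with different tiles lie at different positions \((i,j)\ne(k,l)\), so the \(\sa{h}\) or the \(\sa{v}\) lists differ as singletons \(\{x_i\}\) versus \(\{x_k\}\), or \(\{y_j\}\) versus \(\{y_l\}\).

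\textbf{Main obstacle.} The hardest step will be making the scan of the support table genuinely exhaustive, in particular confirming that roles such as \(\SrcL{h}{t}\) and \(\SrcL{h}{t'}\) with \(t\ne t'\) fall under the first case. They do, since their private keys differ. Likewise, \(\SrcL{h}{t}\) and \(\SrcL{v}{t}\) have different keys.
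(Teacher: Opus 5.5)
Your proof is correct and is essentially the paper's argument: the same case split (incomparable supports, a one-block support strictly inside a larger one, the equal generator supports separated by the marker \(z\), and the equal cell supports separated by column, row, or seed markers), with the same witnesses. One minor slip is that your bullet list understates which supports contain the singleton supports. For instance, \(\{\sa{h}\}\) also lies inside the supports of \(\SrcL{h}{t}\), \(\SrcL{v}{t}\), \(\NbrL{v}{t}\), \(\StepL{h}\), and \(\ExpL{v}{t}\), so ``no other support is contained in another'' is false as written; your smaller-supports paragraph already treats every containing state uniformly, however, so nothing is lost.
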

\begin{proof}
Every support has size one, three, or four. No three-element support
is properly contained in a four-element support: a generator uses a
tag, which no four-element support has. A keyed three-element support
cannot lie in a cell support. The source or step sharing a key
with an expected role needs the same-axis coordinate missing from
that expected support. Candidate keys occur in no expected support.

Equal nonsingleton supports of different roles occur only in
\(\{\CPos,\CSep\}\), in \(\{\RPos,\RSep\}\), and among cell
roles. Private keys distinguish the other supports. Each singleton
support belongs to exactly one Mid-role. We can therefore check
separation in the following four cases.

First, suppose the supports are incomparable. We choose
\(\chi\in S_a\setminus S_b\). A point in the populated
\(\chi\)-block of \(E_a\) lies outside \(E_b\).
A singleton role in \(S_b\setminus S_a\) gives the reverse witness.

Second, suppose \(S_a=\{\chi\}\subsetneq S_b\). The state \(a\) is a
mismatch or provider, so \(E_a=J\times\{\chi\}\).
Every supported list of the nonsingleton-support state \(b\) is
nonempty. We choose \(u\in B_\chi(b)\). The point \((u,\chi)\) belongs
to \(E_a\setminus E_b\).
A point in any other supported block of \(E_b\) lies outside \(E_a\).

Third, we compare a column set with a column separator set.
For every \(i,k\), the point \((z,\sa{h})\) belongs to
\(E_{p_i}\setminus E_{q_k}\), while \((z,\nx{h})\) belongs to
\(E_{q_k}\setminus E_{p_i}\).
The vertical witnesses are \((z,\sa{v})\) and \((z,\nx{v})\).

Finally, we compare sets with different cell roles.
If actual cells \(c_{ij},c_{k\ell}\) carry different tiles, their
coordinates differ. If \(i\ne k\), the point \((x_k,\sa{h})\) lies in the first set difference,
and \((x_i,\sa{h})\) lies in the reverse difference.
If \(i=k\), then \(j\ne\ell\), and the corresponding witnesses
are \((y_\ell,\sa{v})\) and \((y_j,\sa{v})\).
For an actual cell \(c_{ij}\) and a seed \(\bar c_t\), we use
\((z_t,\sa{h})\) in the first difference and \((x_i,\sa{h})\)
in the reverse difference.
For two different seeds \(\bar c_t,\bar c_u\), we use
\((z_u,\sa{h})\) in the first difference and
\((z_t,\sa{h})\) in the second.
\end{proof}

\begin{proposition}[Assigning the roles]
\label{prop:compressed-roles}
Put \(\role(K)=r\), \(\role(\{(u,\chi)\})=\chi\) for every
\((u,\chi)\in K\), and \(\role(E_a)=\lambda(a)\) for every
\(a\in I\). Leave all other nonempty subsets unassigned. These prescriptions define
a partial function satisfying (H0) and (H1).
\end{proposition}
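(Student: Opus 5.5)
Three things need checking: the prescriptions define a partial function, (H0) holds, and (H1) holds in both parts.

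\emph{Well-definedness.} We must show that no subset receives two different roles. There are three kinds of labelled sets: the whole carrier \(K\), the singletons, and the sets \(E_a\).
\begin{itemize}
\item By Lemma~\ref{lem:compressed-support}, each \(E_a\) is a proper subset of \(K\) with at least two points. So it coincides neither with \(K\) nor with a singleton.
\item Since \(|K|\geq2\), \(K\) is not a singleton.
\item If \(E_a=E_b\) with \(\lambda(a)\ne\lambda(b)\), this contradicts Lemma~\ref{lem:compressed-separation}. Distinct names with equal sets therefore receive the same role.
\end{itemize}
Singletons are labelled consistently by their second component. With well-definedness settled, (H0) holds by construction.

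\emph{(H1)(a).} We go through the role of the set.
\begin{itemize}
\item For \(K\) with role \(r\): every \(\gamma\in P\) is represented. The root is represented by \(K\) itself. Each Mid-role is represented by some \(E_a\), since \(\lambda\) is surjective. Each Max-role \(\chi\) is represented by a singleton \(\{(u,\chi)\}\).
\item For \(E_a\) with role \(\lambda(a)\) and \(\gamma\geq\lambda(a)\): either \(\gamma=\lambda(a)\), witnessed by \(E_a\) itself, or \(\gamma\in S_a\). In the latter case Lemma~\ref{lem:compressed-support} gives a point of \(E_a\) with second component \(\gamma\), whose singleton has role \(\gamma\).
\item For a singleton, the only role above its Max-role is that role itself.
\end{itemize}

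\emph{(H1)(b).} Suppose \(B\subseteq A\), with \(A\) labelled \(\lambda\) and \(B\) labelled \(\gamma\). We show \(\lambda\leq\gamma\) by cases on \(A\).
\begin{itemize}
\item If \(A=K\), the claim is immediate because \(r\) is least.
\item If \(A\) is a singleton, then \(B=A\) and the roles agree.
\item If \(A=E_a\), then \(B\) is one of three kinds. It cannot be \(K\), because \(E_a\) is a proper subset. If \(B\) is a singleton \(\{(u,\chi)\}\), then \(\chi\in S_a\) by Lemma~\ref{lem:compressed-support}, so \(\lambda(a)<\chi\). If \(B=E_b\), then \(\lambda(a)=\lambda(b)\) by Lemma~\ref{lem:compressed-separation}, so reflexivity gives the claim.
\end{itemize}

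The only potentially delicate point is the case of coinciding sets in the well-definedness check, and it is covered by Lemma~\ref{lem:compressed-separation}. Everything else is bookkeeping.
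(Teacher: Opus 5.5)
Your proof is correct and follows the paper's argument. Lemma~\ref{lem:compressed-support} (properness and size) separates each $E_a$ from $K$ and from the singletons, Lemma~\ref{lem:compressed-separation} handles coinciding or nested sets $E_a,E_b$, and (H1) is checked case by case over the three kinds of labelled sets; the only difference is that you treat (H1)(a) and (H1)(b) in separate case analyses, whereas the paper handles both parts together for each kind of set.
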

\begin{proof}
Sizes and properness separate each \(E_a\) from the root set and
all singletons. If \(E_a=E_b\), the separation lemma forces
\(\lambda(a)=\lambda(b)\). Thus no set receives conflicting roles,
and (H0) holds.

For (H1), consider the three kinds of assigned sets. The root \(K\)
contains a representative of every Mid-role by the state table, and a
singleton of every Max-role by the definition of \(K\). It also
contains itself. These are exactly all roles above \(r\).

Inside \(E_a\), the set itself witnesses \(\lambda(a)\), and its
populated blocks supply singletons of each maximal role in \(S_a\).
These are exactly the roles above \(\lambda(a)\). Conversely, any
assigned set contained in \(E_a\) is either a singleton of a supported
singleton role or some \(E_b\). In the latter case separation forces
\(\lambda(b)=\lambda(a)\). The root set is excluded by properness.
This proves both directions of (H1) at a Mid-set.

Finally, a singleton contains only itself as a nonempty subset, and
its assigned role is maximal. This proves (H1) there as well.
\end{proof}

It remains to check (H2). We must provide a response for every pair
of assigned sets whose roles form a demand. The two sets need not
refer to matching coordinates, which makes the check a bit more tricky.
For example, a horizontal source at
\((i,j)\) can be paired with a step for any column \(k\). We must
also handle seeds.

For each pair of states \(a,b\) with
\(\{\lambda(a),\lambda(b)\}\in\D_D\), we choose a state
\(d=\rho(a,b)\) and prove that \(E_d\) is a response to
\(E_a,E_b\). We give the choices in the order (D1)--(D7) of
Definition~\ref{def:wmpair}, and set \(\rho(b,a)=\rho(a,b)\).
We check \(E_d\subseteq E_a\cup E_b\) using
Lemma~\ref{lem:response-columns}. Each chosen output has a Mid-role
different from both input roles, so
\(\role(E_d)=\lambda(d)\notin\up\lambda(a)\cup\up\lambda(b)\).
Nonemptiness follows from Lemma~\ref{lem:compressed-support}.

For (D1), we make the column and row sequences follow the fixed
torus indices. We define
\[
 \rho(p_i,r_h)=q_i,\qquad \rho(q_i,s_h)=p_{i+1},
 \qquad
 \rho(v_j,r_v)=w_j,\qquad \rho(w_j,s_v)=v_{j+1}.
\]
For \(\rho(p_i,r_h)=q_i\), the set \(E_{r_h}\) supplies the
whole \(\nx{h}\)-block. The other output blocks come from
\(E_{p_i}\): the required blocklist inclusions are
\(\{x_i\}\subseteq\{x_i,z\}\) at \(\sa{h}\) and
\(\{z\}\subseteq\{z\}\) at the tag. Thus
\(E_{q_i}\subseteq E_{p_i}\cup E_{r_h}\).

For \(\rho(q_i,s_h)=p_{i+1}\), the helper fills the
\(\sa{h}\)-block. At \(\nx{h}\) the inclusion is
\(\{x_i\}\subseteq\{x_i,z\}\), and the tag list is unchanged.
This gives \(E_{p_{i+1}}\subseteq E_{q_i}\cup E_{s_h}\).
Replacing \(x_i,h\) by \(y_j,v\) proves the two vertical responses
by the same three block checks.

For (D2), the inputs are \(E_{p_i},E_{v_j}\), with roles
\(\CPos,\RPos\). We put \(\rho(p_i,v_j)=c_{ij}\), so the
proposed response is \(E_{c_{ij}}\), with
\(\role(E_{c_{ij}})=\CellL{t_{ij}}\). Here \(t_{ij}\) is
the tile assigned by our fixed tiling.
The inputs have disjoint supports. The cell's two horizontal lists
are copied from \(p_i\), and its two vertical lists from \(v_j\). Each required one-input inclusion is therefore an equality.
Thus \(E_{c_{ij}}\subseteq E_{p_i}\cup E_{v_j}\).

For (D3), the inputs are a cell set with role \(\CellL{t}\)
and a provider set \(E_{f_G}\), with
\(\role(E_{f_G})=\KeyL{G}\). We need a response with role
\(G\). For \(t=t_{ij}\), we define
\[
\begin{aligned}
 \rho(c_{ij},f_{\SrcL{h}{t}})&=\ell^h_{ij},&
 \rho(c_{ij},f_{\NbrL{h}{t}})&=b^h_{ij},\\
 \rho(c_{ij},f_{\SrcL{v}{t}})&=\ell^v_{ij},&
 \rho(c_{ij},f_{\NbrL{v}{t}})&=b^v_{ij}.
\end{aligned}
\]
For seed cells, we replace \(c_{ij}\) by \(\bar c_t\) and each
output by its barred version indexed by \(t\).

Each of the two output blocks copied from the cell is supported
by that input alone, so its required inclusion is an equality.
The private-key block is supplied by the provider alone, with
required inclusion \(\varnothing\subseteq\{z\}\). This proves
the response containments for all four roles, both for actual cells
and for seeds.

For (D4), we use \(E_{u^h_i}\) as a response to
\(E_{q_i},E_{f_{\StepL{h}}}\), and \(E_{u^v_j}\) as a response
to \(E_{w_j},E_{f_{\StepL{v}}}\). Thus
\(\rho(q_i,f_{\StepL{h}})=u^h_i\) and
\(\rho(w_j,f_{\StepL{v}})=u^v_j\).
The two blocklists copied from the separator give equalities in the
containment test. The provider supplies the key block, again with
inclusion \(\varnothing\subseteq\{z\}\).

For (D5), the horizontal inputs are \(E_{\ell^h_{ij}}\)
and \(E_{u^h_k}\), with roles \(\SrcL{h}{t_{ij}}\) and
\(\StepL{h}\). When \(k=i\), we use the expected set
\(E_{o^h_{ij}}\). When \(k\ne i\), the two sets together fill
the same-column block and give a response of role \(\SFL{h}\).
The vertical choice follows the same rule for rows. We define
\begin{equation}
\begin{aligned}
 \rho(\ell^h_{ij},u^h_k)&=
 \begin{cases}o^h_{ij},&k=i,\\s_h,&k\ne i,\end{cases}
 &
 \rho(\ell^v_{ij},u^v_k)&=
 \begin{cases}o^v_{ij},&k=j,\\s_v,&k\ne j.\end{cases}
\end{aligned}
\label{eq:first-response}
\end{equation}
For seed sources, we put
\(\rho(\bar\ell^h_t,u^h_k)=s_h\) and
\(\rho(\bar\ell^v_t,u^v_k)=s_v\).

For the aligned horizontal output \(o^h_{ij}\), write \(t=t_{ij}\).
The complete block check is
\[
\begin{array}{c|c|c}
\text{output block}&\text{sole supporting input}&
 B_\chi(\text{input})=B_\chi(o^h_{ij})\\\hline
\nx{h}&u^h_i&\{x_i\}\\
\sa{v}&\ell^h_{ij}&\{y_j\}\\
\keyc{\SrcL{h}{t}}&\ell^h_{ij}&\{z\}\\
\keyc{\StepL{h}}&u^h_i&\{z\}.
\end{array}
\]
Every required inclusion is an equality. Notice that the common
input coordinate \(\sa{h}\) is absent from the expected output.
For an aligned vertical output, the four corresponding blocks are
\(\nx{v},\sa{h},\keyc{\SrcL{v}{t}},\keyc{\StepL{v}}\), with lists
\(\{y_j\},\{x_i\},\{z\},\{z\}\).

For the horizontal mismatch branch, the proposed output is
\(E_{s_h}=J\times\{\sa{h}\}\). Both inputs support this singleton role,
and their blocklists have intersection
\begin{equation}
 B_{\sa{h}}(\ell^h_{ij})\cap B_{\sa{h}}(u^h_k)
 =\{x_i\}\cap\{x_k,z\}=\varnothing
 \quad\text{when }i\ne k.
\label{eq:first-mismatch}
\end{equation}
So every point of that block is present in at least one input.
The output list is empty, so \eqref{eq:column-test} holds.
For a seed source the intersection is
\(\{z_t\}\cap\{x_k,z\}=\varnothing\), for every \(k\).
Vertically the intersections are
\(\{y_j\}\cap\{y_k,z\}\) and \(\{z_t\}\cap\{y_k,z\}\).
They are empty in exactly the required cases.

For (D6), we consider an expected set for tile \(t\) and a
candidate set for tile \(u\), where \((t,u)\notin H\).
Because the fixed tiling is legal, actual occurrences of these tiles
cannot be horizontal neighbours in that order. Their column or row
coordinates must therefore fail the neighbour test. We use that
failure to obtain a response. If either input is a seed, we choose
\(r_h\). For two actual inputs, we define
\begin{equation}
 \rho(o^h_{ij},b^h_{k\ell})=
 \begin{cases}
 r_h,&k\ne i+1,\\
 s_v,&k=i+1.
 \end{cases}
\label{eq:forbidden-h}
\end{equation}
Here the demanded tiles are \(t=t_{ij}\), \(u=t_{k\ell}\).
The two common coordinate blocks have blocked intersections
\begin{align}
 B_{\nx{h}}(o^h_{ij})\cap B_{\nx{h}}(b^h_{k\ell})
 &=\{x_i\}\cap\{x_{k-1},z\},
 \label{eq:next-test}\\
 B_{\sa{v}}(o^h_{ij})\cap B_{\sa{v}}(b^h_{k\ell})
 &=\{y_j\}\cap\{y_\ell\}.
 \label{eq:same-test}
\end{align}
The first is nonempty iff \(k=i+1\), and the second
iff \(\ell=j\).

If \(k\ne i+1\), \eqref{eq:next-test} is empty and the union of
the inputs contains the entire \(\nx{h}\)-block. This is
\(E_{r_h}\), proving the first branch.
If \(k=i+1\), then \(\ell\ne j\): otherwise the two input tiles
would lie on the actual horizontal edge
from \((i,j)\) to \((i+1,j)\), and legality of \(\tau\) would
put their pair in \(H\). Consequently \eqref{eq:same-test} is empty,
and the union contains \(E_{s_v}\). This proves the second branch.

For seed cases, the next-column lists are
\[
\begin{array}{c|cc}
 &\text{actual}&\text{seed}\\\hline
\text{expected input}&\{x_i\}&\{y_0\}\\
\text{candidate input}&\{x_{k-1},z\}&\{z_u\}.
\end{array}
\]
Whenever at least one input is a seed, these lists are disjoint
by the distinctness of the symbols. Hence the union of the two
inputs contains the whole \(\nx{h}\)-block \(E_{r_h}\),
proving all three seed cases.


We verify (D7) using the corresponding vertical comparison.
For \((t,u)\notin V\), we choose \(r_v\) whenever either input
is a seed. For actual inputs, we define
\[
 \rho(o^v_{ij},b^v_{k\ell})=
 \begin{cases}
 r_v,&\ell\ne j+1,\\
 s_h,&\ell=j+1.
 \end{cases}
\]
The two tests are
\[
 \{y_j\}\cap\{y_{\ell-1},z\},\qquad
 \{x_i\}\cap\{x_k\}.
\]
The first is empty when \(\ell\ne j+1\), giving \(E_{r_v}\).
If \(\ell=j+1\), legality and the forbidden tile pair force
\(k\ne i\). The second intersection is then empty, giving
\(E_{s_h}\). For a seed expected input we use \(\{x_0\}\), which is
disjoint from every actual candidate list \(\{y_{\ell-1},z\}\)
and every seed candidate list \(\{z_u\}\). If only the candidate
is a seed, its list \(\{z_u\}\) is disjoint from \(\{y_j\}\).
This verifies all vertical seed cases.

The seven families cover every demanded state pair, including seeds.
Legality of \(\tau\) was needed in (D6) and (D7): a forbidden
tile pair cannot occur at positions with both coordinates required
for adjacency.

\begin{theorem}[Compressed construction]
\label{thm:torus-realization}
If \(D=(T,H,V)\) has a legal \(m\)-by-\(n\) torus tiling, then
\(\WM(D)\) has a realization on
\[
 K=J\times\MaxPts_D,\qquad
 \boxed{|K|=(m+n+|T|+1)(4|T|+8).}
\]
\end{theorem}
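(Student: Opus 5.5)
The plan is to assemble the theorem from the pieces already built in this section. We take the carrier \(K=J\times\MaxPts_D\) from \eqref{eq:coordinate-carrier}, the sets \(E_a\) from \eqref{eq:compressed-incidence} with the blocklists specified above, and the partial role function of Proposition~\ref{prop:compressed-roles}. That proposition already gives (H0) and (H1). The remaining work is to confirm (H2) and then to count the carrier.

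For (H2), we must produce a response for every pair of role-bearing sets whose roles form a demand. First we note which sets can occur. Every demand is a pair of Mid-roles, and by construction the only sets carrying Mid-roles are the \(E_a\) with \(a\in I\). Since \(\role(E_a)=\lambda(a)\) is well defined (Proposition~\ref{prop:compressed-roles}), every demanded input pair has the form \(E_a,E_b\) with \(\{\lambda(a),\lambda(b)\}\in\D_D\).

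We then go through the families (D1)--(D7) and check that the assignment \(\rho\) covers every such state pair. Here the main care is needed, and it is where we expect the obstacle to lie: \(\rho\) must be total on demanded pairs, and not merely defined on the aligned ones. The cases are as follows.
\begin{itemize}
\item (D1) pairs every \(p_i\) with \(r_h\) and every \(q_i\) with \(s_h\). Since \(r_h\) and \(s_h\) are the only states with roles \(\NFL{h}\) and \(\SFL{h}\), these cover all inputs.
\item (D2) covers all \((p_i,v_j)\).
\item (D3) covers all actual and seed cells against the four relevant providers. Note that the demand \(\{\CellL{t},\KeyL{G_t}\}\) only pairs a cell of tile \(t\) with the provider for the same \(t\), and the outputs are indexed accordingly.
\item (D4) covers every separator against its step provider.
\item (D5) covers every actual or seed source with every step \(u^h_k\), in both the aligned and the misaligned branch.
\item (D6) and (D7) cover every expected/candidate combination, actual or seed, with a forbidden tile pair.
\end{itemize}

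For each chosen output \(d\), containment \(E_d\subseteq E_a\cup E_b\) follows from Lemma~\ref{lem:response-columns} using the block checks recorded above. That lemma applies because every supported block of \(E_d\) is populated (Lemma~\ref{lem:compressed-support}). Nonemptiness also comes from Lemma~\ref{lem:compressed-support}. Finally, \(\lambda(d)\) is a Mid-role distinct from both input roles, and distinct Mid-points are incomparable, so \(\lambda(d)\notin\up\lambda(a)\cup\up\lambda(b)\). Hence \(E_d\) is a response in the sense of (H2).

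Two points need explicit confirmation. The first is the periodic case \(m=1\) or \(n=1\), where markers such as \(x_{i-1}\) and \(x_i\) coincide. We check that the inclusions used, such as \(\{x_i\}\subseteq\{x_i,z\}\) and the intersection tests \eqref{eq:first-mismatch} and \eqref{eq:next-test}, remain correct when indices collide. They do, since those tests depend only on whether the indices are equal modulo \(m\) or \(n\). The second is the use of legality of \(\tau\) in the branches \(k=i+1\) of (D6) and \(\ell=j+1\) of (D7). There legality forces the other coordinate to differ, which empties \eqref{eq:same-test} or its vertical analogue.

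The size claim is then immediate. We have \(|J|=m+n+|T|+1\) from \eqref{eq:coordinate-carrier}, and \(|\MaxPts_D|=4|T|+8\) from Definition~\ref{def:wmpair}(1). This gives \(|K|=(m+n+|T|+1)(4|T|+8)\).
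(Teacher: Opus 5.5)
Your proposal is correct and follows the paper's proof: you get (H0)--(H1) from Proposition~\ref{prop:compressed-roles}. For (H2) you note that only the sets \(E_a\) carry Mid-roles and invoke the response choices \(\rho\) already verified for every demanded state pair, and you read the cardinality off \eqref{eq:coordinate-carrier}. Your extra remarks on totality of \(\rho\), the periodic case \(m=1\) or \(n=1\), and the use of legality in (D6)--(D7) only make explicit what the paper checks in the text preceding the theorem.
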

\begin{proof}
Proposition~\ref{prop:compressed-roles} gives (H0) and (H1).
For (H2), we take any assigned sets \(X,Y\) whose roles form a
demand. Only sets of the form \(E_a\) carry Mid-roles, so we choose states
\(a,b\) with \(X=E_a\) and \(Y=E_b\).\footnote{Several states
may name the same set. Any names work because every demanded state
pair was checked. Condition (H1)(b) also permits sets with the same
role to contain one another.}
The choices above supply a state \(d\) such that
\(\varnothing\ne E_d\subseteq E_a\cup E_b=X\cup Y\) and
\(\role(E_d)\notin\up\role(X)\cup\up\role(Y)\).
Hence \(E_d\) is the required response.
The cardinality follows from \eqref{eq:coordinate-carrier}.
\end{proof}

\section{Recognizing a realization by a formula}
\label{sec:recognition}

We fix \(D\) and \((P,\D)=(P_D,\D_D)\). We construct a
formula from this pair alone that fails on \(F(K)\) iff
the pair has a realization on \(K\).

Recall that the successors of a world \(X\) in \(F(K)\) are
its nonempty subsets.

\Needspace{10\baselineskip}
A \emph{valuation} assigns each variable \(p\) a set \(v(p)\) of
worlds.  It must be persistent: if \(X\in v(p)\) and
\(\varnothing\neq Y\subseteq X\), then \(Y\in v(p)\).
Forcing under this valuation is defined recursively:
\begin{enumerate}[label=\textup{(\roman*)},leftmargin=2.3em,nosep]
\item \(X\Vdash p\) iff \(X\in v(p)\).
\item No world forces \(\bot\).
\item \(X\Vdash\varphi\wedge\psi\) iff it forces both
conjuncts.
\item \(X\Vdash\varphi\vee\psi\) iff it forces at least
one disjunct at \(X\) itself.
\item \(X\Vdash\varphi\to\psi\) iff every nonempty
\(Y\subseteq X\) forcing \(\varphi\) also forces \(\psi\).
\end{enumerate}
Consequently, \(X\Vdash\neg\varphi\) means that no nonempty
subset of \(X\) forces \(\varphi\).  Every world forces \(\top\).
Persistence follows by induction: in the implication step, every
successor of \(Y\subseteq X\) is also a successor of \(X\).

Failure of forcing, written \(X\not\Vdash\varphi\), is weaker than
forcing \(\neg\varphi\): a variable may fail at \(X\) and hold at
a proper subset.  We repeatedly use the following direct consequence
of the implication clause:
\begin{equation}
 X\not\Vdash\varphi\to\psi
 \quad\Longleftrightarrow\quad
 \text{some nonempty }Y\subseteq X\text{ forces }\varphi
 \text{ and fails }\psi.
\label{eq:failure-implication}
\end{equation}
Recall that \(F(K)\not\models\varphi\) supplies a valuation
and a world failing \(\varphi\).

The idea is to introduce one variable \(e_\lambda\) for each role
\(\lambda\in P\). Given a realization, we will define
\(X\Vdash e_\lambda\) to mean that no nonempty \(E\subseteq X\)
has \(\role(E)=\lambda\). This valuation is persistent: if
no such set exists inside \(X\), none exists inside a subset
of \(X\). So \(X\not\Vdash e_\lambda\) means that some
\(E\subseteq X\) has \(\role(E)=\lambda\), even when \(X\)
itself has no assigned role. We call \(\lambda\) \emph{visible
at \(X\)} when \(X\not\Vdash e_\lambda\).
For the converse direction, we will use the formula clauses to recover
a role assignment from these failures of forcing.

We put \(N:=\bigwedge_{\lambda\in P}e_\lambda\), and define
\begin{equation}
 A_\lambda:=\bigwedge_{\delta\notin\up\lambda}e_\delta,
 \qquad
 A_{\lambda,\gamma}:=
 \bigwedge_{\delta\notin\up\lambda\cup\up\gamma}e_\delta.
\label{eq:recognition-A}
\end{equation}
Thus \(N\) says that no role is visible, and \(A_\lambda\) says
that all visible roles lie above \(\lambda\).  It does not assert that
\(\lambda\) is visible.  The formula \(A_{\lambda,\gamma}\) says that every visible role
lies above \(\lambda\) or above \(\gamma\).

We let \(\Gamma(P,\D)\) be the conjunction of the following
clauses.
\begin{description}[leftmargin=3.0em,style=nextline,labelsep=.6em]
\item[(N)] \(\neg N\).  Some role must be visible at every successor.
This includes singleton worlds.

\item[(O)] \(e_\lambda\to e_r\) for each \(\lambda\in\Mid\),
and \(e_\chi\to e_\lambda\) whenever
\(\chi\in\MaxOf{\lambda}\).  Visibility is thereby closed upwards:
if \(e_\lambda\) fails and \(e_\chi\to e_\lambda\) holds, then
\(e_\chi\) fails too.

\item[(B)] \((A_\lambda\to e_\lambda)\to e_\lambda\) for every
\(\lambda\neq r\).  In other words, if \(e_\lambda\) fails, the inner
implication must fail.  Equivalence~\eqref{eq:failure-implication} supplies a successor
where \(A_\lambda\) holds and \(e_\lambda\) still fails.  Together
with (O), this makes the visible roles exactly \(\up\lambda\).

\item[(S)] \(A_{\lambda,\gamma}\to(e_\lambda\vee e_\gamma)\)
for each demand \(\{\lambda,\gamma\}\in\D\).  When both input roles
are visible, some visible role must lie outside their two upsets.
That role will provide a legal response.
\end{description}
Finally, we put \(\alpha(P,\D):=\Gamma(P,\D)\to e_r\).
Refuting this formula supplies a world satisfying all the clauses while
the root remains visible.  The formula has \(|P|\) variables and size
polynomial in \(|P|+|\D|\): each clause uses at most linearly many
variables, and there are at most quadratically many order clauses.

Suppose first that a realization on \(K\) is given. From it we
build a valuation that refutes the formula. To record which roles occur inside a world,
we define its \emph{trace} by
\begin{equation}
 \tr(X):=\{\lambda\in P:
 \text{some nonempty }E\subseteq X\text{ has }\role(E)=\lambda\}.
\label{eq:old-4-15}
\end{equation}
\begin{lemma}[Properties of traces]
\label{lem:trace}
Every trace is a nonempty upset.  If
\(\varnothing\neq Y\subseteq X\), then
\(\tr(Y)\subseteq\tr(X)\).  If \(\role(E)=\lambda\), then
\(\tr(E)=\up\lambda\).
\end{lemma}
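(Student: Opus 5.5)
The plan is to verify the four claims in turn, using only (H0), (H1), and the definition of the trace. Here \(X\) is an arbitrary nonempty subset of \(K\), that is, a world of \(F(K)\), and need not itself carry a role.

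\emph{Monotonicity.} If \(\varnothing\neq Y\subseteq X\), then any nonempty role-bearing \(E\subseteq Y\) is also contained in \(X\). Hence \(\tr(Y)\subseteq\tr(X)\) follows directly from the definition.

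\emph{Nonemptiness.} Take any \(k\in X\). By (H0), the singleton \(\{k\}\) carries a Max-role \(\chi\), and \(\{k\}\subseteq X\). So \(\chi\in\tr(X)\), and the trace is nonempty.

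\emph{Upset property.} Suppose \(\lambda\in\tr(X)\), witnessed by a nonempty \(E\subseteq X\) with \(\role(E)=\lambda\), and let \(\lambda\leq\gamma\). Clause (H1)(a) supplies a nonempty \(C\subseteq E\) with \(\role(C)=\gamma\). Since \(C\subseteq X\), we get \(\gamma\in\tr(X)\).

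\emph{The trace of a labelled set.} Let \(\role(E)=\lambda\). The upset property, applied with the witness \(E\subseteq E\) itself, gives \(\up\lambda\subseteq\tr(E)\). Conversely, suppose \(\gamma\in\tr(E)\) is witnessed by a nonempty \(B\subseteq E\) with \(\role(B)=\gamma\). Clause (H1)(b), applied to the role-bearing sets \(E\) and \(B\), yields \(\lambda\leq\gamma\), so \(\gamma\in\up\lambda\). Hence \(\tr(E)=\up\lambda\).

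No step is a genuine obstacle. The only point needing care is that (H1)(b) is applied to two role-bearing sets, and both \(E\) and the witness \(B\) are role-bearing by construction.
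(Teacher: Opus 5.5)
Your proof is correct and follows the paper's argument essentially step for step. Nonemptiness comes from the singleton roles in (H0), upward closure from (H1)(a), monotonicity directly from the definition, and \(\tr(E)=\up\lambda\) from (H1)(a) in one direction and (H1)(b) in the other; obtaining the first inclusion by applying the upset property to the witness \(E\subseteq E\) is just a repackaging of the paper's direct appeal to (H1)(a).
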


\begin{proof}
Every world \(X\) contains a singleton, which has a Max-role by
(H0), so \(\tr(X)\neq\varnothing\). If \(\lambda\in\tr(X)\)
and \(\lambda\leq\gamma\), we choose \(E\subseteq X\) with
\(\role(E)=\lambda\). By (H1)(a), some nonempty \(C\subseteq E\)
has \(\role(C)=\gamma\), so \(\gamma\in\tr(X)\). Thus
\(\tr(X)\) is an upset. Every assigned subset of \(Y\subseteq X\)
is also a subset of \(X\), giving \(\tr(Y)\subseteq\tr(X)\).
Finally, if \(\role(E)=\lambda\), (H1)(a) gives
\(\up\lambda\subseteq\tr(E)\), and (H1)(b) gives the reverse
inclusion.
\end{proof}

It turns out that the response condition can be read off the traces.

\begin{lemma}[Responses as forbidden traces]
\label{lem:forbidden-trace}
For an assignment satisfying (H0) and (H1), the response condition for
a demand \(\{\lambda,\gamma\}\) is equivalent to the absence of a
world with trace \(\up\lambda\cup\up\gamma\).
\end{lemma}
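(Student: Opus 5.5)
The plan is to prove both directions directly from the definition of the trace and Lemma~\ref{lem:trace}. Lemma~\ref{lem:trace} needs only (H0) and (H1), so it is available here. Fix a demand \(\{\lambda,\gamma\}\in\D\). The key observation is this: whenever a world contains sets of roles \(\lambda\) and \(\gamma\), its trace contains \(\up\lambda\cup\up\gamma\). A response is exactly a witness that the trace is strictly larger than this union.

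First, assume the response condition (H2) holds for \(\{\lambda,\gamma\}\). Suppose, towards a contradiction, that some world \(X\) has \(\tr(X)=\up\lambda\cup\up\gamma\).
\begin{enumerate}[label=\textup{(\roman*)},leftmargin=2.3em,nosep]
\item Since \(\lambda,\gamma\in\tr(X)\), the definition of the trace gives nonempty \(E,E'\subseteq X\) with \(\role(E)=\lambda\) and \(\role(E')=\gamma\).
\item By (H2), there is a nonempty role-bearing \(Z\subseteq E\cup E'\subseteq X\) whose role \(\delta\) lies outside \(\up\lambda\cup\up\gamma\).
\item Then \(\delta\in\tr(X)\) by definition, contradicting \(\tr(X)=\up\lambda\cup\up\gamma\).
\end{enumerate}

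Conversely, assume no world has trace \(\up\lambda\cup\up\gamma\). Take arbitrary sets \(X,Y\) with \(\role(X)=\lambda\) and \(\role(Y)=\gamma\), and consider the world \(W=X\cup Y\), which is nonempty.
\begin{enumerate}[label=\textup{(\roman*)},leftmargin=2.3em,nosep]
\item By Lemma~\ref{lem:trace}, \(\tr(X)=\up\lambda\) and \(\tr(Y)=\up\gamma\).
\item By monotonicity of traces, both are contained in \(\tr(W)\), so \(\up\lambda\cup\up\gamma\subseteq\tr(W)\).
\item By hypothesis the inclusion is strict, so there is some \(\delta\in\tr(W)\setminus(\up\lambda\cup\up\gamma)\).
\item By definition of the trace, some nonempty \(Z\subseteq W=X\cup Y\) has \(\role(Z)=\delta\). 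This \(Z\) is precisely a response as required by (H2).
\end{enumerate}

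I expect no real obstacle here; the argument is a direct unfolding of definitions. The only points needing care are the following.
\begin{itemize}[leftmargin=2.3em,nosep]
\item The response condition is stated for arbitrary input sets, so the converse must work for every pair \(X,Y\). Taking the union world \(W=X\cup Y\) handles this uniformly.
\item The hypotheses (H0) and (H1) are used only through Lemma~\ref{lem:trace}, namely for \(\tr(E)=\up\role(E)\) and for monotonicity.
\end{itemize}
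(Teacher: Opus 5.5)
Your proof is correct and matches the paper's argument essentially step for step. In the forward direction you obtain a response inside a world whose trace is exactly \(\up\lambda\cup\up\gamma\), which is a contradiction. In the converse you apply Lemma~\ref{lem:trace} to the union \(X\cup Y\) and take a role-bearing subset witnessing a trace element outside \(\up\lambda\cup\up\gamma\).
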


\begin{proof}
Suppose the response condition holds, and assume for contradiction
that \(\tr(X)=\up\lambda\cup\up\gamma\). We choose
\(E,F\subseteq X\) with \(\role(E)=\lambda\) and
\(\role(F)=\gamma\). By (H2), there is a response
\(R\subseteq E\cup F\). Since \(R\subseteq X\), its role belongs
to \(\tr(X)\), but a response must have its role outside
\(\up\lambda\cup\up\gamma\). This is a contradiction.

Conversely, suppose no world has that trace. We take any sets
\(E,F\) with \(\role(E)=\lambda\) and \(\role(F)=\gamma\).
Lemma~\ref{lem:trace} gives
\(\up\lambda\cup\up\gamma\subseteq\tr(E\cup F)\).
Equality is excluded, so we choose
\(\delta\in\tr(E\cup F)\setminus(\up\lambda\cup\up\gamma)\).
By the definition of trace, some nonempty \(R\subseteq E\cup F\)
has \(\role(R)=\delta\). This is the response required by (H2).
\end{proof}

We now define the valuation by \(X\Vdash e_\lambda\) iff
\(\lambda\notin\tr(X)\).  Trace monotonicity proves persistence.
For any \(U\subseteq P\), \(X\) forces the conjunction of
variables indexed outside \(U\) iff \(\tr(X)\subseteq U\).

\begin{proposition}[Forward recognition]
\label{prop:recognition-forward}
A realization on \(K\) yields
\(F(K)\not\models\alpha(P,\D)\).
\end{proposition}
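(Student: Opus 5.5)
We use the trace valuation just defined, \(X\Vdash e_\lambda\) iff \(\lambda\notin\tr(X)\), and show that the root world \(K\) forces \(\Gamma(P,\D)\) but not \(e_r\). Since \(\role(K)=r\), we have \(r\in\tr(K)\), so \(K\not\Vdash e_r\). Hence \(K\not\Vdash\alpha(P,\D)\) by \eqref{eq:failure-implication}, and \(F(K)\not\models\alpha(P,\D)\).

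It therefore suffices to check that every nonempty \(X\subseteq K\) forces each clause of \(\Gamma\). Forcing at \(K\) then follows, because forcing at \(K\) only involves its successors, which are all such \(X\). The clauses are handled as follows.

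\textbf{(N).} \(X\Vdash N\) would mean \(\tr(X)=\varnothing\). This is impossible by Lemma~\ref{lem:trace}. So no successor forces \(N\), and \(\neg N\) holds everywhere.

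\textbf{(O).} The implications \(e_\lambda\to e_r\) and \(e_\chi\to e_\lambda\) for \(\chi\in\MaxOf{\lambda}\) are contrapositives of upward closure. If \(Y\Vdash e_\lambda\) but \(Y\not\Vdash e_r\), then \(r\in\tr(Y)\). Since \(\tr(Y)\) is an upset containing \(r\), it also contains \(\lambda\), a contradiction. The second implication is identical, using \(\lambda\le\chi\).

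\textbf{(B).} We must show that if \(Y\not\Vdash e_\lambda\), then \(Y\not\Vdash A_\lambda\to e_\lambda\). The hypothesis gives \(E\subseteq Y\) with \(\role(E)=\lambda\). By Lemma~\ref{lem:trace}, \(\tr(E)=\up\lambda\). Then \(E\) forces \(A_\lambda\), since its trace lies in \(\up\lambda\), and \(E\) fails \(e_\lambda\). Now \eqref{eq:failure-implication} gives the failure of the inner implication at \(Y\). So at every world either \(e_\lambda\) holds or the antecedent fails, and (B) holds by the implication clause.

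\textbf{(S).} Suppose \(Y\Vdash A_{\lambda,\gamma}\), so \(\tr(Y)\subseteq\up\lambda\cup\up\gamma\). If \(Y\) forced neither \(e_\lambda\) nor \(e_\gamma\), then both \(\lambda\) and \(\gamma\) would lie in \(\tr(Y)\). As \(\tr(Y)\) is an upset, this gives \(\tr(Y)=\up\lambda\cup\up\gamma\). That is excluded by Lemma~\ref{lem:forbidden-trace}, since the realization satisfies (H2). Hence the disjunction holds at \(Y\) itself.

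The only step needing care is (B), which must be verified under the nested implication semantics. The key input is the exact trace \(\tr(E)=\up\lambda\) of a role-bearing witness. The rest is routine.
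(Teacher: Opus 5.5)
Your proof is correct and follows the paper's argument essentially step for step. You use the trace valuation, verify (N), (O), (B) and (S) at every world using Lemma~\ref{lem:trace} and Lemma~\ref{lem:forbidden-trace}, and refute the implication at the root \(K\) because \(r\) is visible there; the only difference is that you check (B) in contrapositive form rather than by contradiction.
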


\begin{proof}
Under the preceding valuation, every world forces all four clause
families, as follows.

For (N), every trace is nonempty, so no world forces \(N\).  Hence every world
forces \(\neg N\).

For (O), upward closure of traces makes forcing \(e_\gamma\) entail forcing
\(e_\lambda\) whenever \(\lambda\leq\gamma\).  Since this holds
at all worlds, the required implications hold everywhere.

For (B), suppose \(Y\Vdash A_\lambda\to e_\lambda\) but
\(Y\not\Vdash e_\lambda\). Then \(\lambda\in\tr(Y)\), so we
choose \(E\subseteq Y\) with \(\role(E)=\lambda\).
By Lemma~\ref{lem:trace}, \(\tr(E)=\up\lambda\). Hence
\(E\Vdash A_\lambda\) and \(E\not\Vdash e_\lambda\),
contradicting the implication at \(Y\). Thus every world forcing
the inner implication forces its consequent, which proves (B).

For (S), suppose \(Y\Vdash A_{\lambda,\gamma}\) while both input variables
fail there. Its trace lies in \(\up\lambda\cup\up\gamma\).
Visibility of both roles and upward closure give the reverse inclusion.  This contradicts
Lemma~\ref{lem:forbidden-trace}.  Thus at least one input variable
is forced whenever the antecedent is forced, proving (S).

Finally, \(\role(K)=r\) makes \(r\) visible at \(K\).
Therefore \(K\Vdash\Gamma(P,\D)\) and \(K\not\Vdash e_r\),
so \(K\) refutes their implication.
\end{proof}

For the converse, we start with a valuation and a world \(W\)
such that \(W\Vdash\Gamma(P,\D)\) and \(W\not\Vdash e_r\).
We must assign roles to subsets of \(W\). Persistence makes every
clause available throughout \(F(W)\). For each nonempty
\(X\subseteq W\), put
\(G(X):=\{\lambda\in P:X\not\Vdash e_\lambda\}\).
We will recover roles from the sets \(G(X)\).

\begin{lemma}[Properties recovered from the clauses]
\label{lem:G}
For every nonempty \(X\subseteq W\), \(G(X)\) is a nonempty
upset.  If \(\varnothing\neq Y\subseteq X\), then
\(G(Y)\subseteq G(X)\).  For any \(U\subseteq P\), forcing
\(\bigwedge_{\lambda\notin U}e_\lambda\) at \(X\) is equivalent
to \(G(X)\subseteq U\).
\end{lemma}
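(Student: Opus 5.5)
We prove the four assertions in turn, using only the forcing clauses, persistence of the valuation, and the clauses (N) and (O) of \(\Gamma(P,\D)\). These clauses hold at every nonempty \(X\subseteq W\), because \(W\Vdash\Gamma(P,\D)\) and forcing is persistent.

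\emph{The last equivalence.} A finite conjunction is forced at \(X\) iff each conjunct is forced there. Hence \(X\Vdash\bigwedge_{\lambda\notin U}e_\lambda\) iff \(X\Vdash e_\lambda\) for every \(\lambda\notin U\). Contrapositively, this says that every \(\lambda\in G(X)\) lies in \(U\), that is, \(G(X)\subseteq U\). The empty conjunction is \(\top\), which matches the case \(U=P\).

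\emph{Monotonicity.} Let \(\lambda\in G(Y)\) with \(\varnothing\ne Y\subseteq X\). If \(X\Vdash e_\lambda\), persistence would give \(Y\Vdash e_\lambda\), which is false. So \(\lambda\in G(X)\).

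\emph{Nonemptiness.} The world \(X\) forces \(\neg N\), and \(X\) is its own successor, so \(X\not\Vdash N\). By the equivalence just proved with \(U=\varnothing\), this means \(G(X)\not\subseteq\varnothing\).

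\emph{Upward closure.} This is the only step that needs (O), and the one detail to watch is that the order clauses list only covering pairs. These are \(r<\lambda\) for \(\lambda\in\Mid\) and \(\lambda<\chi\) for \(\chi\in\MaxOf{\lambda}\). Suppose \(\lambda\in G(X)\) and \(\lambda\le\gamma\). We argue by the three-level structure of \(P_D\):
\begin{enumerate}[label=\textup{(\roman*)},leftmargin=2.3em,nosep]
\item If \(\gamma=\lambda\), there is nothing to prove.
\item If \(\gamma\) covers \(\lambda\) (root to Mid-point, or Mid-point to Max-point), then \(X\Vdash e_\gamma\to e_\lambda\). Since \(X\) is its own successor, \(X\Vdash e_\gamma\) would give \(X\Vdash e_\lambda\), a contradiction. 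Hence \(\gamma\in G(X)\).
\item The remaining case is \(\lambda=r\) and \(\gamma=\chi\in\MaxPts\). By Definition~\ref{def:wmpair}, every Max-point occurs in some row of the support table, so \(\chi\in\MaxOf{\mu}\) for some \(\mu\in\Mid\). Applying (ii) twice gives first \(\mu\in G(X)\) and then \(\chi\in G(X)\).
\end{enumerate}
This uses that every Max-point has a Mid-point below it, which Definition~\ref{def:wmpair} records explicitly.
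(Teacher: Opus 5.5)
Your proof is correct and follows the same route as the paper's: (N) gives nonemptiness, the (O) implications give upward closure along the root-to-Mid and Mid-to-Max steps, persistence gives monotonicity, and the conjunction clause gives the final equivalence. Your case (iii), which passes from the root to a Max-point through an intermediate Mid-point, simply spells out the paper's remark that these steps generate the order because every Max-point lies in some support.
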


\begin{proof}
Clause (N) prevents \(X\) from forcing all the point variables, so
\(G(X)\neq\varnothing\).  The (O) implications give upward closure
along each step from the root to a Mid-role and from a Mid-role to a
supported maximal role.  These steps generate the order, because every maximal
role appears in a support in the table defining \(P_D\).
Persistence of the variables gives \(G(Y)\subseteq G(X)\).
For the last assertion, \(X\) forces the conjunction iff
\(X\Vdash e_\lambda\) for every \(\lambda\notin U\), which is
equivalent to \(G(X)\subseteq U\).
\end{proof}

We can assign role \(\lambda\) to \(X\) when
\(G(X)=\up\lambda\). If \(G(X)\) has no least element,
\(X\) remains unassigned. The next lemma supplies subsets to
which we can assign the required roles.

\begin{lemma}[Principalization]
\label{lem:climb}
If \(\lambda\in G(X)\), some nonempty \(Y\subseteq X\) has
\(G(Y)=\up\lambda\).
\end{lemma}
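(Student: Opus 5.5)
The plan is to use clause (B) for \(\lambda\), together with the equivalence~\eqref{eq:failure-implication} and Lemma~\ref{lem:G}. We treat \(\lambda=r\) separately.

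\emph{Case \(\lambda\neq r\).} Suppose \(\lambda\in G(X)\), so \(X\not\Vdash e_\lambda\). Since clause (B) holds at \(X\) by persistence, \(X\Vdash(A_\lambda\to e_\lambda)\to e_\lambda\). Because \(X\) fails the consequent \(e_\lambda\), it must also fail the antecedent \(A_\lambda\to e_\lambda\). By \eqref{eq:failure-implication}, there is a nonempty \(Y\subseteq X\) with \(Y\Vdash A_\lambda\) and \(Y\not\Vdash e_\lambda\).

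The last assertion of Lemma~\ref{lem:G}, applied with \(U=\up\lambda\), turns \(Y\Vdash A_\lambda\) into \(G(Y)\subseteq\up\lambda\). Since \(\lambda\in G(Y)\) and \(G(Y)\) is an upset, we get \(\up\lambda\subseteq G(Y)\). Hence \(G(Y)=\up\lambda\).

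\emph{Case \(\lambda=r\).} Clause (B) is not available here, so a different argument is needed. We have \(\up r=P\), so we need a subset \(Y\) with \(G(Y)=P\). The natural choice is \(Y=X\) itself. Since \(r\in G(X)\) and \(G(X)\) is an upset by Lemma~\ref{lem:G}, every \(\gamma\in P\) satisfies \(r\leq\gamma\) and therefore lies in \(G(X)\). Thus \(G(X)=P=\up r\).

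The only delicate point is this separate treatment of \(r\). It requires the fact that \(r\) is the least element and that \(G(X)\) is upward closed. Upward closure along the order is exactly what Lemma~\ref{lem:G} derives from the (O) clauses, using that the order is generated by the steps from the root to the Mid-points and from the Mid-points to their supports. Everything else is a direct unwinding of the forcing clauses.
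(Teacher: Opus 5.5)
Your proposal is correct and follows the paper's proof essentially verbatim. The case \(\lambda=r\) takes \(Y=X\) by upward closure, and the case \(\lambda\neq r\) uses clause (B), equivalence~\eqref{eq:failure-implication}, and upward closure of \(G(Y)\) to turn \(\lambda\in G(Y)\subseteq\up\lambda\) into an equality.
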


\begin{proof}
If \(\lambda=r\), upward closure gives \(G(X)=P\), so we take
\(Y=X\).  Otherwise, (B) together with
\(X\not\Vdash e_\lambda\) gives
\(X\not\Vdash A_\lambda\to e_\lambda\).  By
\eqref{eq:failure-implication}, some nonempty \(Y\subseteq X\)
forces \(A_\lambda\) and fails \(e_\lambda\).
Thus \(\lambda\in G(Y)\subseteq\up\lambda\), and upward
closure makes the containment an equality.
\end{proof}

\begin{proposition}[Reverse recognition on the refuting world]
\label{prop:recognition-cone}
If \(W\Vdash\Gamma(P,\D)\) and \(W\not\Vdash e_r\), then
\((P,\D)\) has a realization on \(W\).
\end{proposition}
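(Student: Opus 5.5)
We take the carrier to be \(W\) itself and assign roles by the rule suggested before Lemma~\ref{lem:climb}: for nonempty \(X\subseteq W\), put \(\role(X)=\lambda\) iff \(G(X)=\up\lambda\), and leave \(X\) unassigned when \(G(X)\) is not principal. This is a partial function, since \(\up\lambda=\up\mu\) forces \(\lambda=\mu\) by antisymmetry. We then verify (H0)--(H2) in turn. For (H0), \(W\not\Vdash e_r\) gives \(r\in G(W)\), so \(G(W)=P=\up r\) by Lemma~\ref{lem:G}. For a singleton \(\{k\}\), its only nonempty subset is itself. Lemma~\ref{lem:G} makes \(G(\{k\})\) a nonempty upset. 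Pick any \(\chi\in G(\{k\})\), climb to a maximal element, and apply Lemma~\ref{lem:climb} with that maximal point: the only available subset is \(\{k\}\), so \(G(\{k\})=\up\chi=\{\chi\}\) for a Max-point \(\chi\). Thus \(\role(\{k\})\in\MaxPts\).

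For (H1)(a), suppose \(\role(A)=\lambda\) and \(\lambda\leq\gamma\). Then \(\gamma\in G(A)\), so Lemma~\ref{lem:climb} gives a nonempty \(C\subseteq A\) with \(G(C)=\up\gamma\), that is, \(\role(C)=\gamma\). For (H1)(b), suppose \(B\subseteq A\) with \(\role(A)=\lambda\) and \(\role(B)=\gamma\). Monotonicity of \(G\) gives \(\up\gamma\subseteq\up\lambda\), hence \(\lambda\leq\gamma\). Both directions are immediate from the recovered properties.

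The main step is (H2), and we handle it by imitating Lemma~\ref{lem:forbidden-trace} with \(G\) in place of traces. Fix a demand \(\{\lambda,\gamma\}\) and sets \(X,Y\) with roles \(\lambda,\gamma\). Put \(U=X\cup Y\subseteq W\). By monotonicity, \(\lambda,\gamma\in G(U)\), so \(U\not\Vdash e_\lambda\) and \(U\not\Vdash e_\gamma\). Hence \(U\) fails \(e_\lambda\vee e_\gamma\). Clause (S) holds at \(U\) by persistence, so \(U\not\Vdash A_{\lambda,\gamma}\). By Lemma~\ref{lem:G}, this gives some \(\delta\in G(U)\setminus(\up\lambda\cup\up\gamma)\). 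Lemma~\ref{lem:climb} then yields a nonempty \(Z\subseteq U\) with \(G(Z)=\up\delta\). This \(Z\) has \(\role(Z)=\delta\notin\up\lambda\cup\up\gamma\), which is the required response.

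The only delicate point is to remember that every clause is available at every subworld. This holds because \(\Gamma\) is forced at \(W\) and forcing is persistent, so we may use (B) inside Lemma~\ref{lem:climb} and (S) at \(U\). Also, the root role is not required to be unique, and (H1)(b) already prevents any proper subset with role \(r\) from causing trouble. With these checks, \((W,\role)\) is a finite realization.
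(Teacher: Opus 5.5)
Your proposal is correct and follows the paper's proof essentially step for step. You use the same assignment \(\role(X)=\lambda\) iff \(G(X)=\up\lambda\), principalization for (H1)(a) and for the singleton case of (H0), monotonicity of \(G\) for (H1)(b), and clause (S) at \(X\cup Y\) followed by principalization for (H2). The differences are cosmetic: for singletons you climb directly to a maximal visible point instead of arguing by contradiction, and you phrase the (H2) step contrapositively.
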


\begin{proof}
We assign a role precisely when the set of visible roles has a least
element:
\begin{equation}
 \role_W(X)=\lambda
 \quad\Longleftrightarrow\quad G(X)=\up\lambda.
\label{eq:old-4-29}
\end{equation}
Other worlds remain unassigned.  This defines a partial function,
since \(\up\lambda=\up\gamma\) entails both
\(\lambda\leq\gamma\) and \(\gamma\leq\lambda\), hence equality.

For (H0), the root is visible at \(W\), so \(G(W)=P=\up r\) and
\(\role_W(W)=r\). On a singleton \(\{k\}\), we choose any visible
\(\lambda\).  Principalization has only one nonempty subset to
choose, so \(G(\{k\})=\up\lambda\).  So the singleton receives
a role.  This role is maximal: if \(\gamma>\lambda\), then
\(\gamma\) is also visible, and principalization would give
\(G(\{k\})=\up\gamma\), contradicting
\(\lambda\notin\up\gamma\).  In particular, \(W\) itself
cannot be a singleton, since \(r\) is nonmaximal.

For (H1)(a), if \(\role_W(X)=\lambda\) and \(\lambda\leq\gamma\), then
\(\gamma\in G(X)\).  Principalization gives a nonempty
\(Y\subseteq X\) with \(G(Y)=\up\gamma\), hence role \(\gamma\).

For (H1)(b), suppose \(\role_W(X)=\lambda\),
\(\role_W(Y)=\gamma\), and \(Y\subseteq X\). Then
\(\up\gamma=G(Y)\subseteq G(X)=\up\lambda\).
Since \(\gamma\in\up\gamma\), we obtain \(\lambda\leq\gamma\).

For (H2), we take any demand \(\{\lambda,\gamma\}\in\D\)
and any sets \(X,Y\) with \(\role_W(X)=\lambda\) and
\(\role_W(Y)=\gamma\). We must find a nonempty subset of
\(X\cup Y\) whose assigned role lies outside the two input upsets.
Put \(U=X\cup Y\). Monotonicity gives
\(\up\lambda\cup\up\gamma\subseteq G(U)\). If equality held,
\(U\) would force \(A_{\lambda,\gamma}\) while failing both
\(e_\lambda\) and \(e_\gamma\), contrary to (S).
We therefore choose
\(\delta\in G(U)\setminus(\up\lambda\cup\up\gamma)\).
Lemma~\ref{lem:climb} gives a nonempty \(R\subseteq U\) with
\(G(R)=\up\delta\). By our assignment,
\(\role_W(R)=\delta\). Thus \(R\subseteq X\cup Y\) has the
required role and is a response to \(X,Y\).
\end{proof}

However, the formula may fail at a proper subset \(W\) of the
original carrier \(K\). What we have obtained is a realization on \(W\),
and still need one on \(K\). To obtain it, we map \(K\) onto
\(W\) and give each subset of \(K\) the role of its image,
whenever that image has a role.

\begin{lemma}[Lifting along a surjection]
\label{lem:carrier-lifting}
Let \(q:K\to W\) be a surjection of finite nonempty sets.
Given a realization \(\role_W\), define
\(\role_K(X):=\role_W(q[X])\), with the left side defined
iff the right side is.  This is a realization on \(K\).
\end{lemma}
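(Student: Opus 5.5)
We verify (H0)--(H2) for \(\role_K\) directly, using two elementary facts about direct images along the surjection \(q\). First, \(q[K]=W\), \(q[X]\neq\varnothing\) whenever \(X\neq\varnothing\), and \(X\subseteq Y\) implies \(q[X]\subseteq q[Y]\). Second, every nonempty \(C\subseteq q[X]\) is the image of the nonempty set \(X\cap q^{-1}[C]\), which lies inside \(X\). Well-definedness as a partial function on \(\PowPlus{K}\) is immediate, because \(\role_K\) is a composite of the function \(X\mapsto q[X]\) with the partial function \(\role_W\).

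For (H0), we have \(q[K]=W\), so \(\role_K(K)=\role_W(W)=r\). For \(k\in K\), the image \(q[\{k\}]=\{q(k)\}\) is a singleton of \(W\), so \(\role_K(\{k\})=\role_W(\{q(k)\})\in\MaxPts\). For (H1)(a), suppose \(\role_K(A)=\lambda\) and \(\lambda\leq\gamma\). Then (H1)(a) in \(W\) gives a nonempty \(C\subseteq q[A]\) with \(\role_W(C)=\gamma\). The pullback \(C'=A\cap q^{-1}[C]\) is nonempty and contained in \(A\), and surjectivity of \(q\) restricted to \(A\) onto \(q[A]\) gives \(q[C']=C\). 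Hence \(\role_K(C')=\gamma\). For (H1)(b), suppose \(B\subseteq A\) with both sets role-bearing in \(K\). Then \(q[B]\subseteq q[A]\), and (H1)(b) in \(W\) applied to the images gives the required order relation.

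For (H2), take a demand \(\{\lambda,\gamma\}\) and sets \(X,Y\subseteq K\) with \(\role_K(X)=\lambda\) and \(\role_K(Y)=\gamma\). The images \(q[X],q[Y]\) then bear these roles in \(W\), so (H2) in \(W\) supplies a nonempty role-bearing \(Z\subseteq q[X]\cup q[Y]=q[X\cup Y]\) with \(\role_W(Z)\notin\up\lambda\cup\up\gamma\). We pull back by setting \(Z'=(X\cup Y)\cap q^{-1}[Z]\). This set is nonempty, lies inside \(X\cup Y\), and satisfies \(q[Z']=Z\), so \(\role_K(Z')=\role_W(Z)\) is a legal response role.

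The only step needing real care is this pullback identity \(q[A\cap q^{-1}[C]]=C\) for \(C\subseteq q[A]\). Every element of \(C\) has a preimage inside \(A\), and conversely every element of the pulled-back set maps into \(C\). This identity is used in both (H1)(a) and (H2). Once it is stated, the rest is routine checking.
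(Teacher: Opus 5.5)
Your proposal is correct and follows essentially the same route as the paper: (H0) from $q[K]=W$ and images of singletons, (H1)(b) from monotonicity of direct images, and both (H1)(a) and (H2) from the pullbacks $A\cap q^{-1}[C]$ and $(X\cup Y)\cap q^{-1}[Z]$. Both rest on the identity $q[A\cap q^{-1}[C]]=C$ for $C\subseteq q[A]$, which you correctly single out as the one step needing care.
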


\begin{proof}
Since \(q[K]=W\), the carrier receives the root role.  A singleton maps
to a singleton, so it receives a maximal role.  This proves (H0).

For (H1)(a), suppose \(\role_K(X)=\lambda\) and
\(\lambda\leq\gamma\). Since \(\role_W(q[X])=\lambda\),
(H1)(a) on \(W\) supplies a nonempty \(C\subseteq q[X]\)
with \(\role_W(C)=\gamma\). We put \(Y=X\cap q^{-1}[C]\).
Each \(c\in C\) has a preimage in \(X\), which belongs to \(Y\),
so \(q[Y]=C\). Thus \(Y\) is nonempty, \(Y\subseteq X\),
and \(\role_K(Y)=\gamma\).
For (H1)(b), \(Y\subseteq X\) gives \(q[Y]\subseteq q[X]\),
and (H1)(b) on \(W\) gives
\(\role_K(X)\leq\role_K(Y)\).

For (H2), take \(X,Y\) with \(\role_K(X)=\lambda\),
\(\role_K(Y)=\gamma\), and \(\{\lambda,\gamma\}\in\D\).
Their images have these same roles on \(W\), so we take
\(R=\Resp(q[X],q[Y])\) in that realization. Put
\(Z=(X\cup Y)\cap q^{-1}[R]\).
The same preimage argument gives \(q[Z]=R\). Hence \(Z\) is a
nonempty subset of \(X\cup Y\), and
\(\role_K(Z)=\role_W(R)\notin\up\lambda\cup\up\gamma\).
It is therefore a response to \(X,Y\) on \(K\).
\end{proof}

\begin{theorem}[Recognition on a fixed carrier]
\label{thm:recognition}
For every finite nonempty \(K\),
\begin{equation}
 F(K)\not\models\alpha(P,\D)
 \quad\Longleftrightarrow\quad
 (P,\D)\text{ has a realization on }K.
\label{eq:old-4-34}
\end{equation}
\end{theorem}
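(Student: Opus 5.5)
The theorem follows by assembling the results already proved in this section; no new construction is needed. We prove the two implications separately.

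\emph{From right to left.} Suppose \((P,\D)\) has a realization on \(K\). Proposition~\ref{prop:recognition-forward} directly gives \(F(K)\not\models\alpha(P,\D)\). Concretely, under the trace valuation \(X\Vdash e_\lambda\) iff \(\lambda\notin\tr(X)\), the world \(K\) forces \(\Gamma(P,\D)\) and fails \(e_r\).

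\emph{From left to right.} Suppose \(F(K)\not\models\alpha(P,\D)\). Then there are a persistent valuation and a world \(W\in\PowPlus{K}\) with \(W\not\Vdash\Gamma(P,\D)\to e_r\). By \eqref{eq:failure-implication}, some nonempty \(W'\subseteq W\) forces \(\Gamma(P,\D)\) and fails \(e_r\); rename it \(W\). The valuation restricted to the nonempty subsets of \(W\) is a persistent valuation on \(F(W)\). Forcing at a world \(X\subseteq W\) depends only on the nonempty subsets of \(X\), so forcing in \(F(W)\) agrees with forcing in \(F(K)\) on these worlds. Proposition~\ref{prop:recognition-cone} therefore applies and yields a realization \(\role_W\) on \(W\).

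It remains to move this realization from \(W\) to \(K\). Since \(W\subseteq K\) is nonempty, fix some \(w_0\in W\). Define \(q:K\to W\) by \(q(k)=k\) for \(k\in W\) and \(q(k)=w_0\) otherwise. This map is a surjection of finite nonempty sets. Lemma~\ref{lem:carrier-lifting} then turns \(\role_W\) into a realization \(\role_K\) on \(K\), which proves \eqref{eq:old-4-34}.

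The only point needing care is the restriction step: checking that Proposition~\ref{prop:recognition-cone} legitimately applies to the refuting world \(W\) as a carrier in its own right. This holds because the Medvedev semantics is local, with the successors of \(X\) being exactly its nonempty subsets. Everything else is direct citation.
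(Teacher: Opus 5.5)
Your proposal is correct and follows the paper's proof step for step. Proposition~\ref{prop:recognition-forward} gives one direction. For the other, you extract a refuting subworld \(W\) via \eqref{eq:failure-implication}, apply Proposition~\ref{prop:recognition-cone} on \(W\), and lift along the surjection \(q:K\to W\) that fixes \(W\) and sends \(K\setminus W\) to \(w_0\). Your explicit remark that forcing in \(F(W)\) agrees with forcing in \(F(K)\) only spells out a point the paper leaves implicit.
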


\begin{proof}
If \((P,\D)\) has a realization on \(K\),
Proposition~\ref{prop:recognition-forward} gives
\(F(K)\not\models\alpha(P,\D)\).
Conversely, frame invalidity supplies a valuation and, by
\eqref{eq:failure-implication}, a nonempty \(W\subseteq K\)
with \(W\Vdash\Gamma(P,\D)\) and \(W\not\Vdash e_r\).
Proposition~\ref{prop:recognition-cone} gives a realization on \(W\).
Choose \(w_0\in W\), and define \(q:K\to W\) by fixing
\(W\) and sending every element of \(K\setminus W\) to \(w_0\).
This map is surjective, so Lemma~\ref{lem:carrier-lifting}
gives a realization on \(K\).
\end{proof}

\begin{corollary}[Finite recognition]
\label{cor:recognition-finite}
The pair \((P,\D)\) has a finite realization iff some
finite Medvedev frame refutes \(\alpha(P,\D)\).  If finite
realizations exist, their carrier sizes are exactly the integers
\(n\geq n_0\), for some \(n_0\geq2\).
\end{corollary}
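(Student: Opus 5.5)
The first assertion follows directly from Theorem~\ref{thm:recognition}, applied carrier by carrier. A finite realization lives on some finite nonempty carrier \(K\). By Theorem~\ref{thm:recognition}, it then gives \(F(K)\not\models\alpha(P,\D)\), and \(F(K)\) is isomorphic to \(F_{|K|}\), a finite Medvedev frame. Conversely, suppose some \(F_n\) refutes \(\alpha(P,\D)\). Applying the same theorem to the carrier \(\{0,\ldots,n-1\}\) yields a realization on that carrier. Isomorphic carriers give isomorphic frames, and transporting a realization along a bijection preserves (H0)--(H2), so the statement is independent of the particular carrier chosen.

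For the second assertion, let \(S\) be the set of integers \(n\) such that a realization exists on some (equivalently, every) carrier of size \(n\). Assume \(S\neq\varnothing\) and put \(n_0=\min S\). I will show two things.

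\begin{enumerate}[label=\textup{(\roman*)},leftmargin=2.3em,nosep]
\item \(S\) is closed upwards.
\item \(n_0\geq2\).
\end{enumerate}

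For upward closure, take a realization on \(W\) with \(|W|=n\in S\), and let \(K\supseteq W\) have size \(n'\geq n\). Define \(q:K\to W\) as in the proof of Theorem~\ref{thm:recognition}: it fixes \(W\) and sends \(K\setminus W\) to a chosen point \(w_0\in W\). This map is surjective, so Lemma~\ref{lem:carrier-lifting} gives a realization on \(K\), and hence \(n'\in S\). For the lower bound, the remark after Fact~\ref{fact:parts-union} already shows that every realization has \(|K|\geq2\). The reason is that on a one-point carrier, \(K\) would itself be a singleton, so it would need both the root role \(r\) and a Max-role. This is impossible because \(\role\) is a partial function and \(r\) is not maximal. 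Combining (i) and (ii), \(S=\{n:n\geq n_0\}\) with \(n_0\geq2\).

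No step here is a genuine obstacle, since the substantive work is already done in Theorem~\ref{thm:recognition} and Lemma~\ref{lem:carrier-lifting}. The one point that needs care is the meaning of ``carrier sizes are exactly the integers \(n\geq n_0\)''. This should be read under the hypothesis that finite realizations exist, and as saying that the size set is a final segment of \(\mathbb N\). Both facts come from the minimum being well defined together with upward closure under surjective lifting.
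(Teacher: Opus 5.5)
Your proof is correct and follows essentially the same route as the paper. You get the equivalence from Theorem~\ref{thm:recognition}, upward closure of the carrier sizes by mapping a larger carrier onto a smaller one and applying Lemma~\ref{lem:carrier-lifting}, and \(n_0\geq2\) from (H0) because the root is nonmaximal; your extra remark about transporting realizations along bijections is harmless bookkeeping that the paper leaves implicit.
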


\begin{proof}
The equivalence follows from the theorem.  Every larger finite carrier
maps onto a given one, so lifting makes the possible sizes upward
closed.  Their least member is at least two by (H0), since the root
is nonmaximal.
\end{proof}

\section{The complexity of Medvedev logic}
\label{sec:assembly}

It remains to combine torus tilability, finite realizability, and
failure of the recognizing formula into a proof of \(\Pi^0_1\)-hardness.
Membership in \(\Pi^0_1\) follows by searching
\(F_1,F_2,\ldots\) for a countermodel to an input formula.

We use effective natural-number encodings of formulas, finite Wang
systems, and Turing machines. A set of codes is \emph{recursively
enumerable} (r.e.), or \emph{semidecidable}, if an algorithm halts
on exactly its members. It may run forever on a nonmember.
A set \(A\) belongs to \(\Pi^0_1\) when its complement is r.e.,
equivalently, when \(e\in A\) holds exactly if \(R(e,s)\) holds
for every \(s\in\mathbb N\), for some decidable relation \(R\).
A \emph{decision procedure} must halt on every input and correctly
determine membership.

A \emph{computable many-one reduction} from \(A\) to \(B\) is a
function \(f\), computed by an algorithm that halts on every input,
such that \(e\in A\) iff \(f(e)\in B\). A set is
\(\Pi^0_1\)-\emph{hard} if every \(\Pi^0_1\) set reduces to it,
and \(\Pi^0_1\)-\emph{complete} if it is also in \(\Pi^0_1\).
For the lower bound, it suffices to reduce nonhalting on empty input
to membership in \(\ML\), since nonhalting is
\(\Pi^0_1\)-complete.\footnote{Given \(e\), construct a machine
that tests \(R(e,0),R(e,1),\ldots\) and halts at the first failure.
It runs forever iff \(e\in A\).}

The reduction uses the following form of the periodic domino theorem.
The result is due to Gurevich and Koryakov \cite{GurevichKoryakov1972}.
We use Jeandel's construction \cite{Jeandel2010}, which yields it
directly in the form needed here.

\begin{theorem}[Effective periodic domino theorem]
\label{thm:import}
From an index of a Turing machine \(M\), one can compute a finite Wang
system \(D_M\) such that
\begin{equation}
 M\text{ halts on the empty input}
 \quad\Longleftrightarrow\quad
 D_M\text{ tiles a finite torus}.
\label{eq:old-5-2}
\end{equation}
\end{theorem}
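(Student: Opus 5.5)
This theorem is imported rather than proved from scratch. Following Jeandel's treatment, the plan is to build, from a machine \(M\), a Wang system whose periodic tilings exist exactly when \(M\) halts, and to check that the construction is uniform in the index of \(M\). The classical Berger/Robinson route gives a tile set with no tiling iff \(M\) halts. That is the wrong polarity, and it concerns arbitrary tilings of the plane. We therefore need the Gurevich--Koryakov refinement, which separates \emph{periodically tilable} systems from \emph{not tilable at all} ones. The reduction must send halting machines into the first class and nonhalting machines into the second.

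\textbf{Step 1: a halting machine yields a periodic tiling.} Fix \(M\) and encode its space-time diagram by Wang tiles in the standard way. Rows represent configurations, tiles carry a tape symbol and possibly a head state, and \(H,V\) enforce the local transition rule. Add a restart mechanism: when the halting state appears, the next row is forced to be a fresh initial configuration on empty input. If \(M\) halts after \(s\) steps using at most \(w\) cells, the finite halting computation, padded to a suitable width and followed by a restart, forms a rectangle whose top boundary matches its bottom boundary and whose left boundary matches its right boundary. Repeating it gives a torus tiling of dimensions roughly \((w+c)\times(s+c)\).

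\textbf{Step 2: a nonhalting machine admits no periodic tiling.} This is the main obstacle. A periodic tiling of the plane is a torus tiling, so any legal tiling must be forced to contain genuine computations started from the empty input. A periodic tiling would then have to contain a complete computation within one period, hence a halting one. Spurious periodic tilings have to be excluded, for instance tilings containing no head at all, or heads running in mid-configuration forever in a loop. My first attempt would be Jeandel's device: use an aperiodic hierarchical substrate, such as Robinson's tiles, to force computation zones of every size, each started from a blank initial configuration. Within a torus, all zones are bounded by the period. A zone larger than the period cannot exist, yet the hierarchy forces such zones unless a computation halts and triggers the restart that breaks the hierarchy into a periodic pattern. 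Together with Step 1, this shows that \(D_M\) tiles a finite torus iff \(M\) halts.

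\textbf{Step 3: computability of \(M\mapsto D_M\).} The tile set and its matching relations are written down explicitly from the transition table of \(M\), so this map is computable. The proof of the theorem therefore cites \cite{GurevichKoryakov1972} and \cite{Jeandel2010}. The only thing to check against those sources is that their statement gives exactly equivalence~\eqref{eq:old-5-2} with the finite-torus notion of Definition~\ref{def:torus}. A Wang tiling that is periodic in two independent directions can be converted into one with horizontal and vertical periods, which is a torus tiling.
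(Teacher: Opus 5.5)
Your final plan, citing Gurevich--Koryakov in Jeandel's form and translating periodic plane tilings into torus tilings, is the paper's route. However, the proposal contains one false claim and skips most of the bookkeeping that makes up the paper's proof.

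The false claim is your opening framing. No computable \(M\mapsto D_M\) can send halting machines to periodically tilable systems and nonhalting machines to systems with no tiling at all. By compactness, a Wang system fails to tile the plane iff some finite square cannot be tiled, so non-tilability is recursively enumerable. Periodic tilability is recursively enumerable too, by searching for a torus tiling. Such a map would therefore make nonhalting recursively enumerable, and hence make halting decidable. The Gurevich--Koryakov theorem is a recursive-inseparability statement, not a reduction of that kind. The theorem only needs, and Jeandel's construction only gives, that a nonhalting \(M\) yields a system with no \emph{periodic} tiling; aperiodic tilings may exist. Your own Step~2 in fact relies on the aperiodic substrate tiling the plane in the nonhalting case, so it contradicts your framing. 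Also, Jeandel's construction takes an arbitrary aperiodic tileset as a black box, not a hierarchical substrate forcing zones of every size. Your Steps~1--2 are therefore heuristics for a different construction. This is harmless only because the result is cited rather than proved.

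It is not true that the periodicity notion is the only thing to check. You must:
\begin{itemize}
\item fix the aperiodic input tileset once and for all (the paper takes Culik's thirteen tiles), so that \(D_M\) is computable from \(M\) alone;
\item effectively put \(M\) into Jeandel's machine format and prepend two dummy transitions, which preserves halting on the empty input and ensures that a halting computation has at least two steps, as his construction requires;
\item read \((T,H,V)\) off his coloured tiles by matching east with west and north with south colours.
\end{itemize}
The periodicity translation also needs both directions. A tiling with \(c_{i,j}=c_{i+p,j}=c_{i,j+p}\) descends to \(\mathbb Z_p\times\mathbb Z_p\). Conversely, a tiling of \(\mathbb Z_m\times\mathbb Z_n\) unfolds to a periodic plane tiling with \(p=\operatorname{lcm}(m,n)\). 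You give only the first direction, in a more general two-independent-periods form. That form is correct, because the period lattice then has finite index and so contains vectors \((N,0)\) and \((0,N)\).
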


\begin{proof}[Explanation of the formulation]
We first put \(M\) into the machine format used in the construction
and prepend two dummy transitions. This effective change preserves
halting and ensures that any halting computation has at least two
steps, as the construction requires. Jeandel's construction takes as
input an arbitrary aperiodic tileset, which we fix once and for
all,\footnote{An \emph{aperiodic} tileset tiles the plane but admits
no periodic tiling. Jeandel's construction requires this property of
its input and does not itself produce aperiodic tilesets. Any
aperiodic tileset may be used, and the choice does not affect the
theorem. Such tilesets exist by Berger's theorem \cite{Berger1966}.
Robinson \cite{Robinson1971} gave the first small explicit set, and
Kari \cite{Kari1996} and Culik \cite{Culik1996} gave aperiodic sets
of fourteen and thirteen Wang tiles. We take Culik's tiles. With the
tileset fixed, the construction is uniform in \(M\), so \(D_M\) is
computable from an index of \(M\).} and produces a finite tileset
admitting a periodic plane tiling iff the padded machine
halts. We convert it to \((T,H,V)\) by making \(H\) match east
colours to west colours, and \(V\) match north colours to south
colours. These relations are computable from the tiles. Jeandel calls
a plane tiling \(c\) \emph{periodic} when
\(c_{i,j}=c_{i+p,j}=c_{i,j+p}\) for some \(p\geq1\) and all
\(i,j\). Such a tiling descends to a torus tiling of
\(\mathbb Z_p\times\mathbb Z_p\). Conversely, a tiling of
\(\mathbb Z_m\times\mathbb Z_n\) extends to a periodic plane
tiling with \(p=\operatorname{lcm}(m,n)\).
\end{proof}

\begin{proposition}
\label{prop:upper}
The set \(\ML\) belongs to \(\Pi^0_1\).
\end{proposition}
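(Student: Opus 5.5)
We show that the complement of \(\ML\) is recursively enumerable. By \eqref{eq:old-5-1}, a formula \(\varphi\) lies outside \(\ML\) exactly when some finite Medvedev frame \(F_n\) with \(n\geq1\) refutes it. So it suffices to show that the relation ``\(F_n\not\models\varphi\)'' is decidable, uniformly in \(n\) and \(\varphi\). Given that, the procedure that checks \(n=1,2,\ldots\) in turn halts exactly on the non-members of \(\ML\). Equivalently, \(\varphi\in\ML\) iff \(R(\varphi,n)\) holds for every \(n\), where \(R(\varphi,n)\) means \(F_n\models\varphi\). Then \(R\) is the decidable relation required by the definition of \(\Pi^0_1\). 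The number \(n\) here is shifted by one so that it ranges over \(\mathbb N\).

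The decidability of \(R\) is a finite check, and I would argue it as follows. The frame \(F_n\) has \(2^n-1\) worlds, and the partial order on them (reverse inclusion) is computable. Only the finitely many variables occurring in \(\varphi\) affect its truth value. For each such variable, a persistent valuation is a downward-closed family of nonempty subsets, that is, a family closed under nonempty subsets. There are finitely many such families, and they can be listed effectively by enumerating all families of worlds and keeping the persistent ones.

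For each valuation in this list, I compute the set of worlds forcing each subformula of \(\varphi\), proceeding by recursion on the subformulas. The forcing clauses (i)--(v) of Section~\ref{sec:recognition} are used directly. The implication clause quantifies only over the finitely many nonempty subsets of a world, so each step is a bounded computation. Then \(F_n\models\varphi\) holds iff every valuation in the list makes every world force \(\varphi\), and this is a finite conjunction of computable tests.

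I do not expect any real obstacle. The only point needing care is to state explicitly that frame validity quantifies over all valuations, of which there are infinitely many because the language has infinitely many variables. Validity depends only on the restriction of a valuation to the variables of \(\varphi\), and the forcing clauses show this by a routine induction on subformulas. This reduces the search to finitely many valuations and makes the search over \(n\) effective.
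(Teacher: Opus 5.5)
Your proof is correct and follows the paper's own argument: search \(F_1,F_2,\ldots\), enumerate on each frame the finitely many persistent valuations of the variables occurring in \(\varphi\), and compute forcing recursively, with the implication clause a finite check over nonempty subsets. Your additional remarks, namely the explicit decidable relation \(R(\varphi,n)\) and the observation that validity depends only on the restriction of a valuation to the variables of \(\varphi\), make explicit what the paper leaves implicit.
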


\begin{proof}
Given a formula \(\varphi\), we search \(F_1,F_2,\ldots\) in order.
On each finite frame, we list its upsets by testing all sets of worlds.
For the finitely many variables occurring in \(\varphi\), all
assignments of these upsets form a finite, effectively enumerable list
of persistent valuations.

For each valuation and each world, we compute forcing recursively on
\(\varphi\). The implication clause is a finite check over the
nonempty subsets of that world. We halt if any world fails \(\varphi\).
Every stage of the search is finite, and the search halts iff
a finite Medvedev frame refutes \(\varphi\). It therefore semidecides
the complement of \(\ML\), as required.
\end{proof}

\begin{theorem}[Main theorem]
\label{thm:main}
Medvedev logic is \(\Pi^0_1\)-complete under computable many-one
reductions.
\end{theorem}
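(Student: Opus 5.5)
The proof splits into the upper bound and the lower bound, and the upper bound is already Proposition~\ref{prop:upper}. It remains to show \(\Pi^0_1\)-hardness. I would do this by a computable many-one reduction from the nonhalting problem on empty input, which is \(\Pi^0_1\)-complete. By the footnote in Section~\ref{sec:assembly}, that suffices.

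Given an index of a Turing machine \(M\), the reduction proceeds in three steps. First, Theorem~\ref{thm:import} computes the finite Wang system \(D_M=(T,H,V)\). Second, Definition~\ref{def:wmpair} gives the code \(\WM(D_M)=(P_{D_M},\D_{D_M})\). Third, the recognizing formula is \(f(M):=\alpha(P_{D_M},\D_{D_M})\), built from the clauses (N), (O), (B), (S). Each step is effective. The code is obtained by explicitly listing \(11|T|+12\) Mid-points, \(4|T|+8\) Max-points, the order table and the demand families (D1)--(D7), which needs only the complements of \(H\) and \(V\) in \(T^2\). The formula has one variable per role and polynomially many clauses. So \(f\) is total computable.

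For correctness, I would chain the equivalences:
\[
M\text{ halts}
\iff D_M\text{ tiles a finite torus}
\iff \WM(D_M)\text{ has a finite realization}
\iff \text{some } F(K)\not\models f(M).
\]
The first equivalence is Theorem~\ref{thm:import}. The second combines Theorem~\ref{thm:realization-torus} with Theorem~\ref{thm:torus-realization}. The third is Corollary~\ref{cor:recognition-finite}.

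Every finite nonempty carrier \(K\) gives a frame isomorphic to some \(F_n\), and \(\ML\) is defined as validity on all \(F_n\). Hence the last condition is exactly \(f(M)\notin\ML\). Negating the chain gives \(M\) does not halt iff \(f(M)\in\ML\), which is the required reduction.

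There is no real obstacle, since the substantive work sits in the cited results. The step needing care is checking that the composite is genuinely uniform and total. Theorem~\ref{thm:import} must be applied in its effective form, and the code must be defined for every finite Wang system, including degenerate ones where \(H\) or \(V\) is all of \(T^2\) and the mismatch families are empty. Since the definition of \(\WM(D)\) and the formula \(\alpha\) make no such assumptions, the reduction is total.

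Finally, I would note the consequences. A \(\Pi^0_1\)-complete set is not r.e., since otherwise nonhalting would be r.e. Hence \(\ML\) is undecidable and has no r.e.\ sound and complete calculus.
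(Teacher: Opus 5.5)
Your proposal is correct and follows the paper's proof essentially step for step. Membership comes from Proposition~\ref{prop:upper}, and hardness comes from the computable composite \(M\mapsto D_M\mapsto\alpha_{D_M}\), justified by the same chain: Theorem~\ref{thm:import}, then Theorems~\ref{thm:realization-torus} and~\ref{thm:torus-realization}, then Corollary~\ref{cor:recognition-finite}. Your remarks on isomorphic carriers and on the degenerate cases \(H=T^2\) or \(V=T^2\) are harmless checks that the paper leaves implicit.
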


\begin{proof}
For a finite Wang system \(D\), we put
\(\alpha_D=\alpha(P_D,\D_D)\), using the recognizing formula of
Section~\ref{sec:recognition}. The transformation
\(D\mapsto\alpha_D\) is computable. Its steps list the roles, read
off their supports, list the demands, and write the corresponding
formula clauses. Each list is obtained by finite loops over \(T\),
\(T^2\setminus H\), and \(T^2\setminus V\). In the explicit
relational presentation of \(D\), the resulting formula has
polynomial size.

The preceding sections establish
\begin{equation}
\begin{aligned}
 D\text{ tiles a finite torus}
 &\quad\Longleftrightarrow\quad
 \WM(D)\text{ has a finite realization}\\
 &\quad\Longleftrightarrow\quad
 \alpha_D\notin\ML.
\end{aligned}
\label{eq:old-5-6}
\end{equation}
The first equivalence combines
Theorems~\ref{thm:realization-torus} and~\ref{thm:torus-realization}.
The second is Corollary~\ref{cor:recognition-finite} applied to
\(\WM(D)\).

Given \(M\), we compute \(D_M\) by Theorem~\ref{thm:import}
and then \(\alpha_{D_M}\). Combining that theorem with
\eqref{eq:old-5-6} and taking complements gives
\begin{equation}
 M\text{ does not halt on the empty input}
 \quad\Longleftrightarrow\quad
 \alpha_{D_M}\in\ML.
\label{eq:old-5-8}
\end{equation}
This computable many-one reduction proves \(\Pi^0_1\)-hardness.
Proposition~\ref{prop:upper} establishes membership in \(\Pi^0_1\).
\end{proof}

\begin{corollary}
\label{cor:nore}
Medvedev logic is undecidable and is not recursively enumerable. In
particular, it admits no recursively enumerable sound and complete
proof calculus.
\end{corollary}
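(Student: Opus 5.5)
The corollary follows from Theorem~\ref{thm:main} together with standard facts of computability theory. The plan has three short steps, and the only real content is the non-r.e.\ claim.

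\textbf{Step 1: \(\ML\) is not r.e.} I would argue by contradiction. Suppose \(\ML\) were recursively enumerable. Proposition~\ref{prop:upper} shows that its complement is r.e., so by Post's theorem \(\ML\) would be decidable. Composing the reduction \(M\mapsto\alpha_{D_M}\) of \eqref{eq:old-5-8} with that decision procedure would then decide nonhalting on the empty input, and hence halting. That contradicts the undecidability of the halting problem.

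Alternatively, apply the same reduction directly without Post's theorem. If \(\ML\) were r.e., the set of machines \(M\) with \(\alpha_{D_M}\in\ML\) would be r.e. This is exactly the set of nonhalting machines. Halting is also r.e., so halting would be decidable, which is impossible. I would present this direct version, since it uses only the computability of \(M\mapsto\alpha_{D_M}\) stated in the proof of Theorem~\ref{thm:main}.

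\textbf{Step 2: undecidability.} This is immediate. Every decidable set is r.e.: run the decision procedure and halt exactly on acceptance. So by Step~1 \(\ML\) is not decidable. (The text already says \emph{a fortiori}.)

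\textbf{Step 3: no proof calculus.} I would make the notion precise first. A recursively enumerable proof calculus is one whose set of derivations is r.e., equivalently one with an algorithm that enumerates its theorems. Soundness and completeness for \(\ML\) mean that its theorems are exactly the members of \(\ML\). The theorem set would then be an r.e.\ set equal to \(\ML\), contradicting Step~1.

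I expect no real obstacle. The only point needing care is fixing this definition of an ``r.e.\ proof calculus'', so that ``sound and complete'' translates exactly into ``its set of theorems equals \(\ML\)''. It is also worth noting that any calculus with a decidable derivability relation on finite derivations has an r.e.\ theorem set, by enumerating the derivations.
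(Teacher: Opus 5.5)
Your proposal is correct and follows essentially the paper's argument: the paper also reduces nonhalting to \(\ML\) via \eqref{eq:old-5-8}, and it rules out recursive enumerability by running the semidecision procedure of Proposition~\ref{prop:upper} in parallel with a hypothetical enumeration of \(\ML\), which is your Post's-theorem version (the paper merely derives undecidability first and non-recursive-enumerability second). Your Step~3 is also the paper's argument, and your preferred direct variant, which pulls an enumeration of \(\ML\) back to an enumeration of the nonhalting machines, is equally valid and has the minor advantage of using only the hardness reduction, not the \(\Pi^0_1\) upper bound.
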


\begin{proof}
A decision procedure for \(\ML\), composed with the reduction in
\eqref{eq:old-5-8}, would decide nonhalting, and \emph{eo ipso} the halting problem,
which is impossible. Now suppose \(\ML\) were r.e.
Its complement is already r.e. by Proposition~\ref{prop:upper}, so
we could run both membership searches in parallel. Exactly one would
halt, providing a decision procedure, again a contradiction. Finally,
a recursively enumerable sound and complete proof calculus would
enumerate exactly the formulas in \(\ML\), which has just been
excluded.
\end{proof}

\section*{Use of generative AI}

This work began as an attempt to establish the decidability of
Medvedev logic. The initial investigation relied heavily on generative
AI systems for proof search. During the dialogue and prompting I obtained numerous partial results and
simplifications, but these eventually suggested that the project should
proceed in the opposite direction, namely to prove undecidability. The initial construction, the proof sketches, and the idea of using tilings were obtained from generative AI systems. The construction was further honed in cooperation with AI systems. In that process I subsequently reconstructed, revised and clarified the preliminary results. Furthermore, with the help of generative AI I obtained a formalization of the paper in Lean up to Section~\ref{sec:assembly}, available at \url{https://github.com/Haptism89/medvedev-lean}. For the computability results of Section~\ref{sec:assembly}, I was unable, even with the help of AI, to reconstruct the halting problem for a machine model slightly different from the one used by default in the standard mathematical library of Lean. These results about the halting problem are standard, however, and play no role in the correctness of the paper.

For proof search, prompting and dialogue I used models such as Fable,
Sonnet, Opus, and ChatGPT~5.6 SOL. The same models also produced parts
of the \LaTeX{} source, including the diagrams, and were used to
improve the language throughout the paper.

This paper is the first part of a broader project in which I use
generative AI to address open problems in philosophical logic, in
particular questions of decidability and the theory of
non-deterministic matrix semantics. I have already obtained partial
results in both areas which I find promising, but they have not yet
been checked. Personally, I find this time quite interesting, given
the possibilities of interaction between human researchers and
generative AI. These possibilities seem to me practically limitless,
and I would greatly appreciate finding like-minded researchers to
help with the ideas mentioned above. Two kinds of help in particular
would be welcome. My knowledge of Lean
is basic, and the repository accompanying this paper is my first, so I
would be glad to cooperate with anyone experienced in formalization.
I would equally be glad to join forces with anyone interested in the
logical topics above, so that we can work on these questions together.
Please contact me at
\href{mailto:haptism89@gmail.com}{haptism89@gmail.com}.

\appendix
\section{Proofs of the opening lemmas and the response-role table}
\label{app:opening-proofs}

We use the notation of Section~\ref{sec:wmpair}, fixing an arbitrary
finite realization for the two lemma proofs.

\begingroup
\small
\setlength{\abovedisplayskip}{5pt plus 2pt minus 1pt}
\setlength{\belowdisplayskip}{5pt plus 2pt minus 1pt}
\setlength{\abovedisplayshortskip}{3pt plus 1pt}
\setlength{\belowdisplayshortskip}{3pt plus 1pt}

\phantomsection\label{proof:basic-concrete}
\begin{proof}[Proof of Lemma~\ref{lem:basic-concrete}]
For (a), choose \(k\in C\), and let \(\xi\) be the Max-role of
\(\{k\}\).  Since \(\{k\}\subseteq C\), condition (H1)(b) gives
\(\chi\leq\xi\).  Both roles are maximal, so \(\chi=\xi\).

For the first direction of (b), choose \(k\in A\), and let \(\chi\) be
the Max-role of \(\{k\}\).  The containment \(\{k\}\subseteq A\) and
(H1)(b) give \(\lambda\leq\chi\), hence
\(\chi\in\MaxOf{\lambda}\).  Conversely, fix
\(\chi\in\MaxOf{\lambda}\).  Since \(\lambda<\chi\), condition
(H1)(a) supplies a nonempty set \(C\subseteq A\) with
\(\role(C)=\chi\).  Part (a) shows that every element of \(C\) has
singleton role \(\chi\), so at least one such element occurs in \(A\).

For (c), if \(B\subseteq A\), condition (H1)(b) gives
\(\lambda\leq\gamma\), contrary to the hypothesis.
\end{proof}

\phantomsection\label{proof:response-test}
\begin{proof}[Proof of Lemma~\ref{lem:response-test}]
Every singleton role in \(X\cup Y\) lies in
\(\MaxOf{\lambda}\cup\MaxOf{\gamma}\), by
Lemma~\ref{lem:basic-concrete}(b).

For (a), inspection of (D1)--(D7) shows that the union of the input
supports always omits at least one of the two step keys
\(\keyc{\StepL{h}},\keyc{\StepL{v}}\). Call an omitted key
\(\kappa\). If \(Z\) had role \(r\), then (H1)(a) would give a
nonempty subset of \(Z\) with role \(\kappa\). By
Lemma~\ref{lem:basic-concrete}(a), that subset would contain elements
of singleton role \(\kappa\), which are absent from \(X\cup Y\).
Thus \(\delta\ne r\). If \(\delta\) were a Max-role, the same
lemma would make every element of \(Z\) have singleton role \(\delta\).
Hence \(\delta\) would belong to an input support and therefore to
\(\up\lambda\cup\up\gamma\), contrary to legality. Only Mid-roles remain.

For (b), every singleton role in \(\MaxOf{\delta}\) occurs in \(Z\), by
Lemma~\ref{lem:basic-concrete}(b). Since \(Z\subseteq X\cup Y\),
it belongs to one of the input supports.

For (c), suppose \(\MaxOf{\delta}\) misses
\(\MaxOf{\lambda}\). Then \(Z\cap X=\varnothing\), because an
element in the intersection would have a singleton role in both supports.
Consequently \(Z\subseteq Y\). By (H1)(b),
\(\gamma\leq\delta\), contradicting legality. Exchanging \(X\)
and \(Y\) proves the other intersection is nonempty.
\end{proof}

\phantomsection\label{proof:forced-table}
\begin{proof}[Proof of Proposition~\ref{prop:forced-table}]
For an input pair \(\{\lambda,\gamma\}\), put
\(U=\MaxOf{\lambda}\cup\MaxOf{\gamma}\) and
\(B(U)=\{\delta\in\Mid:\MaxOf{\delta}\subseteq U\}\).
Thus \(B(U)\) lists all Mid-roles whose required singleton roles are
available in the input supports. A role \(\delta\in B(U)\) is
admissible iff
\[
 \delta\notin\{\lambda,\gamma\},\qquad
 \MaxOf{\delta}\cap\MaxOf{\lambda}\ne\varnothing,\qquad
 \MaxOf{\delta}\cap\MaxOf{\gamma}\ne\varnothing.
\]
The first condition is legality, since distinct Mid-roles are
incomparable. We compute \(B(U)\) from the support table, then apply
these three conditions. All private keys are distinct, including keys
for different roles with the same tile index.

\emph{(M1) and (M2).}
In both cases \(U=\{\sa{h},\nx{h},\tg{h}\}\), and
\(B(U)=\allowbreak\{\CPos,\allowbreak\CSep,\allowbreak\NFL{h},\allowbreak\SFL{h}\}\).
For (M1), legality removes \(\CPos\) and \(\NFL{h}\).
The role \(\SFL{h}\) misses the support
\(\{\nx{h}\}\) of the second input, whereas \(\CSep\) meets
both input supports. Hence only \(\CSep\) is admissible.
For (M2), legality removes \(\CSep\) and \(\SFL{h}\).
The role \(\NFL{h}\) misses the support
\(\{\sa{h}\}\) of the second input, whereas \(\CPos\) meets
both input supports. Hence only \(\CPos\) is admissible.

\emph{(M5).}
Here \(U\) consists of the four Max-roles above a cell role and the two
tags. Since it contains no private key, the complete candidate set is
\[
\begin{split}
 B(U)=\{&\CPos,\CSep,\NFL{h},\SFL{h},\\
         &\RPos,\RSep,\NFL{v},\SFL{v}\}
          \cup\{\CellL{s}:s\in T\}.
\end{split}
\]
Each of the first four roles misses the row input support, and each of
the next four misses the column input support. Every cell role is
distinct from the inputs. Its support meets the column support in
\(\{\sa{h},\nx{h}\}\) and the row support in
\(\{\sa{v},\nx{v}\}\). Thus exactly the cell roles are admissible.

\emph{(M6).}
Fix one of the four projection roles \(G_t\). The union \(U\)
consists of the four Max-roles above a cell role and \(\keyc{G_t}\), so
\[
 B(U)=\{\NFL{h},\SFL{h},\NFL{v},\SFL{v},\KeyL{G_t},G_t\}
       \cup\{\CellL{s}:s\in T\}.
\]
Indeed, a generator requires a tag, and every other keyed role
requires an unavailable key. In particular, if \(G_t\) is a source,
its expected-neighbour role also requires a step key.
Meeting the provider support \(\{\keyc{G_t}\}\) leaves only
\(\KeyL{G_t}\) and \(G_t\). Legality removes the former.
The latter is distinct from both inputs, and its support meets the
cell support in two Max-roles. Hence only \(G_t\) is admissible.

\emph{(M7).}
Here \(U=\{\sa{h},\nx{h},\tg{h},\keyc{\StepL{h}}\}\), and
\[
 B(U)=\{\CPos,\CSep,\NFL{h},\SFL{h},
          \KeyL{\StepL{h}},\StepL{h}\}.
\]
An expected-neighbour role is excluded because its source key is
absent. Meeting the provider support leaves only
\(\KeyL{\StepL{h}}\) and \(\StepL{h}\); legality removes the
provider. The step role meets the separator support in
\(\{\sa{h},\nx{h}\}\), so it is admissible and unique.

\emph{(M9).}
The input union is
\(U=\{\sa{h},\sa{v},\nx{h},\keyc{\SrcL{h}{t}},
\keyc{\StepL{h}}\}\).
There is no tag or \(\nx{v}\), and the only available keys are
those of this source and this step. The support table therefore gives
\[
\begin{split}
 B(U)=\{&\SFL{h},\NFL{h},\SFL{v},
         \SrcL{h}{t},\StepL{h},\ExpL{h}{t},\\
       &\KeyL{\SrcL{h}{t}},\KeyL{\StepL{h}}\}.
\end{split}
\]
Legality removes the source and step inputs. The roles
\(\NFL{h}\) and \(\KeyL{\StepL{h}}\) miss the source support;
\(\SFL{v}\) and \(\KeyL{\SrcL{h}{t}}\) miss the step support.
The two remaining roles pass both intersection tests:
\(\SFL{h}\) uses the common Max-point \(\sa{h}\), while
\(\ExpL{h}{t}\) meets the source support in
\(\{\sa{v},\keyc{\SrcL{h}{t}}\}\) and the step support in
\(\{\nx{h},\keyc{\StepL{h}}\}\).
Thus precisely \(\SFL{h}\) and \(\ExpL{h}{t}\) are admissible.

\emph{(M11).}
For any \(t,u\in T\), the supports of
\(\ExpL{h}{t}\) and \(\NbrL{h}{u}\) have union
\(U=\{\nx{h},\sa{v},\keyc{\SrcL{h}{t}},
\keyc{\StepL{h}},\keyc{\NbrL{h}{u}}\}\).
Since \(\sa{h}\), \(\nx{v}\), and both tags are absent,
\[
\begin{split}
 B(U)=\{&\NFL{h},\SFL{v},\ExpL{h}{t},\NbrL{h}{u},\\
        &\KeyL{\SrcL{h}{t}},\KeyL{\StepL{h}},
          \KeyL{\NbrL{h}{u}}\}.
\end{split}
\]
Legality removes the two inputs. The source-key and step-key
providers miss the candidate support, and the candidate-key provider
misses the expected support. The remaining roles \(\NFL{h}\)
and \(\SFL{v}\) have supports \(\{\nx{h}\}\) and
\(\{\sa{v}\}\), respectively, both contained in each input
support. These are exactly the admissible roles. When
\((t,u)\notin H\), the input pair is a demand, giving (M11).

\emph{(M3), (M4), (M8), (M10), and (M12).}
Apply the axis exchange \(\sigma\) defined in Section~\ref{sec:wmpair} to the calculations
for (M1), (M2), (M7), (M9), and (M11), respectively.
As established there, \(\sigma\) preserves legality, support
containment, and both intersection tests.
The resulting admissible sets are, respectively,
\(\{\RSep\}\), \(\{\RPos\}\), \(\{\StepL{v}\}\),
\(\{\SFL{v},\ExpL{v}{t}\}\), and
\(\{\NFL{v},\SFL{h}\}\).
The last support calculation holds for every \(t,u\); it applies
to the demand in (M12) when \((t,u)\notin V\).
No equality between \(H\) and \(V\) is needed.

We have thus computed exactly the admissible roles in every row.
Lemma~\ref{lem:response-test} places the role of every actual response
among them.
\end{proof}
\endgroup

\printbibliography
\end{document}